\newtheorem{example}[thm]{Example}
\newtheorem{definition}[thm]{Definition}
\newtheorem{theorem}[thm]{Theorem}
\newtheorem{lemma}[thm]{Lemma}
\newtheorem{corollary}[thm]{Corollary}
\newtheorem{proposition}[thm]{Proposition}
\begin{document}
% title of paper
\title[Polynomial Path Orders: A Maximal Model]{Polynomial Path Orders: A Maximal Model}

% author(s) of paper
\author[]{Martin Avanzini}
\author[]{Georg Moser}
\address{Institute of Computer Science\\ University of Innsbruck\\ Austria}
\email{\{martin.avanzini,georg.moser\}@uibk.ac.at}
\thanks{The first author partially supported by a grant of the University of Innsbruck.}
\thanks{The second author is partially supported by FWF (Austrian Science Fund) project I-608-N18}

\renewcommand{\labelitemi}{-}

\begin{abstract}
This paper is concerned with the automated complexity analysis of 
\emph{term rewrite systems} (\emph{TRSs} for short)
and the ramification of these in \emph{implicit computational complexity} 
theory (\emph{ICC} for short).
We introduce a novel path order with multiset status, the \emph{polynomial path order} $\gpop$. 
Essentially relying on the principle of 
\emph{predicative recursion} as proposed by Bellantoni and Cook, 
its distinct feature is the tight control of resources on compatible TRSs: 
The (innermost) runtime complexity of compatible TRSs is polynomially bounded. 
We have implemented the technique, as underpinned by our experimental evidence
our approach to the automated runtime complexity analysis is not only feasible, 
but compared to existing methods incredibly fast.

As an application in the context of ICC we provide an order-theoretic 
characterisation of the polytime computable functions. 
To be precise, the polytime computable functions are exactly the functions
computable by an orthogonal constructor TRS compatible with \POPSTAR.
\end{abstract}

\maketitle

\section{Introduction}\label{s:intro}
As a special form of equational logic, \emph{term rewriting} has found many applications in 
automated deduction and verification.
Term rewriting is a conceptually simple but powerful abstract model of
computation that underlies much of declarative programming, and
the automated time complexity analysis of \emph{term rewrite systems} (\emph{TRSs} for short) is of particular interest. 
A natural way to measure the time complexity of a TRS $\RS$ is to measure the
length $\ell$ of derivations
$$
  f(\seq{v}) \rew[\RS] s_1 \to s_2 \cdots \rew[\RS] s_\ell = w
$$
in terms of the sizes of the initial arguments $\seq{v}$.
% as captured by the runtime complexity function $\rc$ of $\RS$.
Maybe surprisingly, this \emph{unitary cost model} is polynomially invariant~\cite{LM09,AM10b}:
the result $w$ of $f(\seq{v})$ can be computed on a conventional model of computation 
in time polynomial in $\ell$. 
\emph{Runtime complexity analysis} is an active research area
in rewriting. 
See~\cite{M09} for a broad overview in this research field.
Since the \emph{feasible} functions are often associated with the
polytime computable functions, estimating polynomial bounds is of particular interest.
Virtually all methods developed in this field go back to termination techniques.
Termination of rewrite systems has been studied extensively, 
and majored to a state where it has become practical
to study the termination of \emph{real world programs} by translations to rewrite systems.
Source languages cover not only functional programs (see for instance~\cite{GRMSST11} that studies \emph{Haskell}),
but also logic  (c.f.~\cite{O01} or~\cite{SSG10} for \emph{Prolog} programs) and 
imperative programs (for \emph{Java${}^\text{\texttrademark}$ bytecode} in~\cite{OBEG10} and recently~\cite{BMOG12}).
This trend is also reflected in the annual termination competition (TERMCOMP)%
\footnote{\url{http://termcomp.uibk.ac.at/}.} that features dedicated categories for all mentioned programming languages.
Verifying that such translations are complexity preserving, 
rewriting can provide a \emph{unified backend} for complexity analysis of programs,
written in different languages and different paradigms.

It is clear that \emph{reduction orders}, for instance \emph{polynomial interpretations}
and \emph{recursive path orders} not only verify termination but also bind the length of reductions. 
For instance, the longest possible rewrite sequence in polynomial terminating TRSs is double-exponentially bounded in the size
of the initial term, cf.~\cite{HL89}. Similar, \emph{multiset path orders} (\emph{MPO} for short) induce primitive recursive 
complexity~\cite{H92}, the induced bound for the \emph{Knuth-Bendix} order is two-recursive~\cite{M06}
and for \emph{lexicographic path orders} it is even \emph{multiply recursive}~\cite{W95}.
In a modern termination prover, these orders play a fundamental role in their combination with transformation 
techniques like \emph{semantic labeling}~\cite{Z95} and the \emph{dependency pair method}~\cite{AG00}.
Based on a careful analysis of the induced derivational complexity~\cite{MS11}, 
Schnabl conjectures
\begin{quote}
  \emph{[t]he derivational complexity of any rewrite system that can be proven 
  terminating using a recent termination prover is bounded
  by a multiply recursive function.}
\end{quote}
With our tool \TCT, the \emph{Tyrolean complexity tool}%
\footnote{$\TCT$ is open source and available from \url{http://cl-informatik.uibk.ac.at/software/tct}.},
we have demonstrated that a termination prover, employing only suitable miniaturised termination techniques, 
can form a powerful complexity analyser. \TCT\ puts special focus on 
proving polynomial bounds on the runtime (respectively derivational) complexity of TRSs. 
However, it is worth emphasising that 
the most powerful techniques for polynomial runtime complexity analysis currently available,
basically employ semantic considerations on the rewrite systems, which are notoriously
inefficient.
We just mention very recently work on a miniaturisation of matrix interpretations due to Middeldorp~et~al.~\cite{MMNWZ11}.
Recent breakthroughs in complexity analysis have also been achieved with 
the development of variations of dependency pairs~\cite{HM08,NEG11,HM11} as well as
modularity results~\cite{HZMK10}.

\subsection{Motivation and Contributions}
To overcome the notorious inefficiency of semantic techniques in
runtime complexity analysis we aim at a syntactic method to analyse
polynomial runtime complexity of rewrite systems. 
A suitable starting point for such an analysis is given by the multiset path order $\MPO$. 
$\MPO$ not only induces primitive recursive bounds on the length of derivations, it
even characterises the primitive recursive functions~\cite{CW97}:
any function computed by an $\MPO$-terminating TRS is primitive recursive, vice versa, 
any primitive recursive function can be stated as an $\MPO$-terminating TRS.

It is well known that the principles of \emph{data tiering} 
introduced by Simmons~\cite{Simmons:1988} and Leivant~\cite{L91}
can be used to characterise small complexity classes like $\FP$ in 
a purely syntactic manner. 
In particular Bellantoni and Cook~\cite{BC92} embodies the principle 
of \emph{predicative recursion}, a form of tiering, on the definition of the primitive recursive 
functions, resulting in a recursion theoretic characterisation of $\FP$.
The here proposed \emph{polynomial path order} (\emph{\POPSTAR} for short), embodies
the principle of predicative recursion onto $\MPO$, with the 
distinctive feature that \POPSTAR\ induces \emph{polynomial} bounds 
on the length of derivations. To motivate this order, 
let us first recapitulate central ideas of~\cite{BC92}.
For each function $f$, the arguments to $f$ are separated into \emph{normal} 
and \emph{safe} ones.
To highlight this separation, we write $f(\svec{x}{y})$ where arguments to the left 
of the semicolon are normal, the remaining ones are safe.
Bellantoni and Cooks define a class $\B$,
consisting of a small set of initial functions 
and that is closed under \emph{safe composition} and \emph{safe recursion on notation} (\emph{safe recursion} for brevity).
The crucial ingredient in $\B$ is that 
a new function $f$ is defined via safe recursion by the equations
\begin{equation}\label{scheme:srn} \tag{\ensuremath{\mathsf{SRN}}}
  \begin{array}{r@{\;}c@{\;}l}
   f(\sn{0,\vec{x}}{\vec{y}}) & = & g(\sn{\vec{x}}{\vec{y}}) \\
   \qquad\quad f(\sn{2z + i,\vec{x}}{\vec{y}}) & = &
   h_i(\sn{z,\vec{x}}{\vec{y},f(\sn{z,\vec{x}}{\vec{y}})})\quad i \in \set{1,2}
  \end{array}
\end{equation}
for functions $g,h_1$ and $h_2$ already defined in $\B$.
% Note that this recursive definition corresponds to a recursion on binary 
% the binary representation, and
% consequently the number of recursive calls is linear in the size of the recursive input. 
Unlike primitive recursive functions, the
stepping functions $h_i$ cannot perform recursion on the \emph{impredicative} value $f(\sn{z,\vec{x}}{\vec{y}})$.
This is a consequence of data tiering. Recursion is performed on normal, 
and recursively computed result are substituted into safe argument position.
To maintain the separation, safe composition restricts the usual composition operator
so that safe arguments are not substituted into normal argument position.
Precisely, for functions $h$, $\vec{r}$ and $\vec{s}$ already defined in $\B$, 
a function $f$ is defined by safe composition using the equation
\begin{equation}\label{scheme:sc} \tag{\ensuremath{\mathsf{SC}}}
  f(\sn{\vec{x}}{\vec{y}}) = h(\sn{\vec{r}(\sn{\vec{x}}{})}{\vec{s}(\sn{\vec{x}}{\vec{y}})}) \tpkt
\end{equation}
Crucially, the safe arguments $\vec{y}$ are absent in normal arguments to $h$.
The main result from~\cite{BC92} states that $\B = \FP$. 

\medskip
Polynomial path orders enforce safe recursion on compatible TRSs. 
In order to employ the separation of normal and safe arguments, 
we fix for each defined symbol
a partitioning of argument positions into \emph{normal} and \emph{safe} positions. 
For constructors we fix that all argument positions are safe. 
Moreover \POPSTARS\ restricts recursion to normal argument.
Dual only safe argument positions allow the substitution of recursive calls.
Via the order constraints we can also guarantee that functions 
are composed in a safe manner.
This syntactic account of predicative recursion delineates a class of rewrite systems: 
a rewrite system $\RS$ is called \emph{predicative recursive} if $\RS$ is
compatible with $\POPSTAR$. 
For motivation consider the TRS $\RSsat$ given in Example~\ref{ex:rssat} the that encodes
the function problem $\FSAT$ associated to the well-known 
satisfiability problem $\SAT$.
Notably $\FSAT$ is complete for the class of \emph{function problems over $\NP$} ($\FNP$ for short),
compare~\cite{Papa}.
\begin{example}
\label{ex:rssat}
The TRS $\RSsat$ is defined as follows. 
A conjunctive normal form is encoded as a list of non-empty clauses, 
clauses being lists of literals, in the obvious way.
Lists are constructed as usual from the constant $\nil$ and the binary constructor ($\cons$).
Literals are encoded as binary strings (build from the $\varepsilon$, $\mZ$ and $\mO$) with 
the most significant bit reserved for its plurality. 
The TRS $\RSsat$ contains a conditional
\begin{align*}
  \mif(\sn{}{\mtrue,t,e}) & \to t & \mif(\sn{}{\mfalse,t,e}) & \to e
\end{align*}
and defines negation
\begin{align*}
  \mneg(\sn{}{\mO(x)}) & \to \mZ(x) & \mneg(\sn{}{\mZ(x)}) & \to \mO(x)
\end{align*}
as well as equality:
\begin{align*}
  \meq(\sn{\mZ(x)}{\mZ(y)}) & \to \meq(\sn{x}{y})
  & \meq(\sn{\mZ(x)}{\mO(y)}) & \to \mfalse & 
  \meq(\sn{\varepsilon}{\varepsilon}) & \to \mtrue\\
  \meq(\sn{\mO(x)}{\mO(y)}) & \to \meq(\sn{x}{y})
  & \meq(\sn{\mO(x)}{\mZ(y)}) & \to \mfalse \tpkt
\end{align*}

A list of literals is \emph{consistent} if an atom does not 
occur positively and negatively.
\begin{align*}
  \verify(\sn{\nil}{}) & \to \mtrue & \verify(\sn{l \cons ls}{}) & \to \mif(\sn{}{\member(\sn{\mneg(\sn{}{l}),ls}{})}, \mfalse, \verify(\sn{ls}{})) \\
  \member(\sn{x,\nil}{}) & \to \mfalse & \member(\sn{x,y \cons ys}{}) & \to \mif(\sn{}{\meq(\sn{x}{y}), \mtrue, \member(\sn{x,ys}{})})
\end{align*}
The computed assignment will be a consistent list of literals.
Note that a satisfying assignment necessarily contains a literal for every clause $c$.
The following rules guess such an assignment and verify whether it is consistent.
\begin{align*}
  \issat(\sn{c}{}) & \to \issat'(\sn{\guess(\sn{c}{})}{}) &
  \issat'(\sn{as}{}) & \to \mif(\sn{}{\verify(\sn{as}{}),as,\unsat}) \\
  \guess(\sn{\nil}{}) & \to \nil & 
  \guess(\sn{c \cons cs}{}) & \to \choice(\sn{c}{}) \cons \guess(\sn{cs}{}) \tpkt
\end{align*}
Here $\choice$ given by the rules 
\begin{align*}
  \choice(\sn{a \cons \nil}{}) & \to a & 
  \choice(\sn{a \cons b \cons bs}{}) & \to a & 
  \choice(\sn{a \cons b \cons bs}{}) & \to \choice(\sn{b \cons bs}{})
\end{align*}
selects nondeterministically an literal from a clause.
This concludes the definition of $\RSsat$. 
% \begin{alignat*}{2}
%   \mif(\sn{}{\mtrue,t,e}) & \to t &
%   \meq(\sn{\varepsilon}{\varepsilon}) & \to \mtrue \\
%   \mif(\sn{}{\mfalse,t,e}) & \to e & 
%   \meq(\sn{\mO(x)}{\mO(y)}) & \to \meq(\sn{x}{y}) \\
%   \choice(\sn{x \cons xs}{}) & \to x &                                    
%   \meq(\sn{\mO(x)}{\mZ(y)}) & \to \mfalse \\
%   \choice(\sn{x \cons xs}{}) & \to \choice(\sn{xs}{}) &                          
%   \meq(\sn{\mZ(x)}{\mO(y)}) & \to \mfalse \\
%   \guess(\sn{\nil}{}) & \to \nil &                          
%   \meq(\sn{\mZ(x)}{\mZ(y)}) & \to \meq(\sn{x}{y}) \\
%   \guess(\sn{c \cons cs}{}) & \to \choice(\sn{c}{}) \cons \guess(\sn{cs}{}) &
%   \verify(\sn{\nil}{}) & \to \mtrue \\
%   \member(\sn{x,\nil}{}) & \to \mfalse &
%   \qquad\verify(\sn{l \cons ls}{}) & \to \mif(\sn{}{\member(\sn{\mneg(\sn{}{l}),ls}{})} \\
%   \member(\sn{x,y \cons ys}{}) & \to \mif(\sn{}{x = y, \mtrue, \member(\sn{x,ys}{})}) &
%   &\qquad , \mfalse, \verify(\sn{ls}{})) \\
%   \mneg(\sn{}{\mO(x)}) & \to \mZ(x) & 
%   \issat'(\sn{a}{}) & \to \mif(\sn{}{\verify(\sn{a}{}),a,\unsat}) \\
%   \mneg(\sn{}{\mZ(x)}) & \to \mO(x) &
%   \issat(\sn{c}{}) & \to \issat'(\sn{\guess(\sn{c}{})}{})
% \end{alignat*}

It can be verified that $\RSsat$ is compatible with the multiset path order $\gmpo$ 
with underlying precedence $\qp$ satisfying
\begin{alignat*}{3}
\guess & ~~\sp~~ \choice 
& \meq & ~~\sp~~ \mtrue, \mfalse
& \issat & ~~\sp~~ \issat', \guess \\
\member & ~~\sp~~ \mtrue, \mfalse
& \qquad \verify & ~~\sp~~ \mif, \member, \mneg, \mtrue, \mfalse
& \qquad \issat' & ~~\sp~~ \mif, \verify, \unsat \\
{\mneg} & ~~\sp~~ \mZ, \mO
\end{alignat*}
Using the separation of argument positions as indicated in the rules,
where in the spirit of $\B$ constructors admit only safe arguments, 
we can even prove compatibility with $\gpop$ based on the same precedence, i.e., 
$\RSsat$ is predicative recursive.
\end{example}
Note that $\RSsat$ does not rigidly follow 
safe recursion~\eqref{scheme:srn} and safe composition \eqref{scheme:sc}.
Notably values are formed from an arbitrary algebra and are not restricted to words.
Also $\gpop$ allows in principle arbitrary deep right-hand sides.
Still the main principle, namely prohibition of recursion on impredicative values, 
remains reflected.
In total, we establish following results.
\begin{description}[leftmargin=0.3cm]
\item[{Automated Runtime Complexity Analysis of TRSs}]
  We establish that for predicative recursive TRSs $\RS$, 
  the\emph{ (innermost) runtime complexity function} is polynomially bounded.
  To the best of our knowledge, the polynomial path order is the first purely syntactic 
  approach that establishes feasible bound on the runtime complexity of TRSs.
  We have implemented the here proposed techniques in \TCT.\@
  The experimental evidence obtained indicates the viability of the method.
  
  For the predicative recursive TRS $\RSsat$ from Example~\ref{ex:rssat}
  this result implies that the number of rewrite steps starting from $\issat(\sn{c}{})$ is polynomially bounded in
  the size of the CNF $c$. This can even be automatically verified\footnote{
    To our best knowledge $\TCT$ is currently the only complexity
    tool that can provide a complexity certificate for the TRS $\RSsat$,
    compare~\url{http://termcomp.uibk.ac.at}.}.
  Due to the polynomial invariance theorem~\cite{AM10b} we can thus 
  that $\FSAT$ belongs to $\FNP$.
  
\item[{Resource free characterisation of $\FP$}]
  The class of predicative recursive rewrite systems
  entail new \emph{order-theoretic} characterisation of $\FP$, the \emph{polytime computable functions}.
  This bridges the gap to \emph{implicit computational complexity} (\emph{ICC} for short) theory. 

  \POPSTAR\ is \emph{sound} for $\FP$, i.e., (confluent and) predicative recursive TRSs
  compute only polytime computable functions.
  Moreover we can also prove that predicative recursive TRSs are \emph{complete} for $\FP$, 
  in the sense that every polytime computable function $f$
  is defined by a (orthogonal and) predicative recursive TRS $\RS_f$.

\item[{Parameter Substitution}]
  We extend upon \POPSTAR\ by proposing 
  a generalisation \POPSTARP, admitting the same properties as outlined above, 
  but that allows to handle more general recursion schemes that make
  use of parameter substitution.
  As a corollary to this and the fact that the runtime complexity
  of a TRS forms an invariant cost model we conclude a non-trivial
  closure property of Bellantoni and Cooks definition of the feasible functions.
\end{description}

The present article collects our ongoing work on polynomial path orders. 
The order $\POPSTAR$ has been introduced first in~\cite{AM08}, extended to
quasi-precedences in~\cite{AMS08} and the extension $\POPSTARP$
appeared first in the Workshop on Termination of 2009~\cite{AM09b}.
Apart from the usual corrections of technicalities, we make here the following new contributions:
\begin{itemize}
\item
  The presented definition of $\POPSTAR$ is more liberal and captures predicative recursion more precisely, 
  compare~\cite[Definition~4]{AM08} and Definition~\ref{d:gpop} from Section~\ref{s:popstar}.
\item
  To show that $\POPSTAR$ is sound for $\FP$, we relied in~\cite{AM08}
  on a certain typing of constructors that guaranteed that sizes of values are polynomial in 
  their depth. In particular, the typing prohibited tree structures a~priori. 
  Our new soundness result (c.f Theorem~\ref{t:icc:soundness} from Section~\ref{s:icc}) is more general and permits arbitrary values.

\item The propositional encoding used in our automation of polynomial path orders (c.f. Section~\ref{s:impl})
  has been considerably overhauled.
\end{itemize}

\subsection{Related Work}
There are several accounts of predicative analysis of recursion in the (ICC) literature. 
We mention only those related works which are directly comparable to our work. See~\cite{BMR09} for an 
overview on ICC.\@
The mental predecessor of $\POPSTAR$ is the \emph{path order for $\FP$} 
as put forward in~\cite{AM05}. 
Our main motivation lies in the observation that this order is directly only applicable to a handful of simple TRSs.
This order only gains power if addition transformations are performed. 
But unfortunately powerful transformations are difficult to find automatically.

Notable the clearest connection of our work is to Marion's \emph{light multiset path order} (\emph{LMPO} for short)~\cite{M03}. 
This path order forms a strict extension of the here 
proposed order $\POPSTAR$. Similar to $\POPSTAR$ it characterises $\FP$. 
As exemplified below however, 
compatible TRSs do not admit polynomially bounded runtime complexity in general.
This renders $\LMPO$ non-usable in our complexity analyser $\TCT$.
The definition of $\POPSTAR$ has been calibrated with some effort to prevent such behaviour.
\begin{example}
The TRS $\RSbin$ is given by the following rules:
\label{ex:RS2}
\begin{align*}
    \bin(\sn{x,\Null}{}) & \to \ms(\Null) & 
    \bin(\sn{\Null,\ms(y)}{}) & \to \Null &
    \bin(\sn{\ms(x),\ms(y)}{}) & \to \mP(\sn{}{\bin(\sn{x,\ms(y)}{}),\bin(\sn{x,y}{})})
  \end{align*}
For a precedence $\qp$ that fulfils $\bin \sp \ms$ and $\bin \sp \mP$ we obtain 
that $\RSbin$ is compatible with $\LMPO$. 
However it is straightforward to verify that the family of terms
$\bin(\ms^n(\Null),\ms^m(\Null))$ admits (innermost) derivations whose length grows exponentially in $n$.
Still the underlying function can be proven polynomial, essentially relying on memoisation techniques, c.f.~\cite{M03}.
\end{example}

The result of our main theorem can also be obtained
if one considers polynomial interpretations, where the interpretations of
constructor symbols is restricted. Such restricted polynomial interpretations
are called \emph{additive} in~\cite{BCMT01}. Note that additive polynomial interpretations
also characterise the functions computable in polytime, cf.~\cite{BCMT01}.
Although incomparable to our technique, 
unarguably such semantic techniques admit a better intensionality, 
but are difficult to implement efficiently in an automated setting. 
In our complexity tool \TCT, we see \POPSTAR\ as a fruitful and fast extension that handles
systems in a fraction of a second.

We also want to mention recent approaches for the automated analysis of 
resource usage in programs.
Notably, Hoffmann~et~al.~\cite{HAH11} provide an automatic multivariate amortised 
cost analysis exploiting typing, which extends earlier results on amortised 
cost analysis.
To indicate the applicability of our method we have employed a straightforward (and complexity preserving) 
transformation of the RAML programs considered
in~\cite{HAH11} into TRSs. Equipped with \POPSTAR\ our complexity
analyser \TCT\ can handle all examples from~\cite{HAH11}.
Finally Albert et al.~\cite{AAGGPRRZ:2009} present an automated complexity tool
for Java${}^\text{\texttrademark}$ Bytecode programs and
Gulwani~et~al.~\cite{GMC09} as well as Zuleger~et~al.~\cite{ZulegerGSV11} 
provide an automated complexity tool for C programs.

\subsection{Outline}
The remainder of this paper is organised as follows.
In the next section we recall basic notions and starting points of
this paper.
In Section~\ref{s:popstar} we introduce polynomial path orders along with our main result.
In the subsequent Sections~\ref{s:pint}--\ref{s:embed} we show that the (innermost) 
runtime-complexity of predicative recursive TRSs is polynomially bounded:
in Section~\ref{s:pint} we set the stage by introducing a notion of \emph{predicative interpretation};
in Section~\ref{s:pop} we present an extended version of the aforementioned path order on sequences~\cite{AM05}, 
and we show that our extension is still sound (c.f. Corollary~\ref{c:pop});
section~\ref{s:embed} finally shows that predicative interpretations embed derivations into the order on sequences, 
establishing our central argument.

In Section~\ref{s:icc} we then present our ramification of polynomial path orders in ICC.\@
Parameter substitution is incorporated in Section~\ref{s:popstarps}.
Our implementation is detailed in Section~\ref{s:impl} and ample experimental evidence 
is provided in Section~\ref{s:exps}. 
Finally, we conclude and present future work in Section~\ref{s:conclusion}.

%%% Local Variables: 
%%% mode: latex
%%% TeX-master: "paper"
%%% End: 

\section{Preliminaries}\label{s:basics}
We denote by $\N$ the set of natural numbers $\{0,1,2,\dots\}$.
Let $R$ be a binary relation.
The transitive closure of $R$ is denoted by $R^+$ and
its transitive and reflexive closure by $R^{\ast}$. 
For $n\in \N$ we denote by $R^n$ the \emph{$n$-fold composition of $R$}.
The binary relation $R$ is \emph{well-founded} if 
there exists no infinite chain $a_0, a_1, \dots$ with $a_i \mathrel{R} a_{i+1}$
for all $i \in \N$. Moreover, we say that $R$ is well-founded on a set $A$ if 
there exists no such infinite chain with $a_0 \in A$.
The relation $R$ is \emph{finitely branching} if for all elements $a$, the set $\{b \mid a \mathrel{R} b\}$ is finite.

A \emph{proper order} is an irreflexive and transitive binary relation.
A \emph{preorder} is a reflexive and transitive binary relation. 
An \emph{equivalence relation} is reflexive, symmetric and transitive.
% Every preorder $\succcurlyeq$ induces a proper order $\succ$, 
% namely $a \succ b$ if and only if $a \succcurlyeq b$ and not $b \succcurlyeq a$,
% and an equivalence relation $\eqi$, namely $a \eqi b$ if and only if 
% $a \succcurlyeq b$ and $b \succcurlyeq a$.
For a preorder $\succcurlyeq$, we denote the induced equivalence relation by $\eqi$ and induced proper order  
by $\succ$. 

A multiset is a collection in which elements are allowed to
occur more than once. We denote by $\msetover(A)$ the set of multisets over $A$
and write $\mset{a_1,\dots,a_n}$ to denote multisets with elements $a_1, \dots,a_n$.
We use $m_1 \uplus m_2$ for the summation and $m_1 \backslash m_2$ for difference on multisets $m_1$ and $m_2$.
The \emph{multiset extension} $\mextension{R}$ \emph{of a relation $R$ on $A$} is the 
relation on $\msetover(A)$ such that $M_1 \mextension{R} M_2$ if there exists 
multisets $X,Y \in \msetover(A)$ satisfying 
\begin{enumerate}    
    \item $M_2 = (M_1 \backslash X) \uplus Y$, 
    \item $\varnothing \not= X \subseteq M_1$ and 
    \item for all $y \in Y$ there exists an element $x \in X$ such that $x \mathrel{R} y$.
\end{enumerate}
In order to cleanly extend this definition to preorders and equivalences, we follow~\cite{Ferreira95}.
Let $\eqi$ denote an equivalence relation over the set $A$ and let ${\succcurlyeq} = {\succ} \cup {\eqi}$ be a binary
relation over $A$ so that $\succ$ and $\eqi$ are \emph{compatible} in the following sense: 
${{\eqi} \cdot {\succ} \cdot {\eqi}} \subseteq {{\succ}}$. 
% Let $\quotient{A} = \{ \eclass{a} \mid a \in \A \}$ 
% be \emph{the quotient of $A$ modulo $\eqi$} where 
Let $\eclass{a}$ denotes the \emph{equivalence class of $a \in A$} with respect to $\eqi$.
By the compatibility requirement, 
the extension $\sqsupset$ of $\succ$ to equivalence classes 
such that $\eclass{a}\sqsupset\eclass{b}$ if and only if $a \succ b$, 
is well defined.
We define the \emph{strict multiset extension $\mextension{\succ}$ of $\succcurlyeq$} as 
${M_1} \mextension{\succ} {M_2}$ if and only if ${\eclass{M_1}} \mextension{\sqsupset} {\eclass{M_2}}$.
Further, the \emph{weak multiset extension $\mextension{\succcurlyeq}$ of $\succcurlyeq$} is given 
by ${M_1} \mextension{\succcurlyeq} {M_2}$ if and only if ${\eclass{M_1}} \mextension{\sqsupset} {\eclass{M_2}}$
or ${\eclass{M_1}} = {\eclass{M_2}}$ holds. 
Note that if ${\succcurlyeq}$ is a preorder (on $A$) then
$\mextension{\succ}$ is a proper order
and $\mextension{\succcurlyeq}$ a preorder on $\msetover{A}$, cf.~\cite{Ferreira95}. 
Also $\mextension{\succ}$ is well-founded if $\succ$ is well-founded. 

\subsection{Complexity Theory}
We assume modest familiarity in \emph{complexity theory}, notations 
are taken from~\cite{Papa}. 
The \emph{functional problem} $F_R$ associated with an binary relation $R$
is: given $x$ find some $y$ such that $(x,y) \in R$ holds if $y$ exists, 
otherwise return $\m{no}$. 
A binary relation $R$ on words is called \emph{polynomial balanced}
if for all $(x,y) \in R$, the size of $y$ is polynomially bounded in $x$.
The relation $R$ is \emph{polytime decidable} if $(x,y) \in R$ 
is decided by a deterministic Turing machine (TM for short) $M$ operating in polynomial time.
The class $\NP$ is the class of languages $L$ admitting polynomially balanced, polytime decidable 
relations $R_L$~\cite[Chapter~9]{Papa}: $L =  \{x \mid (x,y) \in R_L \text{ for some $y$}\}$.
The class $\FNP$ is the class of \emph{function problems associated with $\NP$} in the above way. 
The class of \emph{polynomial time computable functions} $\FP$ (\emph{polytime computable} for short) 
is the subclass resulting if we only consider function problems in $\FNP$ that 
can be solved in polynomial time~\cite[Chapter 10]{Papa}.

We say that a function problem $F$ reduces to a function problem $G$ if there 
exist functions $s$ and $r$, both computable in logarithmic space, such that 
for all $x,y$ with $F$ computing $y$ on input $x$, 
$G$ computes on input $s(x)$ the output $z$ with $r(z) = y$.
Note that both $\FNP$ and $\FP$ are closed under reductions.
We also note that nondeterministic Turing machines running in polynomial time compute function 
problems from $\FNP$.
\begin{proposition}\label{p:fnp}
  Let $N$ be a nondeterministic Turing machine that computes the function problem $F$ in polynomial time.
  Then $F \in \FNP$.
\end{proposition}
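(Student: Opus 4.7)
The plan is to exhibit a polynomially balanced, polytime decidable relation $R$ whose associated function problem $F_R$ lies in $\FNP$, then reduce $F$ to $F_R$ and invoke closure of $\FNP$ under reductions. Let $p$ be a polynomial bounding the running time of $N$, and fix a reasonable encoding of configurations of $N$.

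First, I would define the witness relation
\[
  R = \{(x,\langle y,c\rangle) \mid c \text{ encodes a halting accepting computation of } N \text{ on } x \text{ with output } y\} \tpkt
\]
Here $c$ is a sequence of configurations $C_0,\dots,C_k$ with $C_0$ the initial configuration on $x$, each $C_{i+1}$ obtained from $C_i$ by a transition of $N$, and $C_k$ an accepting configuration whose output tape content is $y$.

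Next I would verify the two properties required to conclude $F_R \in \FNP$.
\emph{Polynomial balance:} since $N$ runs in time at most $p(|x|)$, every computation has length at most $p(|x|)$ and every configuration has size $O(p(|x|))$, so $|\langle y,c\rangle|$ is bounded by a polynomial in $|x|$.
\emph{Polytime decidability:} given $(x,\langle y,c\rangle)$, a deterministic machine can check in polynomial time that $c$ begins with the initial configuration for $x$, that each successive pair $(C_i,C_{i+1})$ respects the transition relation of $N$, that $C_k$ is accepting, and that the output component of $C_k$ equals $y$; each of these is a local syntactic check repeated polynomially many times.

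Finally, I would observe that $F$ reduces to $F_R$ via $s(x)=x$ and $r(\langle y,c\rangle)=y$, both trivially logspace computable: whenever $N$ computes $y$ on input $x$, there exists an accepting computation $c$ witnessing this, and $F_R$ returns some such $\langle y,c\rangle$ from which the projection recovers $y$ (and if no $y$ exists, $F_R$ returns $\m{no}$, which we pass through). By the closure of $\FNP$ under reductions noted immediately before the proposition, $F \in \FNP$.

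The only non-routine point is recognising that the bare relation $\{(x,y)\mid N$ accepts $x$ producing $y\}$ need not be polytime decidable, so the computation history must be packaged into the witness; once that design choice is made the verification is entirely mechanical, and I would expect no real technical obstacle.
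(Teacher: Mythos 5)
Your proposal is correct and follows essentially the same route as the paper: both define a witness relation whose witnesses are accepting computation histories of $N$, verify polynomial balance and polytime decidability, and then reduce $F$ to the associated function problem $F_R$ using closure of $\FNP$ under reductions. The only (cosmetic) differences are that you package the output $y$ explicitly into the witness rather than extracting it from the final configuration, and that you assign the roles of $s$ and $r$ consistently with the paper's definition of reduction (the paper's own proof appears to swap them).
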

\begin{proof}
  Define the following relation $R$: $(x,y) \in R$ if and only if $y$ is the encoding of an accepting 
  computation of $N$ on input $x$.
  Since $N$ operates in polynomial time, $R$ is polynomially balanced, as 
  it can be checked in linear time that $y$ encodes an accepting run of $N$ on input $x$, 
  $R$ is polytime decidable.
  Hence the functional problem $F_R$ that computes an accepting runs $y$ of $N$ on input $x$ is in $\FNP$.
  Finally notice that $F$ reduces to $F_R$.
  To see this, employ following reduction:
  the function $r$ is simply the identity function; the logspace computable function $s$ extracts the result of $N$ on input $x$ 
  from the accepting run $y$ computed by $F_R$ on input $x$.
  We conclude the lemma since $\FNP$ is closed under reductions.
\end{proof}

\subsection{Term Rewriting}
We assume at least nodding acquaintance with the basics of term rewriting~\cite{BN98}.
We fix the bare essential of
notions and notation that are used in the paper.

Throughout the paper, we fix a countably infinite set of \emph{variables} $\VS$
and a finite \emph{signature} $\FS$ of \emph{function symbols}.
The signature $\FS$ is partitioned into \emph{defined symbols} $\DS$ and 
\emph{constructors} $\CS$.
The set of \emph{values}, \emph{basic terms} and
\emph{terms} is defined according to the grammar
\begin{alignat*}{4}
  & \text{\textsf{(Values)}} & & \quad & \Val & \ni v &~&\defsym~x \mid c(\seq{v}) \\
  & \text{\textsf{(Basic Terms)}} & & & \BASICS & \ni s & &  \defsym~ x \mid f(\seq{v}) \\
  & \text{\textsf{(Terms)}} & & & \TERMS & \ni t & & \defsym~ x \mid c(\seq{t}) \mid f(\seq{t})
\end{alignat*}
where $x \in \VS$, $c \in \CS$, and $f \in \DS$.

The \emph{arity} of a function symbol $f \in \FS$ is denoted by $\ar(f)$.
We write $s \superterm t$ if $t$ is a \emph{subterm} of $t$, the 
strict part of $\superterm$ is denoted by $\supertermstrict$.
The \emph{size} of a term $t$ is denoted by $\size{t}$ and refers
to the number of variables and function symbols contained in $t$.
We denote by $\depth(t)$ the \emph{depth} of $t$ which is
defined as $\depth(t) = 1$ if $t \in \VS$ and $\depth(f(\seq{t})) = 1 + \max\{\depth(t_i) \mid i = 1,\dots n\}$.
Here we employ the convention that the maximum of an empty set is equal to $0$.

Let $\qp$ be a preorder on the signature $\FS$, called \emph{quasi-precedence}
or simply \emph{precedence}. We always write $\sp$ for the 
induced proper order and 
$\ep$ for the induced equivalence on $\FS$.
We lift the equivalence ${\ep}$ to terms 
modulo argument permutation:
$s \eqi t$ if either $s = t$ or
$s = f(\seq{s})$ and $t = g(\seq{t})$ where $f \ep g$
and for some permutation $\pi$,
$s_i \eqi t_{\pi(i)}$ for all $i \in \{1,\dots,n\}$.
Further we write $s \esuperterm t$ if $t$ is a subterm
of $s$ modulo $\eqi$, i.e.,  $s~{\superterm} \cdot {\eqi}~{t}$.
We denote by $\sigbelow{f}{\FS} \defsym \{g \mid f \sp g\}$ the set of function symbols 
below $f$ in the precedence $\qp$.
This notion is extended to sets $F \subseteq \FS$ by 
$\sigbelow{F}{\FS} \defsym \bigcup_{f \in F} \sigbelow{f}{\FS}$.
The \emph{rank} of a function symbol is inductively defined by
$\rk(f) = \max\{1 + \rk(g) \mid f \sp g\}$.

A \emph{rewrite rule} is a pair $(l,r)$ of terms, in notation $l \to r$, 
such that $l$ is not a variable and all variables in $r$ occur also in $l$.
Here $l$ is called the \emph{left-hand}, and $r$ the \emph{right-hand side} of $l \to r$.
A \emph{term rewrite system} (\emph{TRS} for short) $\RS$ over
$\TERMS$ is a set of \emph{rewrite rules}.
In the following, $\RS$ always denotes a TRS.\@
If not mentioned otherwise, we assume $\RS$ is \emph{finite}.
A relation on $\TERMS$ is a \emph{rewrite relation} if it is
closed under contexts and closed under substitutions. 
The smallest rewrite relation that contains $\RS$ is denoted by
$\rew$. 

A term $s \in \TERMS$ is called a \emph{normal form} if there is no
$t \in \TERMS$ such that $s \rew t$. 
With $\NF(\RS)$ we denote the set of all normal forms of a TRS $\RS$.
Whenever $t$ is a normal form of $\RS$ we write $s \rsn t$ for $s \rss t$.
The \emph{innermost rewrite relation}, denoted as $\irew$, is the restriction 
of $\rew$ where arguments are normal forms.
The TRS $\RS$ is \emph{terminating} if no infinite rewrite sequence exists.
The TRS $\RS$ has \emph{unique normal forms} if for all 
$s, t_1, t_2 \in \TERMS$ with $s \rsn t_1$ and 
$s \rsn t_2$ we have $t_1 = t_2$.
The TRS $\RS$ is called \emph{confluent} if for all $s, t_1, t_2 \in \TERMS$
with $s \rss t_1$ and $s \rss t_2$ there exists a term $u$ such that
$t_1 \rss u$ and $t_2 \rss u$. An \emph{orthogonal} TRS is 
a left-linear and non-overlapping TRS~\cite{BN98}. Note that
every orthogonal TRS is confluent. 
The TRS $\RS$ is a \emph{constructor} TRS if all left-hand sides are basic terms.
A defined function symbol is \emph{completely defined} (with respect to $\RS$) 
if it does not occur in any term in normal form, i.e., 
functions are reducible on all terms. 
The TRS $\RS$ is \emph{completely defined} if each defined symbol 
is completely defined.

\subsection{Rewriting as Computational Model}
We fix \emph{call-by-value} semantics and only consider 
\emph{constructor} TRSs $\RS$. Input and output are taken from the set of 
values $\Val$, and defined symbols $f \in \DS$ denote computed functions. 
More precise, a (finite) \emph{computation} of $f\in \DS$ on input $\seq{v} \in \Val$
is given by \emph{innermost} reductions
$$
  f(\seq{v}) = t_0 \irew t_1 \irew \cdots \irew t_\ell = w \tpkt
$$
If the above computation ends in a value, i.e., $w \in \Val$, 
we also say that $f$ \emph{computes} on input $\seq{v}$ in $\ell$ steps the value $w$.
To also account for nondeterministic computation, 
we capture semantics of $\RS$ by assigning to each $n$-ary defined symbol 
$f \in \DS$ an $n+1$-ary relation $\sem{f}$ that maps 
input arguments $\seq{v}$ computed values $w$.
A \emph{finite} set $\NA$ of \emph{non-accepting patterns} is used to distinguish meaningful outputs
$w$ from outputs that should not be considered part of the computation.
A value $w$ \emph{is accepting} with respect to $\NA$ 
if there exists no $p \in \NA$ and substitution $\sigma$ such that $p\sigma = w$ holds.
A typical example of a meaningful value that should not be accepted is the constant $\unsat$ 
appearing in the TRS $\RSsat$ from Example~\ref{ex:rssat}.
Below functional problem are extended to $n+1$-ary relations in the obvious way.
\begin{definition}
\label{d:computation}
Let $\NA$ be a set of non-accepting patterns.
For each $n$-ary symbol, $f \in \DS$ the TRS $\RS$
\emph{the relation $\sem{f} \subseteq {\Val^{n+1}}$} defined by $f$
is given by
\begin{equation*}
{(\seq{v},w) \in \sem{f}} \quad\defiff\quad {\m{f}(\seq{v}) \irsn[\RS] w} \text{ and $w$ is accepting}\tpkt
\end{equation*}
We say that $\RS$ \emph{computes} the functional problems associated with $\sem{f}$.
\end{definition}

Note that if $\RS$ is confluent, then $\sem{f}$ is in fact a (partial) function. 
Following~\cite{HM08,AM10b} we adopt an \emph{unitary cost model} for rewriting, 
where each reduction step accounts for one time unit. 
Reductions are of course measured in the size of the input. 
\begin{definition}
The\emph{ (innermost) runtime complexity function} $\ofdom{\rc[\RS]}{\N \to \N}$
relates sizes of basic terms $f(\seq{v}) \in \BASICS$ to the maximal 
length of computation. Formally
$$
 \rc[\RS](n) \defsym 
 \max\{\ell \mid \exists s \in \BASICS, \size{s} \leqslant n \text{ and } f(\seq{v})  = t_0 \irew t_1 \irew \dots \irew t_\ell\} \tpkt
$$
\end{definition}
The restriction $s \in \BASICS$ accounts for the fact that computations start only from basic terms. 
We sometimes use $\dheight(s) \defsym \max\{\ell \mid \exists t.~s \irsl{\ell} t\}$
to refer to the \emph{derivation height} of a single term $s$.
Note that the runtime complexity function is well-defined if $\RS$ is \emph{terminating}, 
i.e., $\irew$ is well-founded.
If $\rc[\RS]$ is asymptotically bounded from above by a linear, quadratic,\dots,  polynomial function, 
we simply say that the runtime of $\RS$ is linear, quadratic,\dots, or respectively polynomial.
By folklore it is known that rewriting can be implemented with only polynomial overhead
if terms grow only polynomial during reductions. 

In~\cite{AM10b} we have shown that the unitary cost model is reasonable 
for full rewriting (the deterministic case was proven independently in~\cite{LM09,AM10} using essentially the same approach).
It is not difficult to see that the central Lemma~\cite[Lemma 5.9]{AM10b} that estimates the implementation 
cost of a single rewrite step can be specialised to innermost rewriting. 
We obtain following proposition by specialising~\cite[Theorem~6.2]{AM10b} to innermost rewriting.
\begin{proposition}\label{p:invariance:1}
  Let $\RS$ be a TRS whose is at least linear. 
  There exists a polynomial $p_\RS$ such that for any $f(\seq{v}) \in \BASICS$ of 
  size up to $n$, 
  \begin{enumerate}
  \item any normal form of $f(\seq{v})$ can be computed on a Turing machine in nondeterministic time 
    $p_\RS(\rc[\RS](n))$; and
  \item some normal form of $f(\seq{v})$ is computable on a Turing machine in deterministic time
    $p_\RS(\rc[\RS](n))$.
  \end{enumerate}
\end{proposition}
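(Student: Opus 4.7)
The plan is to import Theorem~6.2 and Lemma~5.9 of \cite{AM10b} wholesale, and then specialise both the simulation and the size bookkeeping to innermost rewriting. Since the irreducibility constraint on arguments makes $\irew$ a restriction of $\rew$, any per-step simulation cost established for full rewriting in \cite{AM10b} transfers automatically; the only work is to reprove the polynomial size bound on intermediate terms under the innermost restriction and then multiply.

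First I would recall the encoding of terms by directed acyclic graphs (with sharing) used in \cite{AM10b}, together with the lemma that a single rewrite step on such a representation can be performed by a Turing machine in time polynomial in the size of the (shared) representation. Because this lemma handles a more general situation (arbitrary positions), nothing in its proof breaks when we additionally require that the matched subterm have only normal forms as proper subterms; hence it specialises to $\irew$ without modification.

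Next I would bound the size of every term occurring in an innermost derivation $f(\seq{v}) = t_0 \irew t_1 \irew \cdots \irew t_\ell$ starting from a basic term of size at most $n$. With $c_\RS := \max\{\size{r} \mid l \to r \in \RS\}$ one gets $\size{t_i} \leqslant n + c_\RS \cdot i$ by an easy induction on $i$. Since $\ell \leqslant \rc[\RS](n)$ and $\rc[\RS]$ is assumed at least linear, this yields $\size{t_i} \leqslant q_\RS(\rc[\RS](n))$ for a suitable polynomial $q_\RS$ that does not depend on the starting term. Combined with the per-step cost bound above, the whole derivation is simulated in time at most $\rc[\RS](n) \cdot p(q_\RS(\rc[\RS](n)))$ for some polynomial $p$, which is itself polynomial in $\rc[\RS](n)$; call it $p_\RS$.

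For statement~(1), the nondeterministic machine simply guesses at each step which innermost redex to contract, applies the per-step simulation, and halts upon reaching a normal form; the guess is successful along the computation leading to the chosen normal form, so every normal form is computable in time $p_\RS(\rc[\RS](n))$. For statement~(2), we fix any deterministic innermost strategy (leftmost-innermost, say) and follow it, which reaches \emph{some} normal form in at most $\rc[\RS](n)$ steps. The main technical obstacle will be checking that the DAG simulation of \cite[Lemma~5.9]{AM10b} really does survive the innermost specialisation without hidden side conditions on the reduction strategy; but since that lemma is stated and proved for arbitrary redex choices, its use here amounts to a verification rather than a new argument.
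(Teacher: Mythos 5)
Your overall strategy coincides with the paper's: the paper gives no self-contained proof of this proposition either, but obtains it by citing \cite[Theorem~6.2]{AM10b} and observing that the central \cite[Lemma~5.9]{AM10b}, which bounds the cost of implementing a single rewrite step on a shared (term-graph) representation, specialises to innermost rewriting. Your first, second and last paragraphs are exactly this argument, including the observation that the only genuine obligation is to check that the single-step simulation survives the restriction of redex choice.

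There is, however, a concrete error in your size bookkeeping. The claimed bound $\size{t_i} \leqslant n + c_\RS \cdot i$ with $c_\RS = \max\{\size{r} \mid l \to r \in \RS\}$ is false for plain term size as soon as $\RS$ contains a duplicating rule, and the paper's own Example~\ref{ex:dup} refutes it even under the innermost restriction: starting from $\m{btree}(\ms^n(\mZ))$, an innermost derivation of length $O(n)$ reaches the full binary tree of height $n$, a term of size $2^{n}$, via the rule $\m{dup}(\sn{}{t}) \to \m{node}(t,t)$, which increases the size by $\size{t}$ rather than by a constant. The additive per-step growth bound holds only for the \emph{shared} (DAG) representation, where a step adds at most $\size{r}$ fresh nodes and duplicated variables point to shared subgraphs. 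This is precisely why \cite{AM10b} works with term graphs in the first place (the paper remarks that the folklore polynomial-overhead argument needs the hypothesis that \emph{terms} grow only polynomially, a hypothesis that fails here). So your second paragraph correctly invokes the DAG encoding for the per-step cost, but your third paragraph silently switches back to plain term size; the ``easy induction'' there does not go through. The repair is routine but necessary: carry out the entire size argument on the shared representation, i.e.\ bound the size of the term graph for $t_i$ by $n + c_\RS \cdot i$, and let both the per-step simulation and the final output be stated for that representation, exactly as in \cite[Lemma~5.9 and Theorem~6.2]{AM10b}.
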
\label{p:invariance}
\noindent Hence there are no surprises here. 
By Proposition~\ref{p:fnp} and Proposition~\ref{p:invariance:1} we obtain:
\begin{proposition}
Let $\RS$ be a rewrite system with polynomial runtime.
Then the functional problems associated with $\sem{f}$ defined by $\RS$ are contained in $\FNP$. 
If $\RS$ is confluent, i.e.\ deterministic, then $\sem{f}$ is a (partial) function contained in $\FP$.
\end{proposition}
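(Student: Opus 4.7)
The plan is to combine Proposition~\ref{p:invariance:1} (implementation of rewriting on a Turing machine with polynomial overhead) with Proposition~\ref{p:fnp} (nondeterministic polynomial time yields $\FNP$). The only subtlety beyond this direct combination is that $\sem{f}$ filters normal forms through the finite set $\NA$ of non-accepting patterns, so the implementing Turing machine must additionally verify that its output is accepting.

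For the first claim, I would fix an $n$-ary defined symbol $f \in \DS$ and consider the functional problem that, on input $(\seq{v}) \in \Val^n$, asks for some $w$ with $(\seq{v},w) \in \sem{f}$. By Proposition~\ref{p:invariance:1}(1), since $\rc[\RS]$ is polynomially bounded there is a nondeterministic Turing machine $N$ which, on input $f(\seq{v})$ of size at most $n$, produces in time $p_\RS(\rc[\RS](n))$ a normal form $w$ of $f(\seq{v})$. Since $\RS$ is a constructor TRS, normal forms of basic terms are values (completeness of definitions is not required, but any normal form output by $N$ is a term that can then be inspected). Extend $N$ to $N'$ which, after computing $w$, checks whether any pattern $p \in \NA$ matches $w$; this can be carried out in time linear in $|w|$ since $\NA$ is finite and fixed, and by Proposition~\ref{p:invariance:1} the size of $w$ is itself polynomially bounded in $n$. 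If $w$ is accepting, $N'$ outputs $w$; otherwise $N'$ rejects. Then $N'$ is a nondeterministic polynomial-time Turing machine whose successful computations produce exactly the pairs in $\sem{f}$, so Proposition~\ref{p:fnp} yields $\sem{f} \in \FNP$.

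For the second claim, assume in addition that $\RS$ is confluent. Since polynomial runtime implies termination of $\irew$ on basic terms, confluence forces uniqueness of the innermost normal form of $f(\seq{v})$; hence $\sem{f}$ is a partial function. Now apply Proposition~\ref{p:invariance:1}(2) in place of part~(1): there is a \emph{deterministic} Turing machine $M$ that computes, in time polynomial in the input size, the unique normal form $w$ of $f(\seq{v})$. Augment $M$ by the same linear-time accepting check against $\NA$: output $w$ when accepting, and $\m{no}$ otherwise. This deterministic polynomial-time procedure computes the functional problem associated with $\sem{f}$, placing $\sem{f}$ in $\FP$.

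I do not expect any real obstacle: both parts follow by chaining together results already available in the paper. The only genuine content is (i) observing that the non-acceptance filter is cheap and size-preserving, so it does not disturb the polynomial bound, and (ii) invoking confluence together with termination to upgrade \emph{some} normal form (nondeterministic case) to \emph{the} normal form (deterministic case).
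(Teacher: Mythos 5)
Your proposal is correct and follows exactly the route the paper intends: the paper derives this proposition directly by combining Proposition~\ref{p:fnp} with Proposition~\ref{p:invariance:1}, leaving the details (including the acceptance filter against $\NA$ and the use of confluence plus termination to get a unique normal form) implicit, which you have simply spelled out.
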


\medskip

Our choice of adopting call-by-value semantics is rested 
in the observation that the unitary cost model of unrestricted rewriting 
often overestimates the runtime complexity of computed functions. 
This has to do with the unnecessary duplication of redexes.
\begin{example}\label{ex:dup}
  Consider the constructor TRS $\RSdup$ given by the following rules.
  \begin{alignat*}{6}
    \rlbl{1}: &~& \m{btree}(\sn{\mZ}{}) & \to \m{leaf}  \qquad &
    \rlbl{2}: &~& \m{btree}(\sn{\ms(n)}{}) & \to \m{dup}(\sn{}{\m{btree}(n)})  \qquad &
    \rlbl{3}: &~& \m{dup}(\sn{}{t}) & \to \m{node}(t,t) \tpkt
  \end{alignat*}
  Then for $n \in \N$, $\m{btree}(\ms^n(\mZ))$ computes a binary tree of height $n$
  in a linear number of steps.
  On the other hand, $\RSdup$ gives also rise to a non-innermost reduction
  $$
  \m{btree}(\sn{\ms^{n}(\mZ)}{}) 
  \rew \m{dup}(\m{btree}(\sn{\ms^{n-1}(\mZ)}{}))
  \rew \m{node}(\m{btree}(\sn{\ms^{n-1}(\mZ)}{}), \m{btree}(\sn{\ms^{n-1}(\mZ)}{}))
  \rew \dots
  $$
  obtained by preferring $\m{dup}$ over $\m{btree}$.
  The length of the derivation is however exponential in $n$.
\end{example}
\noindent 
By Proposition~\ref{p:invariance} we obtain $\sem{\m{btree}} \in \FP$.
As indicated later, our analysis can automatically classify the function $\sem{\m{btree}}$ as feasible.

%%% Local Variables: 
%%% mode: latex
%%% TeX-master: "paper"
%%% End: 

\section{The Polynomial Path Order}\label{s:popstar}
We arrive at the formal definition of \emph{polynomial path order} (\emph{\POPSTAR} for short).
Variants of the here presented definition have been presented in earlier conference publications, 
see~\cite{AM08,AMS08,AM09,AM09b}. 

The order \POPSTAR\ essentially embodies the \emph{predicative analysis} of recursion
set forth by Bellantoni and Cook~\cite{BC92}. 
In \POPSTAR, the separation of argument positions is taken into account in the notion of \emph{safe mapping}.
\begin{definition}\label{d:safemapping}
  A \emph{safe mapping} $\safe$ is a function 
  $\ofdom{\safe}{\FS \to 2^\N}$ that associates 
  with every $n$-ary function symbol $f$ the set of \emph{safe argument positions} 
  $\{i_1, \dots , i_m\} \subseteq \{1,\dots,n\}$.
  Argument positions included in $\safe(f)$ are called \emph{safe},
  those not included are 
  called \emph{normal} and collected in $\normal(f)$.
  For $n$-ary constructors $c$ 
  we require that all argument positions are safe, 
  i.e., $\safe(c) = \{1,\dots,n\}$.
\end{definition}
We refine term equivalence so that the safe mapping is taken into account. 
\begin{definition}\label{d:eqis}
  Let ${\qp}$ denote a precedence and $\safe$ a safe mapping.
  We define \emph{safe equivalence} $\eqis$ for terms $s,t \in TERMS$
  inductively as follows:
  $s \eqis t$ if either $s = t$ or
  $s = f(\seq{s})$, $t = g(\seq{t})$, $f \ep g$
  and there exists a permutation $\pi$ such that for all $i \in \{1,\dots,n\}$, 
  $s_i \eqis t_{\pi(i)}$ and $i \in \safe(f)$ if and only if $\pi(i) \in \safe(g)$.
\end{definition}
To avoid notational overhead, we suppose that for each $k+l$ ary function symbol 
$f$, the first $k$ argument positions are normal, and the remaining 
argument positions are safe, i.e., $\safe(f) = \{k+1,\dots,k+l\}$.
This allows use to write $f(\pseq{s})$
where the separation of safe from normal arguments is directly indicated in terms. 

Let $\qp$ denote a quasi-precedence. We require that the precedence adheres the partitioning of 
$\FS$ into defined symbols and constructors in the following sense.
Then in particular $\eqis$ preserves values, i.e., 
if $s \in \Val$ and $s \eqis t$ then also $t \in \Val$. 
\begin{definition}
A precedence $\qp$ is \emph{admissible} (for \POPSTAR) if $f \ep g$ implies 
that either both $f$ and $g$ are defined symbols, or both are constructors.
\end{definition}
\noindent The following definition introduces an auxiliary order $\gsq$, 
the full order $\gpop$ is then presented in Definition~\ref{d:gpop}.
\begin{definition}\label{d:gsq}
  Let ${\qp}$ denote a precedence and $\safe$ a safe mapping. 
  Consider terms $s, t \in \TERMS$ such that $s = f(\pseq[k][l]{s})$.
  Then $s \gsq t$ if one of the following alternatives holds:
  \begin{enumerate}
  \item\label{d:gsq:st} $s_i \geqsq t$ for some $i \in \{1,\dots,k+l\}$ and, 
    if $f \in \DS$ then $i$ is a normal argument position ($i \in \{1,\dots,k\}$);
  \item\label{d:gsq:ia} $f \in \DS$, $t = g(\pseq[m][n]{t})$ where $f \sp g$ 
    and $s \gsq t_i$ for all $i = 1,\dots,m+n$.
  \end{enumerate}
  Here we set ${\geqsq} \defsym {\gsq} \cup {\eqis}$.
\end{definition}
Consider a function $f$ defined by safe composition from $r$ and $s$, compare scheme~\eqref{scheme:sc}.
The purpose of this auxiliary order is to embody safe composition in the full order $\gpop$.
Note that the auxiliary order can orient $f(\sn{\vec{x}}{\vec{y}}) \gsq r(\sn{\vec{x}}{})$ for defined symbol $f$
with $f \sp r$. 
On the other hand, $f(\sn{\vec{x}}{\vec{y}})$ and safe arguments $y_i$ are incomparable,
and consequently the orientation of $f(\sn{\vec{x}}{\vec{y}})$ and $s(\sn{\vec{x}}{\vec{y}})$ fails.

\begin{definition}\label{d:gpop}
  Let ${\qp}$ denote a precedence and $\safe$ a safe mapping. 
  Consider terms $s, t \in \TERMS$ such that $s = f(\pseq[k][l]{s})$.
  Then $s \gpop t$ if one of the following alternatives holds:
  \begin{enumerate}
  \item\label{d:gpop:st} $s_i \geqpop t$ for some $i \in \{1,\dots,k+l\}$, or
  \item\label{d:gpop:ia} $f \in \DS$, $t = g(\pseq[m][n]{t})$ where $f \sp g$ 
    and the following conditions hold:
    \begin{itemize}
    \item $s \gsq t_j$ for all normal argument positions $j = 1,\dots,m$;
    \item $s \gpop t_j$ for all safe argument positions $j = m+1,\dots,m+n$;
    \item $t_j \not\in \TA(\sigbelow{\Fun(s)}{\FS},\VS)$ for at most one safe argument position $j \in \{m+1,\dots,m+n\}$;
    \end{itemize}
  \item\label{d:gpop:ep} $f \in \DS$, $t = g(\pseq[m][n]{t})$ where $f \ep g$
    and the following conditions hold:
    \begin{itemize}
    \item $\mset{\seq[k]{s}} \gpopmul \mset{\seq[m]{t}}$;
    \item $\mset{\seq[k+l][k+1]{s}} \geqpopmul \mset{\seq[m+n][m+1]{t}}$.
    \end{itemize}
  \end{enumerate}
  Here ${\geqpop} \defsym {\gpop \cup \eqis}$.
\end{definition}

\noindent We say a constructor TRS $\RS$ is \emph{predicative recursive} if
$\RS$ is compatible with an instance $\gpop$ with underlying admissible precedence.

We use the notation $\caseref{\gpop}{i}$ and respectively $\caseref{\gsq}{i}$ 
to refer to the \nth{$i$} case in Definition~\ref{d:gsq} respectively Definition~\ref{d:gpop}.
We emphasise that $\gpop$ is \emph{blind} on constructors, in particular $\gpop$ collapses 
to the subterm relation (modulo equivalence) on values. 
\begin{lemma}\label{l:gpop:val}
  Suppose the precedence underlying $\gpop$ is admissible.
  If $s \gpop t$ and $s \in \Val$ then $s \esupertermstrict t$, 
  in particular $t \in \Val$.
\end{lemma}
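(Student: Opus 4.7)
The plan is to argue by structural induction on $s \in \Val$, after first verifying an auxiliary closure property: safe equivalence preserves values. This auxiliary fact is exactly where admissibility of the precedence pays off: if $s = c(\seq{v})$ is a value and $s \eqis t$, then by Definition~\ref{d:eqis} either $s=t$ or $t = g(\seq{t})$ with $g \ep c$; admissibility forces $g$ to be a constructor, and a simultaneous induction on structure shows each $t_{\pi(i)} \in \Val$, hence $t \in \Val$.

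For the main claim, I distinguish the two shapes a value can take. If $s$ is a variable, then no clause of Definition~\ref{d:gpop} is applicable, because each presupposes $s = f(\pseq[k][l]{s})$ with a function symbol at the root; so $s \gpop t$ is impossible and the implication holds vacuously. If $s = c(\seq{v})$ with $c \in \CS$ and $\seq{v} \in \Val$, then clauses \caseref{\gpop}{2} and \caseref{\gpop}{3} both require $f \in \DS$ at the root of $s$; admissibility rules this out (a constructor is never $\ep$-related to, nor plays the role of, a defined symbol in those clauses). Thus the only way to obtain $s \gpop t$ is via \caseref{\gpop}{1}: there is some argument position $i$ with $v_i \geqpop t$.

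Now split on the two alternatives hidden in $\geqpop = \gpop \cup \eqis$. If $v_i \eqis t$, the auxiliary fact gives $t \in \Val$, and $s \supertermstrict v_i \eqis t$ yields $s \esupertermstrict t$. If instead $v_i \gpop t$, apply the induction hypothesis to the value $v_i$ to get $t \in \Val$ and $v_i \esupertermstrict t$; composing with $s \supertermstrict v_i$ and using transitivity of the subterm relation modulo $\eqi$ gives $s \esupertermstrict t$.

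The only genuinely delicate point is the closure of $\Val$ under $\eqis$, which reduces to checking that admissibility of the precedence forbids a constructor from being $\ep$-equivalent to a defined symbol; beyond that, the argument is a straightforward induction, since the blindness of $\gpop$ on constructors (no recursion into clauses \caseref{\gpop}{2}–\caseref{\gpop}{3}) immediately forces any descent from a value to be a subterm step.
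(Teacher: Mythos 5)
The paper states this lemma without an accompanying proof, so there is no argument of the authors' to compare against; your proof is correct and is clearly the intended one — clauses \caseref{\gpop}{2} and \caseref{\gpop}{3} are unavailable at a constructor (or variable) root, clause \caseref{\gpop}{1} reduces everything to a subterm descent plus the closure of $\Val$ under $\eqis$, and the induction on the structure of $s$ closes the $v_i \gpop t$ branch. One cosmetic imprecision: the exclusion of clauses \caseref{\gpop}{2} and \caseref{\gpop}{3} follows already from the partition of $\FS$ into $\DS$ and $\CS$ (they demand $f \in \DS$), not from admissibility of the precedence — admissibility is needed only where you first invoke it, namely to keep a value from being $\eqis$-equivalent to a term with a defined root symbol.
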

The case \cpop{ia} accounts for definitions by safe composition \eqref{scheme:sc}.
The final restriction put onto \cpop{ia} is used to prevent multiple recursive calls
as indicated in Example~\ref{ex:RS2}.
We remark that restrictions put onto \cpop{ia} are weaker compared to the corresponding clause given 
in~\cite[Definition 4]{AM08}% 
\footnote{The early definition from~\cite[Definition~4]{AM08}, 
used the full order $\gpop$ only on one argument of the right-hand side (the one that possibly holds the recursive call), 
the remaining arguments were all oriented with the auxiliary order $\gsq$. 
To retain completeness, in~\cite{AM08} we allowed also the 
admittedly ad hoc use of a subterm comparison on safe arguments.}.
The case \cpop{ep} restricts the corresponding case in \MPO\ 
by taking the separation of normal and safe argument positions into account. 
Note that here normal arguments need to decrease. This reflects that as in \eqref{scheme:srn} 
recursion is performed on normal argument positions.
We arrive at the central theorem of this paper.
\begin{theorem}\label{t:popstar}
  Let $\RS$ be predicative recursive TRS.\@
  Then the innermost derivation height of any basic term 
  $f(\svec{u}{v})$ is bounded by a polynomial in the 
  maximal depth of normal arguments $\vec{u}$.
  The polynomial depends only on $\RS$ and the signature $\FS$.
  In particular, the runtime complexity of $\RS$ is polynomial.
\end{theorem}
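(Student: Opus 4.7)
The plan is to follow the roadmap stated at the end of Section~\ref{s:intro}, establishing the theorem via three stages that will unfold across Sections~\ref{s:pint}--\ref{s:embed}.

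First, I would introduce a \emph{predicative interpretation} that translates each term $s\in\TERMS$ into an object living in a suitable sequence algebra where normal and safe positions are sharply separated. The idea is that safe arguments, together with the subterms sitting under constructors, form a multiset-style ``value register'' carried along passively, while normal positions generate the main ordered skeleton that drives the complexity. The interpretation should be crafted so that the depth of the normal arguments of a basic starting term $f(\svec{u}{v})$ translates directly into a polynomially bounded size of its image, and so that the constructor-blindness of $\gpop$ (Lemma~\ref{l:gpop:val}) translates into a clean treatment of values as atomic objects, removing the need to inspect constructor nesting inside the sequence algebra.

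Second, I would introduce an auxiliary path order on the sequence algebra, extending the one from~\cite{AM05}, and prove (Corollary~\ref{c:pop}) that this extended order admits polynomially bounded derivation height. The central lemma here is a quantitative bound relating the derivation height of a sequence object to the cardinality of its normal skeleton and to the rank of its topmost function symbol in the precedence, so that iterated decreases can produce only polynomially many intermediate objects. This is precisely where the restriction in clause \cpop{ia} of Definition~\ref{d:gpop} pays off: it ensures that at most one safe argument on the right-hand side may carry fresh recursive structure, which prevents the exponential blow-up exemplified by $\RSbin$ in Example~\ref{ex:RS2} and keeps normal-depth growth linear along a step.

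Third, in the embedding phase I would show that an innermost rewrite step $s\irew[\RS] t$ is reflected by a strict decrease in the extended order between the images of $s$ and $t$. The proof proceeds by induction on the rewrite context and by case analysis on the clause of Definition~\ref{d:gpop} orienting the applied rule: case \cpop{st} is handled by the subterm behaviour of the interpretation, case \cpop{ep} uses that the interpretation respects $\eqis$ combined with the multiset decrease on normal positions, and case \cpop{ia} is where admissibility of the precedence and the at-most-one exception on safe arguments are used crucially to keep the image of the right-hand side dominated. Innermost evaluation enters here by guaranteeing that arguments of the contracted redex are already values, so substitution instances behave monotonically under the interpretation. Combining the embedding with the polynomial bound of the second step yields the theorem, since $\dheight(f(\svec{u}{v}))$ is bounded by the derivation height of its image, which is in turn a polynomial in the depth of the normal arguments $\vec{u}$.

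The main obstacle will be step two: extracting a genuinely polynomial (as opposed to merely primitive recursive, in the spirit of~\cite{H92}) bound out of a path-order argument is delicate and requires a potential function that simultaneously tracks precedence rank, the number of normal-position occurrences, and the accumulated safe register. Designing this potential and verifying its monotonicity under the extended order is precisely where the calibration of $\gpop$ away from $\LMPO$, and in particular the single-exception restriction in \cpop{ia}, earns its keep.
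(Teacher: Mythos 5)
Your three-stage plan is exactly the paper's own architecture: a predicative interpretation separating safe from normal positions (Section~\ref{s:pint}), a polynomial bound on descending sequences for an order on the interpreted objects (Section~\ref{s:pop}, Theorem~\ref{t:pop} and Corollary~\ref{c:pop}), and an embedding of innermost steps into that order (Section~\ref{s:embed}). There is, however, one step that would fail as you state it. You justify the embedding of a contracted redex by saying that innermost evaluation ``guarantees that arguments of the contracted redex are already values''. Innermost rewriting only guarantees that the arguments are in \emph{normal form}, and for a general constructor TRS a normal form may still contain defined symbols (a term $g(c)$ with no matching rule, say). The embedding lemmas genuinely need substitutions into $\Val$: otherwise $\ints(x\sigma)$ need not be $\nil$, and the constructor-blindness you invoke via Lemma~\ref{l:gpop:val} is unavailable. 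The paper closes this gap by first treating \emph{completely defined} TRSs, where normal forms and values coincide, and then reducing the general case to that one: a fresh minimal constructor $\bot$ is added together with the (infinite but harmless) set of rules rewriting every garbage normal form to $\bot$ (Definition~\ref{d:rss}); the extended system is still compatible with $\gpop$ and simulates the original one step-for-step on normalised terms (Lemmas~\ref{l:rss:step} and~\ref{l:rss:simul}). Without some such device your induction on the rewrite context does not go through.

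Separately, you correctly identify the second stage as the crux but leave it entirely unexecuted, and this is where essentially all the work lies. The paper introduces finite approximations $\gpopv[k][l]$ whose parameters bound the width and depth of right-hand sides, the measure $\Slow[k]$ counting descending sequences, and an order-preserving homomorphism from multisets of measures into $\N$ (base-$c$ digit encodings) to tame the multiset comparison in the $f \ep g$ case; the bound $\Fpop{k}{p}(m) \leqslant c_{k,p}\cdot(m+2)^{d_{k,p}}$ is then obtained by a double induction on the rank of the root symbol and on the measure, with the constants $c_{k,p}$, $d_{k,p}$ constructed by recursion on the rank. A potential function of the kind you sketch is not by itself sufficient: what keeps each descent step from widening sequences more than additively is the explicit length restriction $\len(b) \leqslant \width(a)+k$ built into the order and justified against the rewrite rules by Lemma~\ref{l:int:len}, and it is there --- together with the auxiliary order $\gppv[k][l]$ that confines all but one safe argument to symbols of strictly smaller rank --- that the single-exception clause of \cpop{ia} is actually cashed in.
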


The proof of Theorem~\ref{t:popstar} is rather involved, and outlined at the end of this 
section. The formal proof is then presented in the subsequent Sections~\ref{s:pint}--\ref{s:embed}.
We clarify first Definition~\ref{d:gpop} on several examples. 
\begin{example}\label{ex:mult1}
  Consider the TRS $\RSmul$ expressing multiplication in Peano arithmetic.
  \begin{alignat*}{4}
    \rlbl{1}: &~& +(\sn{\mZ}{y}) & \to y &
    \rlbl{3}: &~& \times(\sn{\mZ,y}{}) & \to \mZ \\
    \rlbl{2}: &~& +(\sn{\ms(\sn{}{x})}{y}) & \to \ms(\sn{}{+(\sn{x}{y})}) \qquad &
    \rlbl{4}: &~& \times(\sn{\ms(\sn{}{x}),y}{}) & \to +(\sn{y}{\times(\sn{x,y}{})})
  \end{alignat*}
  The TRS $\RSmul$ is predicative recursive, using the precedence
  ${\times}~\sp~{+}~\sp~{\ms}$ and the safe mapping as indicated in the rules:
  The rules $\rlbl{1}$ and $\rlbl{3}$ are oriented by \cpop{st}.
  The rule $\rlbl{3}$ is oriented by $\cpop{ia}$ using ${+}~\sp~{\ms}$
  and $+(\sn{\ms(\sn{}{x})}{y}) \cpop{ep} +(\sn{x}{y})$. 
  Note that the latter enforces that the first argument to $+$ is normal.
  Similar, the final rule $\rlbl{4}$ is oriented by \cpop{ia}, employing
  ${\times}\sp{+}$ together with $\times(\sn{\ms(\sn{}{x}),y}{}) \csq{st} y$ 
  and $\times(\sn{\ms(\sn{}{x}),y}{}) \cpop{ep} \times(\sn{x,y}{})$.
  Note that the latter two inequalities require that 
  the both argument positions of $\times$ are normal, i.e.,\ are 
  used for recursion.
\end{example}  
\begin{example}
  Now consider the extension of $\RSmul$ from Example~\ref{ex:mult1} by the two rules
  \begin{alignat*}{4}
    \rlbl{5}: &~& \m{exp}(\mZ,y) & \to \ms(\sn{}{\mZ})  \qquad\qquad &
    \rlbl{6}: &~& \m{exp}(\ms(\sn{}{x}),y) & \to \times(\sn{y,\m{exp}(x,y)}{})
  \end{alignat*}
  that express exponentiation $y^x$ in an exponential number of steps. 
  The definition of $\m{exp}$ does not follow predicative recursion, 
  in particular since $\times$ admits no safe argument positions it
  cannot serve as stepping function. Independent on the safe mapping for $\m{exp}$, 
  rule $\rlbl{6}$ cannot be oriented using polynomial path orders.
\end{example}
\begin{example}
Finally, for a negative example consider $\RSmul$ from Example~\ref{ex:mult1} where the rule $\rlbl{4}$ is replaced by the rule
  $$\rlbl{4a}:~\times(\sn{\ms(\sn{}{x}),y}{}) \to +(\sn{\times(\sn{x,y}{})}{y})\tpkt$$
  The resulting system admits polynomial runtime complexity, but does not follow the rigid 
  scheme of predicative recursion. For this reason, it cannot be handled by \POPSTAR.\@
  Technically, 
  terms $\times(\sn{\ms(\sn{}{x}),y}{})$ and $\times(\sn{x,y}{})$ 
  is incomparable with respect to $\gsq$ independent on the precedence, and 
  consequently also orientation of left- and right-hand side with 
  \cpop{ia} fails.
\end{example}

% One might wonder whether a precise degree on the provided polynomial bound can be obtained
% by reasoning based on the depth of recursion and formation of composition rules.
% The following example clarifies that even in the presence of a single defined function symbol
% and the absence of composition, 
% predicative recursive TRSs admit polynomial runtime-complexity of arbitrary degree.
%The reason lies in the use of multiset comparison in Definition~\ref{d:gpop}.
Finally, we stress that the restriction to innermost reductions
is essential for the correctness of Theorem~\ref{t:popstar}. 
This has to do with unnecessary duplication of redexes as pointed out in Example~\ref{ex:dup}.

\begin{example}
  Reconsider the TRS $\RSdup$ from Example~\ref{ex:dup}. 
  Then $\RSdup \subseteq {\gpop}$ with any admissible precedence 
  satisfying $\m{btree} \sp \m{dup}$. Theorem~\ref{t:popstar} 
  thus implies that the (innermost) runtime complexity of $\RSdup$ is 
  polynomial. On the other hand, we already observed that $\RSdup$ admits exponentially long 
  outermost reductions.
\end{example}

\paragraph*{Proof Outline}
The proof of Theorem~\ref{t:popstar} requires a variety of ingredients. 
In Section~\ref{s:pint}, we define \emph{predicative interpretations} $\ints$
that flatten terms to \emph{sequences of terms}, essentially separating 
safe from normal arguments. This allows us to analyse terms independent from safe arguments.
In Section~\ref{s:pop} we introduce an order $\gpopv[][]$ on sequences of terms, 
that is simpler compared to $\gpop$ and does not rely on the separation
of argument positions.  
In Section~\ref{s:embed} we show that predicative interpretations
embeds innermost rewrite steps into $\gpopv[][]$:
\begin{center}
  \begin{tikzpicture}
    \newcommand{\n}{2};
    \newcommand{\dx}{1.8};
    \newcommand{\dy}{0.9};
    \newcommand{\drawterms}[2]{
      \pgfmathparse{#1}
      \node(s_#1) at (\pgfmathresult*\dx,0) {$s_{#2}$};
      \node(is_#1) at (\pgfmathresult*\dx,-\dy) {$\ints(s_{#2})$};
      \draw[->] (s_#1) to node[right] {} (is_#1); %% $\ints(\cdot)$
    }
    \newcommand{\drawrel}[1]{
      \pgfmathparse{#1}
      \node(rew_#1) at (\pgfmathresult*\dx+\dx/2,0) {$\irew$};
      \node(ord_#1) at (\pgfmathresult*\dx+\dx/2,-\dy) {$\gpopv[][]$};
    }

    \foreach \x in {1,...,\n} \drawterms{\x}{\x};

    \pgfmathparse{(\n+1)*\dx};
    \node at (\pgfmathresult,0) {$\dots$};
    \node at (\pgfmathresult,-\dy) {$\dots$};
    \drawterms{\n+2}{\ell};

    \foreach \x in {1,...,\n} \drawrel{\x};
    \drawrel{\n+1};
  \end{tikzpicture}
\end{center}
In Theorem~\ref{t:pop} we show that the length of $\gpopv[][]$ descending sequences
starting from basic terms can be bound appropriately.

%%% Local Variables: 
%%% mode: latex
%%% TeX-master: "paper"
%%% End: 

\section{Predicative Interpretations}\label{s:pint}
Fix a safe mapping $\safe$ on the signature $\FS$.
In this section, we define the \emph{predicative interpretation} 
that guided by $\safe$ interpret terms as \emph{sequences}.
For this, define the \emph{normalised signature} $\FSn$ be given as
\begin{equation*}
  \FSn \defsym \bigl\{ \fn \mid f \in \FS, \normal(f) = \{i_1,\dots,i_k\} \text{ and } \ar(\fn) = k \}\bigr\}
\end{equation*}
The predicative interpretation of a term $f(\pseq{s})$ results in a sequence
$\lst{\fn(\seq[k]{a})} \append a_{k+1} \append \cdots \append a_{k+l}$, 
where $\append$ denotes concatenation of sequences and the 
sequences $a_i$ are predicative interpretations of the corresponding arguments $s_i$ ($i = 1,\dots,k+l$).
To denote sequences, we use an auxiliary variadic function symbol
$\listsym$.  Here variadic means that the arity of
$\listsym$ is finite but arbitrary.  We always write $\lseq{t}$
for $\listsym(\seq{t})$, in particular if we write $f(\seq{t})$ then $f \not=\listsym$.
Note that in the interpretations, terms have sequences as arguments.
We reflect this in the next definition.
\begin{definition}\label{d:sequencedterms}
  The set of \emph{terms with sequence arguments}
  $\TS \subseteq \TA(\FSn \uplus \set{\listsym},\VS)$
  and the set of \emph{sequences} $\LS  \subseteq \TA(\FSn \uplus \set{\listsym},\VS)$ is inductively defined as follows:
  \begin{enumerate}
  \item $\VS \subseteq \TS$, and
  \item if $\seq{t} \in \TS$ then $\lseq{t} \in \LS$, and
  \item if $\seq{a} \in \LS$ and $f \in \FSn$ then $f(\seq{a}) \in \TS$.
  \end{enumerate}
  %Finally, we set $\TLS \defsym {\TS \cup \LS}$.
\end{definition}
We always write $a,b, \dots$, possibly extended by
subscripts, for elements from $\TS$ and $\LS$.  
The restriction of $\TS$ and $\LS$ to ground terms is denoted  
by $\GTS$ and $\GLS$ respectively.
When no confusion can arise from this we call terms with sequence arguments simply terms.
Further, we sometimes abuse set notation and write $b \in \lseq{a}$ if 
$b = a_i$ for some $i \in \{1,\dots,n\}$.
We denote by $a \append b$ the \emph{concatenation} of $a \in \TLS$ and $b \in \TLS$.
To avoid notational overhead we identify terms with singleton sequences.
Let $\tolst(a) \defsym \lst{a}$ if $a \in \TS$ and $\tolst(a) \defsym a$ if $a \in \LS$.
We set $a \append b \defsym \lst{a_1~\cdots~a_n~b_1~\cdots~b_m}$ 
where $\tolst(a) = \lseq{a}$ and $\tolst(b) = \lseq[m]{b}$.
We define the \emph{length} over $\TLS$ as $\len(a) \defsym n$ where $\tolst(a) = \lseq{a}$.
The \emph{sequence width} $\width$ (or \emph{width} for short) 
of an element $a \in \TLS$ is given recursively by
\begin{equation*}
  \width(a) \defsym
  \begin{cases}
    1 & \text{if $a$ is a variable, }\\
    \max \{1,\width(a_1),\dots,\width(a_n)\} & \text{if $a = f(\seq{a})$, and}\\
    \sum_{i=1}^n \width(a_i)
    & \text{if $a = \lseq{a}$.}
  \end{cases}
\end{equation*}
We will tacitly employ $\len(a) \leqslant \width(a)$ and $\width(a \append b) = \width(a) + \width(b)$ for all $a,b \in \TLS$.
We definite the \emph{norm} of $t \in \TERMS$ in correspondence to the depth of $t$, but 
disregard normal argument positions.
\begin{equation*}
  \norm{t} =
  \begin{cases}
    1 & \text{$t$ is a variable} \\
    1+ \max\{\norm{t_{k+1}},\dots,\norm{t_{k+l}}\} & \text{$t=f(\pseq{t})$}
  \end{cases}
\end{equation*}
Note that since all argument positions of constructors are safe, 
the norm $\norm{\cdot}$ and depth $\depth(\cdot)$ coincides on values. 
Predicative interpretations are given by two mappings $\ints$ and $\intn$:
the interpretation $\ints$ is applied on safe arguments and removes values; 
the mapping $\intn$ is applied to normal arguments and additionally encodes
the norm of the given term as tally sequence. Consequently we keep 
track of the maximal depth of values at normal argument positions.
Let $\theconst \not \in \FSn$ be a fresh constant.
To encode natural numbers $n\in \N$, 
define its \emph{tally sequence representation} $\natToSeq{n}$ 
as the sequence containing $n$ occurrences of this fresh constant:
$\natToSeq{0} = \nil$ and $\natToSeq{n+1} = \theconst \append \natToSeq{n}$.
%  and 
% define $\NM{t} \defsym \lst{\theconst~\cdots~\theconst}$, 
% with $\norm{t}$ occurrences of $\theconst$.
\begin{definition}\label{d:pi}
A \emph{predicative interpretation}
is a pair $(\ints,\intn)$ of mappings $\ofdom{\ints,\intn}{\TERMS \to \TAL^{\!\ast}\!(\FS \cup \{\theconst\})}$
defined as follows:
\begin{align*}
  \ints(t) & \defsym
  \begin{cases}
    \nil & \text{ if $t$ is a value} \\
    \lst{\fn(\intn(t_1), \dots, \intn(t_k))} \append \ints(t_{k+1}) \append \cdots \append \ints(t_{k+l}) & \text{ otherwise where ($\star$)}
  \end{cases}\\
  \intn(t) & \defsym \ints(t) \append \NM{t} \tpkt
\end{align*}
Here ($\star$) stands for $t = f(\pseq{t})$.
\end{definition}

In the next section we introduce the order $\gpopv[][]$ on 
sequences $\GTLS$. 
In the subsequent section we then embed innermost $\RS$-steps into this order, 
and use $\gpopv[][]$ to estimate the length of reductions accordingly.
Since for basic terms $s = f(\pseq{u})$ in particular
$$
\ints(s) = \lst{\fn(\intn(u_1),\dots,\intn(u_k))} \append \ints(u_{k+1}) \append \cdots \append \ints(u_{k+l}) 
= \lst{\fn(\natToSeq{\depth(u_1)},\dots,\natToSeq{\depth(u_k)})}
$$
the so obtained bound will depend on depths of normal arguments only. 
To get the reader prepared for the definition of $\gpopv[][]$, 
we exemplify Definition~\ref{d:pi} on a predicative recursive TRS.\@
\begin{example}\label{ex:pint}
  Consider following predicative recursive TRS $\RS_f$ where we suppose that besides $f$, also $g$ and $h$ are defined symbols:
  \begin{align*}
    \rlbl{1}:~f(\sn{\mZ}{y}) & \to y 
    & \rlbl{2}:~f(\sn{\ms(x)}{y}) & \to g(\sn{h(\sn{x}{})}{f(\sn{x}{y})})
  \end{align*}
  Consider a substitution $\ofdom{\sigma}{\Var \to \Val}$.
  Using that $\intn(v) = \natToSeq{\depth(v)}$ for all values $v$, 
  the embedding $\ints(l\sigma) \gpopv[][] \ints(r\sigma)$ of root steps ($l \to r \in \RS_f$)
  results in the following order constraints.
  \begin{align*}
    \lst{\fn(\natToSeq{1})} & \gpopv[][] \nil && \text{from rule \rlbl{1}}\\
    \mparbox[r]{45mm}{\lst{\fn(\natToSeq{\depth(x\sigma) + 1})}} & \gpopv[][] \mparbox[l]{65mm}{\lst{\gn(\intn(h(\sn{x\sigma}{})))~\fn(\natToSeq{\depth(x\sigma)})}}  && \text{from rule \rlbl{2}}.
    % & = \lst{\gn(\lst{\hn(\natToSeq{\depth(x\sigma)})~\theconst})~\fn(\natToSeq{\depth(x\sigma) + 1})}
    % & \ints(y\sigma) & = \nil 
    % \\ 
    % \intn(f(\sn{\mZ}{y})\sigma) & = \fn(\natToSeq{1}) \append \natToSeq{\depth(y\sigma) + 1}
    % & \intn(y\sigma) & = \natToSeq{\depth(y\sigma)} 
    % \\[3mm]
    % \ints(f(\sn{\ms(x)}{y})) & =\lst{\fn(\natToSeq{\depth(x\sigma) + 1})} 
    % & \ints(g(\sn{h(\sn{x}{})}{f(\sn{x}{y})})) & = 
    % \\
    % \intn(f(\sn{\ms(x)}{y})) & = \fn(\natToSeq{\depth(x\sigma) + 1}) \append \natToSeq{\depth(y\sigma) + 1}
  \end{align*}
  where $\intn(h(\sn{x\sigma}{})) = \lst{\hn(\intn(x\sigma))}\append \NM{h(\sn{x\sigma}{})}  = \lst{\hn(\natToSeq{\depth(x\sigma)})~\theconst}$.
  Closure under context follows using standard inductive reasoning.
  To deal with steps below normal argument positions, it is also necessary to 
  orient images of $\intn$. On the TRS $\RS_f$ this results additionally in following constraints:
  \begin{align*}
    \lst{\fn(\natToSeq{1})} \append \natToSeq{\depth(y\sigma) + 1} & \gpopv[][] \natToSeq{\depth(y\sigma)} && \text{from rule \rlbl{1}}\\
    \mparbox[r]{45mm}{\lst{\fn(\natToSeq{\depth(x\sigma) + 1})}\append \natToSeq{\depth(y\sigma) + 1}} & \gpopv[][] \mparbox[l]{65mm}{\lst{\gn(\intn(h(\sn{x}{})))~\fn(\natToSeq{\depth(x\sigma)})} \append \natToSeq{\depth(y\sigma) + 1}}
    && \text{from rule \rlbl{2}}.
  \end{align*}
\end{example}
To get a polynomial bound on $\gpopv[][]$ descending sequences, we need to control
the length of right-hand sides appropriately. Precisely we will require 
that for a global constant $k \in \N$, $\len(b) \leqslant \width(a) + k$ whenever $a \gpopv[][] b$ holds.
In particular $k$ will be more than twice the maximal size of a right-hand side in the analysed TRS $\RS$. 
Note that due to the following lemma, if $l\sigma \irew r\sigma$ with $\ofdom{\sigma}{\VS \to \Val}$ is a root step of a predicative TRS $\RS$, 
then $\len(\intq(r\sigma)) \leqslant \width(\intq(l\sigma)) + k$ for $\intq \in \{\ints,\intn\}$. 
\begin{lemma}\label{l:int:len}
  Let $s = f(\pseq{s}) \in \Tb$, $t \in \TA$, $\ofdom{\sigma}{\VS \to \Val}$ and define $k \defsym 2\cdot\size{t}$. Then
  \begin{enumerate}
  \item\label{l:int:len:S} $\len(\ints(t\sigma)) \leqslant \size{t}$; and
  \item\label{l:int:len:gsq} if $s \gsq t$  then $\len(\intn(t\sigma)) \leqslant \max \{ \norm{s_1\sigma}, \dots, \norm{s_k\sigma} \} + k$; and
    % \begin{align*}
    %   \len(\intn(t\sigma)) & \leqslant \max \{ \norm{s_1\sigma}, \dots, \norm{s_k\sigma} \} + k \\
    %    & \leqslant \width(\ints(s\sigma)) + k\text{\,; and}
    % \end{align*}
  \item\label{l:int:len:gpop} if $s \gpop t$ then $\len(\intn(t\sigma)) \leqslant \max \{ \norm{s_1\sigma}, \dots, \norm{s_k\sigma},\norm{s\sigma}\} + k$.
    % \begin{align*}
    %   \len(\intn(t\sigma)) &\leqslant \max \{ \norm{s_1\sigma}, \dots, \norm{s_k\sigma},\norm{s\sigma}\} + k \\ &\leqslant \width(\intn(s\sigma)) + k \tpkt
    % \end{align*}
  \end{enumerate}
\end{lemma}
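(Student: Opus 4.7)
The three claims are proved in sequence, the second and third by simultaneous induction.

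For Part~(\ref{l:int:len:S}), proceed by structural induction on $t$. If $t$ is a variable, then $t\sigma \in \Val$ by hypothesis and $\ints(t\sigma) = \nil$, giving length $0 \leqslant 1 = \size{t}$. If $t = f(\pseq{t})$ and $t\sigma \in \Val$, the same conclusion holds. Otherwise $\ints(t\sigma) = \lst{\fn(\intn(t_1\sigma),\dots,\intn(t_k\sigma))} \append \ints(t_{k+1}\sigma) \append \cdots \append \ints(t_{k+l}\sigma)$, and summing lengths gives $1 + \sum_{i>k} \len(\ints(t_i\sigma)) \leqslant 1 + \sum_{i>k} \size{t_i} \leqslant \size{t}$ by the induction hypothesis.

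For Parts~(\ref{l:int:len:gsq}) and~(\ref{l:int:len:gpop}), the key observation, immediate from Definition~\ref{d:pi}, is that $\len(\intn(t\sigma)) = \len(\ints(t\sigma)) + \norm{t\sigma}$. By Part~(\ref{l:int:len:S}) the first summand is at most $\size{t} \leqslant k/2$, so it suffices to prove the refined inequalities (2$'$) $s \gsq t$ implies $\norm{t\sigma} \leqslant M + \size{t}$ with $M \defsym \max\{\norm{s_i\sigma} : i \in \normal(f)\}$, and (3$'$) $s \gpop t$ implies $\norm{t\sigma} \leqslant \max(M, \norm{s\sigma}) + \size{t}$. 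We establish (2$'$) and (3$'$) by simultaneous induction on the respective derivations.

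In the subterm case of Definition~\ref{d:gsq} we have $s_i \geqsq t$ for a normal position $i$. Since $s \in \BASICS$, $s_i$ is a value; because the precedence is admissible no defined symbol heads $s_i$, hence only the subterm clause of $\gsq$ applies recursively within $s_i$, and a routine structural induction shows that $t\sigma$ is a subterm of $s_i\sigma$ modulo $\eqis$, whence $\norm{t\sigma} \leqslant \norm{s_i\sigma} \leqslant M$. In the recursive case $t = g(\pseq[m][n]{t})$ with $s \gsq t_j$ for every $j$, so $\norm{t\sigma} = 1 + \max_{j>m}\norm{t_j\sigma}$; applying the induction hypothesis to each safe $t_j$ together with $1 + \max_j\size{t_j} \leqslant \size{t}$ yields (2$'$). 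For (3$'$) the subterm and safe-composition cases of Definition~\ref{d:gpop} proceed analogously, observing that a safe subterm $s_i$ contributes $\norm{s_i\sigma} \leqslant \norm{s\sigma}$, accounting for the extra term in the maximum.

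The critical case is the equivalence clause of Definition~\ref{d:gpop}, where $f \ep g$: the safe multiset comparison $\mset{s_{k+1},\dots,s_{k+l}} \geqpopmul \mset{t_{m+1},\dots,t_{m+n}}$ furnishes, for every safe $t_j$, a safe $s_i$ with $s_i \geqpop t_j$. Since each such $s_i$ is a value, Lemma~\ref{l:gpop:val} (together with its evident $\eqis$-extension) gives $\norm{t_j\sigma} \leqslant \norm{s_i\sigma} \leqslant \norm{s\sigma} - 1$, whence $\norm{t\sigma} \leqslant \norm{s\sigma} \leqslant \max(M,\norm{s\sigma}) + \size{t}$. The principal obstacle throughout is precisely this \emph{values-blindness} property for $\gsq$, namely the analogue of Lemma~\ref{l:gpop:val} for the auxiliary order: once established by a straightforward induction exploiting admissibility of the precedence, the remainder of the argument reduces to mechanical arithmetic.
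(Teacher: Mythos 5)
Your proposal is correct and follows essentially the same route as the paper's (very terse) proof: part~(\ref{l:int:len:S}) by induction on $t$, and parts~(\ref{l:int:len:gsq}) and~(\ref{l:int:len:gpop}) by induction on the derivation of $\gsq$ respectively $\gpop$, invoking the values-blindness of the orders (Lemma~\ref{l:gpop:val}, which the paper applies to $\gsq$ via ${\gsq}\subseteq{\gpop}$ where you re-derive it directly) in the subterm and $\ep$-cases and the induction hypothesis elsewhere. Your explicit decomposition $\len(\intn(t\sigma)) = \len(\ints(t\sigma)) + \norm{t\sigma}$ and the reduction to bounding $\norm{t\sigma}$ merely makes precise the bookkeeping the paper leaves implicit.
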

\begin{proof}
  The first property follows by induction on $t$, employing that $\ints(x\sigma) = \nil$.
  A standard induction on $\gpop$ (respectively $\gsq$) proves the second and third properties. 
  For the cases $s \cpop{st} t$ (respectively $s \csq{st} t$) and $s \cpop{ep} t$, we use 
  Lemma~\ref{l:gpop:val}, the remaining cases follow from induction hypothesis directly.
\end{proof}

\section{The Polynomial Path Order on Sequences}\label{s:pop}
The \emph{polynomial path order on sequences} (\emph{\POP}~for short), denoted by $\gpopv[][]$, 
constitutes a generalisation of the \emph{path order for $\FP$} as put forward in~\cite{AM05}.
Whereas we previously uses the notion of safe mapping to 
dictate predicative recursion on compatible TRSs, 
the order on sequences relies on the explicit separation of safe 
arguments as given by predicative interpretations.
Following Buchholz~\cite{B95}, we present \emph{finite approximations}
$\gpopv[k][l]$ of $\gpopv[][]$.
The parameters $k \in \N$ and $l \in \N$ are used to controls the width and depth
of right-hand sides.
Fix a precedence $\qp$ on the normalised signature $\FSn$.
We extend term equivalence with respect to $\qp$ to sequences by 
disregarding the order on elements.
\begin{definition}\label{d:eqi}
  We define $a \eqi b$ if $a = b$ or there exists a permutation $\pi$
  such that $a_i \eqi b_{\pi(i)}$ for all $i = 1,\dots,n$, 
  where either 
  (i) $a = \lseq{a}$, $b = \lseq{b}$, or 
  (ii) $a = f(\seq{a})$, $b = g(\seq{b})$ and $f \ep g$.
\end{definition}
\noindent In correspondence to $\gpop$, the order $\gpopv[k][l]$ 
is based on an auxiliary order $\gppv[k][l]$ 
defined next. The full order is then introduced in Definition~\ref{d:gpopv}. 

\begin{definition}\label{d:gppv} 
  Let $k,l \geqslant 1$.
  We define $\gppv[k][l]$ with respect to the precedence $\qp$ inductively as follows:
  \begin{enumerate}
  \item\label{d:gppv:st}
    $f(\seq{a}) \gppv[k][l] b$ if $a_i \geqppv[k][l] b$  for some $i \in \{1,\dots,n\}$;
  \item\label{d:gppv:ia}
    $f(\seq{a}) \gppv[k][l] g(\seq[m]{b})$ if $f \sp g$ and the following conditions are satisfied:
    \begin{itemize}
    \item $f(\seq{a}) \gppv[k][l-1] b_j$ for all $j = 1,\dots,m$;
    \item $m \leqslant k$;
    \end{itemize}
  \item\label{d:gppv:ialst}
    $f(\seq{a}) \gppv[k][l] \lseq[m]{b}$ if the following conditions are satisfied:
    \begin{itemize}
    \item $f(\seq{a}) \gppv[k][l-1] b_j$ for all $j = 1,\dots,m$;
    \item $m \leqslant \width(f(\seq{a})) + k$;
    \end{itemize}
  \item\label{d:gppv:ms} 
    $\lseq{a} \gppv[k][l] \lseq[m]{b}$ if the following conditions are satisfied:
    \begin{itemize}
    \item $\lseq[m]{b} \eqi c_1 \append \cdots \append c_n$;
    \item $a_i \geqppv[k][l] c_i$ for all $i = 1, \dots, n$;
    \item $a_{i_0} \gppv[k][l] c_{i_0}$ for at least one $i_0 \in \{1, \dots, n\}$;
    \item $m \leqslant \width(\lseq{a}) + k$;
    \end{itemize}
  \end{enumerate}
  Here ${\geqppv[k][l]}$ denotes ${\gppv[k][l]} \cup {\eqi}$.  We write
  $\gppv[k]$ to abbreviate $\gppv[k][k]$.
\end{definition}
% We stress that $\gppv[k][l]$, in the spirit of the auxiliary order $\gsq$, lacks a case 
% $f(\seq{a}) \gppv[k][l] g(\seq{b})$ where $f \ep g$, i.e., it cannot handle recursive calls.
\noindent Recall that the auxiliary order $\gsq$ underlying $\POPSTAR$ is used to orient normal arguments in right-hand sides. 
Similar, the auxiliary order $\gppv[k][l]$ is to orient the predicative interpretations of this
normal arguments. We exemplify the order $\gppv[k][l]$ on the Example~\ref{ex:pint}.
\begin{example}\label{ex:gppv}
  Reconsider rule $\rlbl{2}$ from Example~\ref{ex:pint} where 
  in particular $f(\sn{\ms(x)}{y}) \gsq h(\sn{x}{})$.
  Define the precedence $\fn \sp \hn \sp \theconst$. 
  First recall that by definition of the operator $\append$ we have 
  $$\natToSeq{n} = \lst{\theconst~\cdots~\theconst} = \theconst\append\cdots\append \theconst \append \nil \append \cdots \append \nil$$ 
  with $n$ occurrences of $\theconst$ and $m$ occurrences of $\nil$.
  Using $n$ times $\theconst \eqi \theconst$ and $m$ times $\theconst \cppv{ialst}[k][l] \nil$ 
  we can thus prove $\natToSeq{n+m} \cppv{ms}[k][l] \natToSeq{n}$ for all $m \geqslant 1$ whenever $l \geqslant 2$.

  Let $k \geqslant 12$ be at least twice the size of the right-hand sides, 
  and consider a substitution $\ofdom{\sigma}{\Var \to \Val}$.
  To show $\intq(f(\sn{\ms(x\sigma)}{y\sigma})) \gppv[k] \intq(h(\sn{\ms(x\sigma)}{}))$ for $\intq \in \{\ints,\intn\}$, 
  we can even show the stronger property $\fn(\natToSeq{\depth(x\sigma)} + 1) \gppv[k][10] \intq(h(\sn{\ms(x\sigma)}{}))$
  since
  \begin{align*}
    \rlbl{1}: && \natToSeq{\depth(x\sigma) + 1} 
    & \cppv{ms}[k][7] \natToSeq{\depth(x\sigma)}  
    && \text{as $\depth(x\sigma) + 1 > \depth(x\sigma)$} 
    \\
    \rlbl{2}: && \fn(\natToSeq{\depth(x\sigma)} + 1) 
    & \cppv{ia}[k][8] \hn(\natToSeq{\depth(x\sigma)})  = \ints(h(\sn{x\sigma}{}))
    && \text{using $\fn \sp \hn$ and \rlbl{1}} 
    \\
    \rlbl{3}: && \fn(\natToSeq{\depth(x\sigma)} + 1) 
    & \cppv{ialst}[k][9] \lst{\hn(\depth(x\sigma))~\theconst} = \intn(h(\sn{x\sigma}{}))
    && \text{by \rlbl{2} and $\fn(\dots) \cppv{ia}[k][8] \theconst$}
  \end{align*}
\end{example}
We arrive at the definition of the full order $\gpopv[k][l]$.
\begin{definition}\label{d:gpopv} 
  Let $k,l \geqslant 1$.
  We define $\gpopv[k][l]$ inductively as the least extension of $\gppv[k][l]$ such that:
  \begin{enumerate}
  \item\label{d:gpopv:st}
    $f(\seq{a}) \gpopv[k][l] b$ if $a_i \geqpopv[k][l] b$ for some $i \in \{1,\dots,n\}$;
  \item\label{d:gpopv:ep}
    $f(\seq{a}) \gpopv[k][l] g(\seq[m]{b})$ if $f \ep g$ 
    and following conditions are satisfied: 
    \begin{itemize}
    \item $\mset{\seq{a}}~\mextension{\gpopv[k][l]}~\mset{\seq[m]{b}}$;
    \item $m \leqslant k$;
    \end{itemize}
  \item\label{d:gpopv:ialst} 
    $f(\seq{a}) \gpopv[k][l] \lseq[m]{b}$ 
    and following conditions are satisfied: 
    \begin{itemize}
    \item $f(\seq{a}) \gpopv[k][l-1] b_{j_0}$ for at most one $j_0 \in\{1,\dots,m\}$;
    \item $f(\seq{a}) \gppv[k][l-1] b_j$ for all $j \neq j_0$;
    \item $m \leqslant \width(f(\seq{a})) + k$;
    \end{itemize}
  \item\label{d:gpopv:ms} 
    $\lseq{a} \gpopv[k][l]  \lseq[m]{b}$
    and following conditions are satisfied: 
    \begin{itemize}
    \item $\lseq[m]{b} \eqi c_1 \append \cdots \append c_n$;
    \item $a_i \geqpopv[k][l] c_i$ for all $i = 1,\dots, n$; %% or $b_i = \lst{c}$ and $a_i \geqpopv[k][l] c$ for all $i \in [1, n]$,
    \item $a_{i_0} \gpopv[k][l] c_{i_0}$ for at least one $i_0 \in \{1,\dots, n\}$; %%or $b_i = \lst{c}$ and $a_i \gpopv[k][l] c$ or at least one $i_0 \in [1, n]$, and
    \item $m \leqslant \width(\lseq{a}) + k$;
    \end{itemize}
  \end{enumerate}
  Here ${\geqpopv[k][l]}$ denotes ${\gpopv[k][l]} \cup {\eqi}$. 
  We write $\gpopv[k]$ to abbreviate $\gpopv[k][k]$.
\end{definition}
% Observe $t \cpopv{ialst} \nil$ for all terms $t \not \in \VS$, 
% consequently $\lseq{t} \cpopv{ms} \nil$  if at least one term $t_i$ is ground.
%We continue with Example~\ref{ex:gppv}.
\begin{example}\label{ex:gpopv}
  Reconsider the rules from Example~\ref{ex:pint}, 
  and let $k \geqslant 12$.
  We consider only substitutions $\ofdom{\sigma}{\VS \to \Val}$.
%    & \rlbl{2}:~f(\sn{\ms(x)}{y}) & \to g(\sn{h(\sn{x}{})}{f(\sn{x}{y})})
  First consider a rewrite step $f(\sn{\mZ}{y\sigma}) \irew y\sigma$ due to rule \rlbl{1}. Exploiting the shape of $\sigma$, we 
  have 
  $\ints(f(\sn{\mZ}{y\sigma})) = \fn(\natToSeq{1}) \cpopv{ialst}[k] \nil = \ints(y\sigma)$
  and similar 
  $$
  \intn(f(\sn{\mZ}{y\sigma})) 
  = \lst{\fn(\natToSeq{1})} \append \natToSeq{\depth(y\sigma) + 1}
  \cpopv{ms} \natToSeq{\depth(y\sigma)} 
  = \intn(y\sigma) \tpkt
  $$
  Finally consider a rewrite step $f(\sn{\ms(x\sigma)}{y\sigma}) \irew g(\sn{h(\sn{x\sigma}{})}{f(\sn{x\sigma}{y\sigma})})$ caused by rule \rlbl{2}.
  This case is slightly more involved. Essentially we use $\cpopv{ep}[k][l]$ to orient the recursive call (proof step \rlbl{5}), 
  and $\cppv{ia}[k][l]$ for the remaining elements not containing $\fn$ (proof step \rlbl{6}).
  \begin{align*}
    \rlbl{4}: && \natToSeq{\depth(x\sigma) + 1} 
    & \cpopv{ms}[k][9] \natToSeq{\depth(x\sigma)}  
    % && \text{as $\depth(x\sigma) + 1 > \depth(x\sigma)$} 
    \\
    \rlbl{5}: && \fn(\natToSeq{\depth(x\sigma)} + 1)
    & \cpopv{ep}[k][9] \fn(\natToSeq{\depth(x\sigma)})
    && \text{by \rlbl{4}} 
    \\
    \rlbl{6}: && \fn(\natToSeq{\depth(x\sigma)} + 1)
    & \cppv{ia}[k][10] \gn(\intn(h(\sn{x\sigma}{})))
    && \text{using $\fn \sp \gn$ and \rlbl{3}} 
    \\
    \rlbl{7}: && \fn(\natToSeq{\depth(x\sigma)} + 1)
    & \cpopv{ialst}[k][11] \lst{\gn(\intn(h(\sn{x\sigma}{})))~\fn(\natToSeq{\depth(x\sigma)})}
    && \text{using \rlbl{5} and \rlbl{6}}
    \\
    &&& = \ints(g(\sn{h(\sn{x\sigma}{})}{f(\sn{x\sigma}{y\sigma})}))
    \\
    \rlbl{8}: && \intn(f(\sn{\ms(x\sigma)}{y\sigma})) & = 
    \lst{\fn(\natToSeq{\depth(x\sigma) + 1})}\append \natToSeq{\depth(y\sigma) + 1} \\
    &&& \cpopv{ialst}[k][k] \lst{\gn(\intn(h(\sn{x\sigma}{})))~\fn(\natToSeq{\depth(x\sigma)})}\append \natToSeq{\depth(y\sigma) + 1}
    && \text{using \rlbl{7}}
    \\
    &&& = \intn(g(\sn{h(\sn{x\sigma}{})}{f(\sn{x\sigma}{y\sigma})}))
  \end{align*}
\end{example}

% The side-conditions imposed on $\gpopv[k][l]$ allow us to estimate the length of right-hand sides in terms 
% the width of left-hand sides and the parameter $k$.
The next lemma collects some frequently used properties.
\begin{lemma}\label{l:approx}
  The following properties hold for all $k \geqslant 1$ and $a,b,c_1,c_2 \in \TLS$.
  \begin{enumerate}
  % \item\label{l:approx:bounds} $a \geqpopv[k] b$ implies $\len(b) \leqslant \width(a) + k$;
  \item\label{l:approx:kmon} ${\gppv[l]} \subseteq {\gpopv[l]} \subseteq {\gpopv[k]}$ for all $l \leqslant k$;
  \item\label{l:approx:modeqi} ${\eqi} \cdot {\gpopv[k]} \cdot {\eqi} \subseteq {\gpopv[k]}$;
  \item\label{l:approx:subseq} $a \gpopv[k] b$ implies ${c_1 \append a \append c_2} \gpopv[k] {c_1 \append b \append c_2}$.
  \end{enumerate}
\end{lemma}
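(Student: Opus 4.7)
The plan is to address the three items in order, noting that statements (2) and (3) should really be proved for the auxiliary order $\gppv[k]$ in parallel with $\gpopv[k]$, because the definitions of the two relations are mutually entangled through the recursive width/depth references $\gpopv[k][l-1]$ and $\gppv[k][l-1]$ inside Definitions~\ref{d:gppv} and~\ref{d:gpopv}.

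For (1), the first inclusion $\gppv[l] \subseteq \gpopv[l]$ is immediate: Definition~\ref{d:gpopv} declares $\gpopv[k][l]$ as the least extension of $\gppv[k][l]$. For the second inclusion, I would prove the stronger monotonicity assertion that both $\gppv[k][l] \subseteq \gppv[k'][l']$ and $\gpopv[k][l] \subseteq \gpopv[k'][l']$ hold whenever $k \leqslant k'$ and $l \leqslant l'$. The proof is a straightforward induction on $l$ with an inner induction on the derivation: every side-condition of the form $m \leqslant k$ or $m \leqslant \width(\cdot) + k$ is preserved when $k$ grows, and the recursive comparisons at depth $l-1$ are handled by the outer induction hypothesis.

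For (2), I would prove ${\eqi} \cdot {\gppv[k]} \cdot {\eqi} \subseteq {\gppv[k]}$ and ${\eqi} \cdot {\gpopv[k]} \cdot {\eqi} \subseteq {\gpopv[k]}$ simultaneously by induction on the derivation. The essential observations are that $\eqi$ (i) preserves width (since $\width$ uses only $\max$ and $\sum$, invariant under permutation), (ii) preserves function symbols up to $\ep$, and (iii) permutes elements of equivalent sequences without altering their length. Cases~\ref{d:gpopv:st} and~\ref{d:gppv:st} are discharged by rewriting the argument index via the permutation witnessing $\eqi$. For cases~\ref{d:gpopv:ep} and~\ref{d:gppv:ia}, I would use that the multiset extension is defined on $\eqi$-equivalence classes (see Section~\ref{s:basics}), so the multiset comparisons are automatically closed under $\eqi$. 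The sequence cases~\ref{d:gpopv:ms} and~\ref{d:gppv:ms} are handled by composing the $\eqi$-permutation on the right-hand side with the given decomposition $\lseq[m]{b} \eqi c_1 \append \cdots \append c_n$, obtaining a new decomposition suitable for the $\eqi$-permuted left-hand side.

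For (3), I would apply case~\cpopv{ms} of Definition~\ref{d:gpopv} directly to the extended sequences, splitting on the shape of $a$. Writing $\tolst(c_1) = \lst{e_1, \dots, e_p}$ and $\tolst(c_2) = \lst{f_1, \dots, f_q}$: if $a$ is a term, decompose the right-hand side as $\lst{e_1} \append \cdots \append \lst{e_p} \append \tolst(b) \append \lst{f_1} \append \cdots \append \lst{f_q}$, pair each $e_i$ and $f_j$ with its singleton via $\eqi$, and pair $a$ with $\tolst(b)$ via the hypothesis $a \gpopv[k] b$. If $a \in \LS$, then the hypothesis must originate from case~\cpopv{ms}, which already supplies a decomposition $\tolst(b) \eqi d_1 \append \cdots \append d_r$ that splices cleanly into the middle of the right-hand side. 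In either case the required width bound $m \leqslant \width(c_1 \append a \append c_2) + k$ follows because $c_1$ and $c_2$ contribute equally to both sides, while the assumption $a \gpopv[k] b$ already guarantees the relevant length control on $b$ in terms of $\width(a) + k$. The main obstacle I anticipate is purely combinatorial: in (3), carefully verifying the width bound in the mixed case where $a$ is a term but $b$ is a sequence (so the lengths on the two sides genuinely differ); and in (2), keeping track of how the permutation underlying $\eqi$ interacts with the sequence decomposition of case~\cpopv{ms}. Both difficulties are bookkeeping rather than mathematical in nature.
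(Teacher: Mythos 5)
Your proposal is correct and matches the paper's (very terse) proof: items (1) and (2) are dismissed there as ``standard reasoning'', and for item (3) the paper likewise applies clause $\cpopv{ms}[k]$ after checking the crucial width bound $\len(b \append c_2) \leqslant \width(a \append c_2) + k$, exactly the point you single out. The only cosmetic difference is that the paper first establishes the one-sided instance $a \append c_2 \gpopv[k] b \append c_2$ and then recovers the two-sided statement via the $\eqi$-closure of item (2), whereas you construct the decomposition for $c_1 \append a \append c_2$ directly.
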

\begin{proof}
  The first two properties follow by standard reasoning. 
  For the final property on proves $a \append c_2 \cpopv{ms}[k] b \append c_2$
  by case analysis on the assumption $a \gpopv[k] b$. Crucially, $\len(b \append c_2)$ is bounded 
  by $\width(a \append c_2) + k$ as required in $\cpopv{ms}[k]$. 
  The general property is then an easy consequence from Property~\ref{l:approx:modeqi}.
\end{proof}

Following \cite{AM05} we define $\Slow[k]$ that measures the 
$\gpopv[k]$-descending lengths on sequences. To simplify matters, 
we restrict the definition of $\Slow[k]$ to ground sequences.
As images of predicative interpretations are always ground, this suffices for our purposes.
\begin{definition}
We define $\ofdom{\Slow[k]}{\GTLS \to \N}$
as
\begin{equation*}
  \Slow[k](a) \defsym 1 + \max
 \{ \Slow[k](b) \mid b \in \GTLS \text{ and } a \gpopv[k] b
 \}\tpkt
\end{equation*}
\end{definition}
\noindent Note that due to Lemma~\eref{l:approx}{modeqi}, $\Slow[k](a) = \Slow[k](b)$ 
whenever $a \eqi b$.
Sequences are intended to act purely as a container, 
not contributing to $\Slow[k]$ themselves.
The next lemma confirms our intention, 
exploiting that conceptually clause $\cpopv{ms}[k]$ amounts to a product-wise extension of $\gpopv[k]$ to sequences.
% $\lseq{a} \gpopv b_1 \append \cdots \append b_n$
% if $a_i \geqpopv b_{i}$ for all $i = 1,\dots,n$ (where at least one inequality is strict).
% % As a consequence, the maximal length of $\gpopv[k]$-descending sequences from $\lseq{a}$ 
% % the sum of the maximal lengths of $\gpopv[k]$-descending sequences starting from the elements.
\begin{lemma}\label{l:slowsum} 
  For $\lseq{a} \in \GLS$ it holds that $\Slow[k](\lseq{a}) = \sum_{i=1}^n \Slow[k](a_i)$.
\end{lemma}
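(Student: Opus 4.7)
The plan is to establish $\Slow[k](\lseq{a}) = \sum_{i=1}^n \Slow[k](a_i)$ by induction on $\Slow[k](\lseq{a})$. The central tool is clause $\cpopv{ms}[k]$ of Definition~\ref{d:gpopv}, which acts as a component-wise extension of $\gpopv[k]$ to sequences, together with the $\eqi$-invariance of $\Slow[k]$ from Lemma~\ref{l:approx}(2) and the fact, via Lemma~\ref{l:approx}(3), that any descent of a constituent lifts to a descent of the surrounding sequence.

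For the lower bound I pick an index $i$ and a maximal descent $a_i \gpopv[k] c_i$ realising $\Slow[k](a_i) = 1 + \Slow[k](c_i)$. Setting $c_j \defsym a_j$ for $j \neq i$, clause $\cpopv{ms}[k]$ gives a single descent $\lseq{a} \gpopv[k] c_1 \append \cdots \append c_n$; the required width bound $m \leqslant \width(\lseq{a}) + k$ is inherited from the $\cpopv{ialst}[k]$-step that produced $c_i$ whenever $c_i$ is a sequence, and is trivial otherwise. The target has strictly smaller $\Slow[k]$, so the inductive hypothesis, applied to the full concatenation and, when necessary, to $c_i$ itself, yields $\Slow[k](c_1 \append \cdots \append c_n) = \sum_{j \neq i} \Slow[k](a_j) + \Slow[k](c_i)$, whence $\Slow[k](\lseq{a}) \geqslant \sum_j \Slow[k](a_j)$ after accounting for the initial step.

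For the upper bound I consider any descent $\lseq{a} \gpopv[k] \lseq[m]{b}$. Clause $\cpopv{ms}[k]$ supplies $c_1, \dots, c_n$ with $\lseq[m]{b} \eqi c_1 \append \cdots \append c_n$, $a_i \geqpopv[k] c_i$ for every $i$, and a strict inequality at some $i_0$. By $\eqi$-invariance and the inductive hypothesis, applied once again both to the concatenation and to any sequence-valued $c_i$, one has $\Slow[k](\lseq[m]{b}) = \sum_i \Slow[k](c_i)$. Since $a_i \geqpopv[k] c_i$ entails $\Slow[k](c_i) \leqslant \Slow[k](a_i)$, with strict inequality at $i_0$, summation yields $\Slow[k](\lseq[m]{b}) \leqslant \sum_i \Slow[k](a_i) - 1$, and maximising over $\lseq[m]{b}$ in the defining formula for $\Slow[k](\lseq{a})$ delivers the upper bound. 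The main delicate point is justifying the nested appeals to the inductive hypothesis when some $c_i$ is itself a list: the required fact $\Slow[k](c_i) < \Slow[k](\lseq{a})$ follows by lifting any descending chain from $c_i$ into the surrounding concatenation via Lemma~\ref{l:approx}(3), whereupon $\Slow[k](c_i) \leqslant \Slow[k](\lseq[m]{b}) < \Slow[k](\lseq{a})$ as required.
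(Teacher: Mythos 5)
Your proof is correct and follows essentially the same route as the paper: the upper-bound half (decomposing a $\cpopv{ms}[k]$-descent of the sequence into component-wise descents, lifting chains of the $c_i$ into the surrounding concatenation to justify the nested appeals to the induction hypothesis) is the paper's argument almost verbatim. The only difference is organisational: for the lower bound the paper concatenates two full maximal chains via Lemma~\ref{l:approx}(3) to get superadditivity of $\Slow[k]$ under $\append$ and then inducts on $n$, whereas you peel off a single step in one component and invoke the same induction hypothesis on $\Slow[k]$ — both rest on the same lemma and yield the same bound.
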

\begin{proof}
  Let $a = \lseq{a} \in \GLS$.
  We first show $\Slow[k](a) \geqslant \sum_{i=1}^n \Slow[k](a_i)$.
  Let $b,c \in \GTLS$ and consider maximal sequences
  $b \gpopv[k] b_1 \gpopv[k] \cdots \gpopv[k] b_o$ and
  $c \gpopv[k] c_1 \gpopv[k] \cdots \gpopv[k] c_p$.
  Using Lemma~\eref{l:approx}{subseq} repeatedly we get
  $b \append c \gpopv[k] b_1 \append c \gpopv[k] \cdots \gpopv[k] b_o \append c$, 
  similar
  $c \append b_o \gpopv[k] c_1 \append b_o \gpopv[k] \cdots \gpopv[k] c_p \append b_o$.
  Since $b_o \append c \eqi c \append b_o$ and employing
  Lemma~\eref{l:approx}{subseq} we see 
  $\Slow[k](b \append c) \geqslant \Slow[k](b) + \Slow[k](c)$
  for all $b,c \in \GTLS$. We conclude
  $
  \Slow[k](a) =  \Slow[k](a_1 \append \cdots \append a_n) \geqslant \sum_{i=1}^n \Slow[k](a_i)
  $ 
  with a straight forward induction on $n$.

  It remains to verify $\Slow[k](a) \leqslant \sum_{i=1}^n \Slow[k](a_i)$.
  For this we show that $a \gpopv[k] b$ implies $\Slow[k](b) < \sum_{i=1}^n \Slow[k](a_i)$
  by induction on $\Slow[k](a)$.
  Consider the base case $\Slow[k](a) = 0$.
  Since $a$ is ground it follows that $a = \nil$, the claim is trivially satisfied.
  For the inductive step $\Slow[k](a) > 1$, 
  let $a \gpopv[k] b$.
  Since $a$ is a sequence, $a \cpopv{ms} b$.
  Hence $b \eqi b_1 \append \cdots \append b_n$ where $a_i \geqpopv[k] b_i$ 
  and thus $\Slow[k](b_i) \leqslant \Slow[k](a_i)$ for all $i =1,\dots,n$.
  Additionally $a_{i_0} \gpopv[k] b_{i_0}$ and hence $\Slow[k](b_{i_0}) < \Slow[k](a_{i_0})$ 
  for at least one $i_0 \in \{1,\dots,n\}$.
  As in the first half of the proof, one verifies $\Slow[k](b_i) \leqslant \Slow[k](b)$
  for all $i = 1,\dots,n$.
  Note $\Slow[k](b) < \Slow[k](a)$ as $a \gpopv[k] b$, 
  hence induction hypothesis is applicable to $b$ and all $b_i$ ($i = 1,\dots,n$).
  It follows that 
  \begin{align*}
    \Slow[k](b) = \sum_{c \in b} \Slow[k](c) 
    = \sum_{i=1}^n \sum_{c \in b_i} c 
    = \sum_{i=1}^n \Slow[k](b_i) 
    < \sum_{i=1}^n \Slow[k](a_i) \tpkt
  \end{align*}
  This concludes the second half of the proof.
\end{proof}

% Recall that
% the predicative interpretation $\ints$ maps
% basic terms $f(\sn{\seq{u}}{\vec{v}})$ to the singleton sequence $\lst{\fn(\NM{u_1}, \dots, \NM{u_n})}$,
% where $\NM{u_i} = \lst{\theconst~\cdots~\theconst}$ with $\norm{u_i} = \depth(u_i)$ occurrences 
% of the constant $\theconst$ ($i = 1,\dots,n$). 
The central theorem of this section states that $\Slow[k](f(a_1, \dots, a_n))$
is polynomial in $\sum_{i}^n \Slow[k](a_i)$, 
where the polynomial bound depends only on $k$ and the rank $p$ of $f$.
The proof of this is rather involved.
To cope with the multiset comparison underlying $\cpopv{ep}[k]$,
we introduce as a first step an \emph{order-preserving} extension $\MSlow{n}{k}$ of $\Slow[k]$ to multisets of sequences, 
in the sense that $\MSlow{n}{k}(\seq{a}) > \MSlow{m}{k}(\seq[m]{b})$ holds 
whenever $\mset{\seq[n]{a}} \mextension{\gpopv[k]} \mset{\seq[m]{b}}$ (provided $k \geqslant m,n$, c.f. Lemma~\ref{d:mslow}).
As the next step toward our goal, we estimate $\Slow[k](f(a_1,\dots,a_n))$ in terms 
of $\MSlow{n}{k}(\seq{a})$
whenever $n \leqslant k$ and $\rk(f) \leqslant k$.
Technically we bind following functions by polynomials $q_{k,p}$. 
\begin{definition}
  For all $k,p \in \N$ with $k \geqslant 1$
  we define $\ofdom{\Fpop{k}{p}}{\N \to \N}$ as 
  \begin{multline*}
    \Fpop{k}{p}(m) \defsym \max\{ \Slow(f(\seq{a})) \mid \\
    f(\seq{a}) \in \GTS,
    ~\rk(f) \leqslant p,~n \leqslant k
    \text{ and } \MSlow{n}{k}(a_1,\dots,a_n) \leqslant m \} \tpkt
  \end{multline*}
\end{definition}
Noting that also $\MSlow{n}{k}(\seq{a})$ is polynomial in $\max_{i=1}^n\Slow[k](a_i)$, say $q_k$, which depends only on $k$,  
we obtain $\Slow(f(\seq{a})) \leqslant q_{k,p}(q_{k}(\max_{i=1}^n \Slow[k](a_i)))$ whenever $k \geqslant n$.

\medskip

The definition of $\MSlow{n}{k}$ is defined in terms of an order-preserving homomorphism from $\msetover(\N)$ to $\N$. 
To illustrate the construction carried out below, consider the following example.
\begin{example}
  Let $k \geqslant 1$ and 
  let $c > m_1 \geqslant \dots \geqslant m_k$ be natural numbers in descending order, dominated by $c \in \N$.
  Consider multisets $\msetover(\N)$ of size $k$.
  % Consider the following set of ordering constraints on multisets $\msetover(\N)$ of size $k$:
  % \begin{align*}
  %   \mset{m_1,\dots,m_i+1,0,\dots,0} & \mextension{>} \mset{m_1,\dots,m_i,m_i,\dots,m_i} && \text{ for all $i = 1,\dots,k$.}
  %   % \mset{m_1,m_2,\dots,m_{k-2},m_{k-1},m_{k}+ 1} & \mextension{>} \mset{m_1,m_2,\dots,m_{k-2},m_{k-1},m_{k}} \\
  %   % \mset{m_1,m_2,\dots,m_{k-2},m_{k-1} + 1,0} & \mextension{>} \mset{m_1,m_2,\dots,m_{k-2},m_{k-1},m_{k-1}} \\
  %   % \mset{m_1,m_2,\dots,m_{k-2}+1,0,0} & \mextension{>} \mset{m_1,m_2,\dots,m_{k-2},m_{k-2},m_{k-2}} \\
  %   % & \quad\vdots \\
  %   % \mset{m_1 + 1,0,\dots,0,0,0} & \mextension{>} \mset{m_1,m_1,\dots,m_1,m_1,m_1}
  %   % \m{f}_k(a_1, \dots,a_{k-2},a_{k-1},\theconst \append a_k) & \cpopv{ep}[k] \m{f}_k(a_1, \dots,a_{k-2},a_{k-1},a_k) \\
  %   % \m{f}_k(a_1, \dots,a_{k-2},\theconst \append a_{k-1}, \nil) & \cpopv{ep}[k] \m{f}_k(a_1, \dots,a_{k-2},a_{k-1},a_{k-1}) \\
  %   % & \vdots \\
  %   % \m{f}_k(\theconst \append a_1, \nil, \dots, \nil,\nil,\nil) & \cpopv{ep}[k] \m{f}_k(a_1, \dots,a_1,a_1,a_1)
  % \end{align*}
  If we conceive such multisets as base-$c$ representations of numbers using $k$ digits, then we can form
  a chain $M_1 \mextension{>} M_2 \mextension{>} \dots$ that can be understood as a decreasing counter
  that wrongly wraps from $\{m_1,\dots, m_i+1,0,\dots,0\}$ to $\{m_1,\dots, m_i,m_i,\dots,m_i\}$.
  It is not difficult to prove that the maximal length of such a chain is bounded by
  \begin{align*}
  \sum_{m_2=1}^{m_1} \dots \sum_{m_{k-1}=1}^{m_{k-2}} \sum_{m_k=1}^{m_{k-1}} m_k 
  ~\in~\sum_{m_2=1}^{m_1} \dots \sum_{m_{k-1}=1}^{m_{k-2}} \Omega( m_{k-1}^2 )
  ~\in~\Omega( m_1^k ) \tpkt    
  \end{align*}
\end{example}
We now show that this upper bound serves also as a lower bound for 
multisets $\msetover(\N)$ of size $n \leqslant k$.
As in the example,
the function $\ofdom{\homo{n}{k}{c}}{\N^l \to \N}$ (where $n \leqslant k$) defined below interprets  
multisets $M \in \msetover(\N)$ as natural numbers encoded in base-$c$ with $k$ digits,
where the \nth{$i$} largest $m_i \in M$ represents the \nth{$i$} most significant digit.
% \subsection{An Order-preserving Map from $\msetover(\GLS)$ to $\N$}
% The definition of $\MSlow{n}{k}$ is defined in terms of an order-preserving homomorphism from $\msetover(\N)$ to $\N$. 
Formally, for $k\geqslant n \in \N$ and $c \in \N$ we define the family of functions $\ofdom{\homo{n}{k}{c}}{\N^l \to \N}$ 
such that 
\begin{align*}
  \homo{n}{k}{c}(\seq[n]{m}) = \sum_{i=1}^n \sort{n}(\seq[n]{m},i) \cdot c^{(k - i)} \tpkt
\end{align*}
Here $\sort{n}(\seq[n]{m},i)$ denote the \nth{$i$}\ element of $\seq[n]{m}$ sorted in descending order, i.e., 
$\sort{n}(\seq[m]{n},i) \defsym m_{\pi(i)}$ 
for $i =1,\dots,n$ and some permutation $\pi$ such that $m_{\pi(i)} \geqslant m_{\pi(i+1)}$ ($i \in \{1,\dots,n-1\}$).
% Essentially, $\homo{n}{k}{c}$ defines a polynomially bounded map that embeds the multiset extension $\mextension{>}$ of the natural ordering $>$
% into $>$ for sufficiently high $n$.
\begin{lemma}\label{l:homo}
  Let $k,n,c \in \N$ such that $k \geqslant 1$ and $k \geqslant n$. Then for all $\seq[l]{n} \in \N$ we have:
  \begin{enumerate}
  % \item\label{l:homo:1} $\homo{l}{k}{c}(\seq[l]{n}) = \homo{l}{k}{c}(n_{\pi(1)},\dots,n_{\pi(l)})$ 
  %   for all permutations $\pi$ on $\{1,\dots,l\}$;
  % \item\label{l:homo:2} $c \geqslant 1$ implies $\homo{l}{k}{c}(\seq[l]{n}) \geqslant n_i$;
  % \item\label{l:homo:3} $c_1 \geqslant c_2$ implies $\homo{l}{k}{c_1}(\seq[l]{n}) \geqslant \homo{l}{k}{c_2}(\seq[l]{n})$;
  \item\label{l:homo:4} $c > \max\set{\seq[l]{n}}$ implies $c^{n} > \homo{l}{n}{c}(\seq[n]{m})$.
  \item\label{l:homo:5} $\mset{\seq[n]{m}} \mextension{>} \mset{\seq[n']{m'}}$ implies $\homo{n}{k}{c}(\seq[n]{m}) > \homo{n'}{k}{c}(\seq[n']{m'})$
    for all $c > \seq[n]{m} \geqslant 1$.
  \end{enumerate}
\end{lemma}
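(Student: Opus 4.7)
My plan for Lemma~\ref{l:homo} is to treat the two claims separately, both exploiting that $\homo{n}{k}{c}$ is precisely the base-$c$ valuation of its arguments once they are sorted in descending order and placed in the $k$ leading digit positions.

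For the first claim, I would unfold the definition of $\homo{l}{n}{c}$ and bound every sorted coordinate $\sort{l}(n_1,\dots,n_l,i)$ by $c-1$, since by assumption each $n_j$ is strictly less than $c$. A routine geometric-series computation then yields
\begin{align*}
  \homo{l}{n}{c}(n_1,\dots,n_l) \;\leqslant\; (c-1)\sum_{i=1}^{l} c^{\,n-i} \;=\; c^{n} - c^{\,n-l} \;<\; c^{n},
\end{align*}
with the degenerate case $c = 1$ handled separately by noting that $c > n_j$ forces every $n_j$ to be $0$.

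For the second claim, I would start from the well-known equivalent formulation of the multiset order: $M \mextension{>} M'$ holds precisely when there is a largest value $v^{\ast}$ at which the multiplicities of $M$ and $M'$ differ, and at that value $M$ has strictly more copies than $M'$. Sorting both multisets in descending order and padding with trailing zeros to length $k$ (this is legitimate because $k \geqslant n, n'$ is built into the definition of $\homo{\cdot}{k}{c}$), I would locate the first index $q$ at which the two sorted sequences differ. The analysis of $v^{\ast}$ then forces $M$'s $q$-th entry to strictly exceed that of $M'$. Reading the two values $\homo{n}{k}{c}(M)$ and $\homo{n'}{k}{c}(M')$ as $k$-digit base-$c$ numerals with most significant digit $c^{\,k-1}$, the contribution at position $q$ is at least $c^{\,k-q}$ in favor of $M$, while the total contribution of positions $i > q$ is bounded in absolute value by $\sum_{i > q}(c-1)\,c^{\,k-i} = c^{\,k-q} - 1$; the net difference is therefore strictly positive. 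This is exactly the standard argument that the base-$c$ encoding preserves lexicographic order on tuples whose entries lie in $\{0,\dots,c-1\}$.

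The main obstacle I anticipate is the bookkeeping associated with the asymmetry of sizes, namely $n \neq n'$ in general, together with the role of the assumption $m_i \geqslant 1$. The padding zeros inserted on either side must not disrupt the identification of $v^{\ast}$: since $v^{\ast} \geqslant 1$ (as it occurs in $M$ with positive count and all elements of $M$ are $\geqslant 1$), every padding zero is $\leqslant v^{\ast}$, hence contributes neither to the count of values strictly above $v^{\ast}$ nor to the count at $v^{\ast}$. Likewise, the elements of $M'$ are automatically bounded by $c$: any element of $M'$ is either inherited from $M$ (hence $< c$) or comes from some $y \in Y$ dominated by an $x \in X \subseteq M$, in which case $y < x < c$. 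These observations justify applying the base-$c$ argument uniformly to both padded sequences, after which the proof collapses into the geometric-series calculation already used for part~(1).
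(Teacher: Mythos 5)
Your argument is correct, and it is exactly the base-$c$ positional-numeral argument that the paper itself sketches informally in the example preceding the lemma (the paper states Lemma~\ref{l:homo} without a formal proof, treating it as routine). In particular you correctly identify the two points that make the statement non-trivial -- that elements of the smaller multiset are bounded by $c$ because each is dominated by an element of the larger one, and that the hypothesis $m_i \geqslant 1$ is what keeps the padding zeros from colliding with the maximal differing value $v^{\ast}$ -- so nothing is missing.
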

The mapping $\MSlow{n}{k}$ is obtained by extend $\homo{l}{k}{\cdot}$ to multisets over $\GTLS$.
\begin{definition}\label{d:mslow}
  Let $k,n \in \N$ such that $k \geqslant n$.
  We define $\ofdom{\MSlow{n}{k}}{\GLS^n \to \N}$ as 
  $$
  \MSlow{n}{k}(\seq[n]{a}) \defsym \homo{n}{k}{c}(\Slow[k](a_1),\dots,\Slow[k](a_n))
  $$
  where $c = 1 + \max \set{\Slow[k](a_i) \mid i \in \{1,\dots,n\}}$.
\end{definition}
\noindent By Lemma~\eref{l:homo}{4}, $\MSlow{l}{k}(\seq[l]{a})$ is polynomially bounded 
in $\Slow[k](a_i)$ ($i = 1,\dots,l$).
By Lemma~\eref{l:homo}{5} we obtain that 
$\MSlow{l}{k}$ is indeed order preserving as outlined above.
\begin{lemma}\label{l:slowpoly}
  Let $\seq{a},\seq[m]{b}\in \GTLS$ and let $k \geqslant m,n$. Then
  $$
  \mset{\seq{a}} \mextension{\gpopv[k]} \mset{\seq[m]{b}} 
  \quad \IImp \quad \MSlow{n}{k}(\seq{a}) > \MSlow{m}{k}(\seq[m]{b}) \tpkt
  $$
\end{lemma}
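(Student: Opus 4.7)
The plan is to transport the multiset comparison from $\GLS$ to $\N$ via $\Slow[k]$, apply Lemma~\eref{l:homo}{5} to the resulting numerical multisets, and then bridge the fact that $\MSlow{n}{k}(\seq{a})$ and $\MSlow{m}{k}(\seq[m]{b})$ are computed with respect to potentially different bases.

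First I would lift the hypothesis to $\N$: from $\mset{\seq{a}} \mextension{\gpopv[k]} \mset{\seq[m]{b}}$ one obtains $\mset{\Slow[k](a_1),\dots,\Slow[k](a_n)} \mextension{>} \mset{\Slow[k](b_1),\dots,\Slow[k](b_m)}$ in $\msetover(\N)$. This is just unpacking: $\Slow[k]$ is constant on $\eqi$-classes (Lemma~\eref{l:approx}{modeqi} together with the definition of $\Slow[k]$) and strictly decreases under $\gpopv[k]$ (directly by its definition), so the equivalence-class formulation of $\mextension{\gpopv[k]}$ projects to a standard strict multiset extension on naturals.

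Next I would compare the bases $c_a \defsym 1 + \max_i \Slow[k](a_i)$ and $c_b \defsym 1 + \max_j \Slow[k](b_j)$ used to form $\MSlow{n}{k}(\seq{a})$ and $\MSlow{m}{k}(\seq[m]{b})$. Every $b_j$ is either $\eqi$-equivalent to some $a_i$, or strictly dominated by some $a_i$ under $\gpopv[k]$, so $\Slow[k](b_j) \leqslant \max_i \Slow[k](a_i)$ in either case, whence $c_b \leqslant c_a$. Since $\Slow[k](a_i) \geqslant 1$ by the $1+$ in its defining equation, the hypotheses $c > m_i \geqslant 1$ of Lemma~\eref{l:homo}{5} are satisfied at $c = c_a$, yielding
\begin{equation*}
  \homo{n}{k}{c_a}(\Slow[k](a_1),\dots,\Slow[k](a_n)) > \homo{m}{k}{c_a}(\Slow[k](b_1),\dots,\Slow[k](b_m)) \tpkt
\end{equation*}

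To finish I would exploit that $k \geqslant m$ makes every exponent $k - i$ in $\homo{m}{k}{\cdot}$ nonnegative, and since the coefficients $\Slow[k](b_j) \geqslant 1$ are positive, $\homo{m}{k}{c}(\Slow[k](b_1),\dots,\Slow[k](b_m))$ is weakly monotone in $c$. Using $c_a \geqslant c_b$ this gives $\homo{m}{k}{c_a}(\dots) \geqslant \homo{m}{k}{c_b}(\dots) = \MSlow{m}{k}(\seq[m]{b})$, which chained with the strict inequality above delivers the claim. The main obstacle is precisely this base-dependence: source and target of the desired inequality live, as written, in different bases and cannot be compared directly by one invocation of Lemma~\eref{l:homo}{5}. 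All the real work lies in establishing $c_b \leqslant c_a$ and invoking monotonicity of $\homo{}{}{c}$ in $c$ to wash out the discrepancy.
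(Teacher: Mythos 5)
Your proof is correct and follows the route the paper intends: the paper states this lemma without an explicit proof, remarking only that order preservation follows from Lemma~\eref{l:homo}{5} once the hypothesis is projected through $\Slow[k]$ (which is $\eqi$-invariant and strictly decreasing along $\gpopv[k]$) to a strict multiset decrease over $\N$. You supply the one detail that terse justification glosses over, namely that $\MSlow{n}{k}(\seq{a})$ and $\MSlow{m}{k}(\seq[m]{b})$ are computed in different bases $c_a$ and $c_b$; your observations that $c_b \leqslant c_a$ (every $\Slow[k](b_j)$ is bounded by some $\Slow[k](a_i)$) and that $\homo{m}{k}{c}(\dots)$ is weakly monotone in $c$ because $k \geqslant m$ keeps all exponents nonnegative close that gap correctly.
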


In Theorem~\ref{t:pop} below we prove $\Fpop{k}{p}(m) \leqslant c_{k,p} \cdot {(m+2)}^{d_{k,p}}$,
where the constants $c_{k,p},d_{k,p} \in \N$ are defined as follows:
$d_{k,0} \defsym k+1$ and $d_{k,p+1} \defsym {(d_{k,p})}^{k}+1$;
further we set $c_{k,0} \defsym k^k$ and $c_{k,p+1} \defsym {(c_{k,p} \cdot k)}^e$ 
where $e = {\sum_{i=0}^k {(k \cdot d_{k,p})}^i}$. 
Inevitably the proof of Theorem~\ref{t:pop} is technical, 
the reader is advised to skip the formal proof on the first read.
Theorem~\ref{t:pop} is proven by induction on $p$ and $m$.
Consider term $f(\seq{a})$ with $k \geqslant n$ and $\MSlow{n}{k}(\seq{a}) \leqslant m$.
At the heart of the proof, we have to show that 
$c_{k,p} \cdot {(m+2)}^{d_{k,p}} > \Slow[k](b)$ for arbitrary $b$ with $f(\seq{a}) \gpopv[k] b$. 
% The most involved case is $f(\seq{a}) \cpopv{ialst}[k] b$ for $b = \lseq[o]{b}$.
% Lemma~\ref{l:slowsum} gives $\Slow[k](b) = \sum_{i=1}^o \Slow[k](b_i)$.
% where in particular the left-hand side is compared to all but one element, say $b_{j_0}$, 
% with respect to the auxiliary order $\gppv[k][l-1]$. 
% Using induction hypothesis either on $p$ or on $m$ gives an appropriate 
% bound on $\Slow[k]{b_{j_0}}$. \
The most involved case is $f(\seq{a}) \cpopv{ialst}[k] b$ for $b = \lseq[o]{b}$.
Here it is fundamental to give precise bounds on the elements $b_j$ 
with $f(\seq{a}) \gppv[k][l] b_j$.
Since $\gppv[k][l]$ constraints $b_j$ to only contain symbols ranked below $\rk(f) = p$ in the precedence, 
conceptually $\Slow[k](b_j)$ is bounded by iterated application of the induction hypothesis on $p$.
Since $l$ essentially controls the depth of $b_j$ (compare Example~\ref{ex:gpopv}), 
$l$ serves as a bound on the number of iterations.
To properly account for all cases of $\gppv[k][l]$, matters get slightly more involved.
To bind $\Slow[k](b_j)$ sufficiently, 
we define for $l \in \N$ a family of auxiliary functions 
$\ofdom{g_{l,k,p}}{\N \to \N}$ such that 
\begin{align*}
  g_{l,k,p}(m) & = 
  \begin{cases}
    k^l \cdot m^l & \text {if $p = 0$ or $l = 0$, and} \\
    c_{k,p-1} \cdot (m \cdot g_{l-1,k,p}(m))^{k \cdot d_{k,p-1}} & \text{otherwise.}
  \end{cases}
\end{align*}
Having as premise the induction hypothesis (on $p$) of the main proof, the next lemma verifies that 
$g_{l,k,\rk(f)}(m+2)$ sufficiently binds $\gppv[k][l]$-descendants of $f(\seq{a})$.
\begin{lemma}\label{l:pop:aux}
  Let $f(\seq{a})  \in \GTS$.
  Let $k \geqslant n$ and 
  $m \geqslant \MSlow{n}{k}(\seq{a})$.
  Suppose $\Fpop{k}{p}(m') \leqslant c_{k,p} {(m'+2)}^{d_{k,p}}$ for all $p < \rk(f)$ and $m'$.
  Then $f(\seq{a}) \gppv[k][l] b$ implies $\Slow(b) \leqslant g_{k,l,\rk(f)}(m+2)$
  for all $b \in \GTLS$.
\end{lemma}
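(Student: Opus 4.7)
I proceed by induction on $l$, with case analysis on the derivation of $f(\seq{a}) \gppv[k][l] b$. Write $p = \rk(f)$ and abbreviate $G = g_{l-1,k,p}(m+2)$.

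Two preparatory observations will be used throughout. First, $\MSlow{n}{k}(\seq{a})$ dominates each individual $\Slow[k](a_i)$---the leading base-$c$ digit in $\homo{n}{k}{c}$ is $\max_i \Slow[k](a_i)$---so $\Slow[k](a_i) \leqslant m$ for every $i$. Second, a straightforward induction on $a \in \GTLS$ shows $\width(a) \leqslant \Slow[k](a)$: for a sequence we sum using Lemma~\ref{l:slowsum}, and for a term $f(\seq{a})$ we use clause~\ref{d:gpopv:st} of Definition~\ref{d:gpopv} to obtain $f(\seq{a}) \gpopv[k] a_i$, hence $\Slow[k](a_i) < \Slow[k](f(\seq{a}))$. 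Combined, these observations yield $\width(f(\seq{a})) \leqslant m + 1$.

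For the case analysis, case~\ref{d:gppv:st} gives $a_i \geqppv[k][l] b$ for some $i$; since $\gppv[k][l] \subseteq \gpopv[k]$ by Lemma~\ref{l:approx}, we get $\Slow[k](b) \leqslant \Slow[k](a_i) \leqslant m$, and a direct inspection of the recursion for $g$ confirms $m + 2 \leqslant g_{l,k,p}(m+2)$. In case~\ref{d:gppv:ia}, $b = g(b_1,\dots,b_{m_0})$ with $f \sp g$ (so $\rk(g) \leqslant p - 1$), $m_0 \leqslant k$, and $f(\seq{a}) \gppv[k][l-1] b_j$ for each $j$. The inner induction hypothesis gives $\Slow[k](b_j) \leqslant G$; Lemma~\eref{l:homo}{4} then bounds $\MSlow{m_0}{k}(b_1,\dots,b_{m_0}) < (1+G)^k$, and the outer hypothesis, applied at rank $\rk(g)$, produces
\[
\Slow[k](b) \;\leqslant\; c_{k,p-1} \cdot \bigl((1+G)^k + 2\bigr)^{d_{k,p-1}} .
\]
In case~\ref{d:gppv:ialst}, $b = \lseq[m_0]{b}$ with $m_0 \leqslant \width(f(\seq{a})) + k \leqslant m + k + 1$ and $f(\seq{a}) \gppv[k][l-1] b_j$ for each $j$; the inner induction hypothesis together with Lemma~\ref{l:slowsum} gives
\[
\Slow[k](b) \;=\; \sum_{j=1}^{m_0} \Slow[k](b_j) \;\leqslant\; (m + k + 1) \cdot G .
\]

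What remains in cases~\ref{d:gppv:ia} and~\ref{d:gppv:ialst} is a purely arithmetic comparison of the derived upper bound with $g_{l,k,p}(m+2) = c_{k,p-1} \cdot ((m+2) \cdot G)^{k \cdot d_{k,p-1}}$, and this is the main obstacle. The constants $c_{k,p}$ and $d_{k,p}$ were chosen with exactly these inequalities in mind: the large exponent $\sum_{i=0}^{k} (k \cdot d_{k,p-1})^i$ appearing in the definition of $c_{k,p}$ is precisely what is needed to absorb the $2^{(k+1)d_{k,p-1}}$-type blow-up that arises from expanding $((1+G)^k + 2)^{d_{k,p-1}}$. Once this is set up, both verifications reduce to elementary polynomial estimates, with the edge cases (constants $g$ at $l = 1$, and $m$ small) handled by the fact that $c_{k,0} = k^k$ already provides ample slack.
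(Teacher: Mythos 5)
Your proposal sets up exactly the induction the paper uses (outer hypothesis on the rank, inner induction on $l$, case analysis on the last clause applied), and the intermediate bounds you derive in the three cases --- $\Slow[k](b) \leqslant m$ for $f(\seq{a}) \cppv{st}[k][l] b$, the bound $c_{k,p-1}\cdot\bigl((1+G)^k+2\bigr)^{d_{k,p-1}}$ with $G = g_{l-1,k,p}(m+2)$ for $f(\seq{a}) \cppv{ia}[k][l] b$, and a bound of the form $(m+k+1)\cdot G$ for $f(\seq{a}) \cppv{ialst}[k][l] b$ --- coincide with the paper's. But you stop at the point you yourself call ``the main obstacle'': verifying that these quantities are at most $g_{l,k,p}(m+2)$. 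That comparison is the actual content of the lemma, and the reason you offer for why it should close is not the mechanism that makes it close. The exponent $e=\sum_{i=0}^{k}(k\cdot d_{k,p})^{i}$ in the definition of $c_{k,p+1}$ plays no role here; it is needed only later, in the proof of Theorem~\ref{t:pop}, to bound the \emph{closed form} of $g_{l,k,p}$. What closes the case $f \sp g$ is the recursive definition $g_{l,k,p}(m+2)=c_{k,p-1}\cdot\bigl((m+2)\cdot G\bigr)^{k\cdot d_{k,p-1}}$ itself: one only needs $(1+G)^k+2\leqslant\bigl((m+2)\cdot G\bigr)^k$, which follows from $(1+G)^k+2\leqslant(G+3)^k\leqslant\bigl((m+2)\cdot G\bigr)^k$ using $k\geqslant 1$ and $G\geqslant 2$.

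The sequence case also needs the subcase $\rk(f)=0$ treated separately, since there $g_{l,k,0}$ has the non-recursive form $k^l\cdot(m+2)^l$ and the target inequality is different: one checks $k\cdot(m+1)\cdot k^{l-1}\cdot(m+2)^{l-1}<k^{l}\cdot(m+2)^{l}$, whereas for $\rk(f)>0$ one absorbs the factor $k\cdot(m+1)$ into $c_{k,p-1}\cdot\bigl((m+2)\cdot G\bigr)^{k\cdot d_{k,p-1}}$ using $k\leqslant c_{k,p-1}$ and $1<k\cdot d_{k,p-1}$. None of this is deep, and your intermediate bounds are correct, so the gap is closable; but as written the decisive inequalities are asserted rather than proved, and the rationale you give for them points at the wrong constant.
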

\begin{proof}
  We prove the claim by induction on $l$
  and case analysis on $f(\seq{a}) \gppv[k][l] b$.
  First note that $f(\seq{a})\cppv{st}[k][l] b$ implies 
  that $a_i \gppv[k][l] b$ for some $i \in \{1,\dots,n\}$
  and consequently $\Slow(b) \leqslant \Slow(a_i)$. 
  As by definition $\Slow(a_i) \leqslant m$ the lemma follows trivially.

  As in the base case $l = 1$ either $b = \nil$ or $f(\seq{a}) \cppv{st}[k][l] b$, 
  it suffices to consider only the remaining cases of the inductive step.
  Assuming $f(\seq{a}) \gppv[k][l+1] b$ we show 
  $\Slow(b) \leqslant g_{k,l+1,\rk(f)}(m+2)$.
  \begin{description}[leftmargin=0.3cm]
  \item[\dcase{$f(\seq{a}) \cppv{ia}[k][l+1] b$ where $b = g(\seq[o]{b})}$] 
    Then 
    $f(\seq{a}) \gppv[k][l] b_j$ for all $j = 1,\dots,o$, and $f \sp g$.
    Set $m' \defsym \MSlow{o}{k}(\seq[o]{b})$. We have
    \begin{align*}
      m' & < \max\set{\Slow[k](b_j) + 1 \mid j \in \{1,\dots,o\}}^k 
      && \text{by definition and Lemma~\eref{l:homo}{4}} \\
      & \leqslant {(g_{k,l,\rk(f)}(m+2)+ 1)}^k
      && \text{applying induction hypothesis $o$ times}
    \end{align*}
    As the assumption also gives $\rk(g) < \rk(f)$ 
    we have
    \begin{align*}
      \Slow(b) 
      & \leqslant \Fpop{k}{\rk(g)}(m') && \text{by definition of $\Fpop{k}{\rk(g)}$} \\
      & \leqslant c_{k,\rk(g)} \cdot {(m' + 2)}^{d_{k,\rk(g)}} && \text{by assumption} \\
      & \leqslant c_{k,\rk(f)-1} \cdot {(m' + 2)}^{d_{k,\rk(f)-1}} && \text{as $\rk(g) < \rk(f)$} \\
      % & = \homo{o}{k}{\max\set{\Slow[k](c_i) + 1 \mid i \in \{1,\dots,o\}}}(\Slow[k](c_1),\dots,\Slow[k](c_n)) \\
      & < c_{k,\rk(f)-1} \cdot {({(g_{k,l,\rk(f)}(m+2)+ 1)}^k + 2)}^{d_{k,\rk(f)-1}} && \text{substituting bound for $m'$} \\
      & \leqslant c_{k,\rk(f)-1} \cdot {(g_{k,l,\rk(f)}(m+2)+ 3)}^{k \cdot d_{k,\rk(f)-1}} && \text{using $1 \leqslant k$}\\
      & \leqslant c_{k,\rk(f)-1} \cdot {((m+2) \cdot g_{k,l,\rk(f)}(m+2))}^{k \cdot d_{k,\rk(f)-1}} && \text{using $2 \leqslant g_{k,l,\rk(f)}(m+2)$}\\
      & = g_{k,l+1,\rk(f)}(m+2) && \text{using $\rk(f) > 0$}\tpkt
    \end{align*}
  \item[\dcase{$f(\seq{a}) \cppv{ialst}[k][l+1] b$ where $b = \lseq[o]{b}}$]
    Ordering constraints give $o \leqslant \width(a) + k$
    and $f(\seq{a}) \gppv[k][l] b_j$ ($j = 1,\dots,o$).
    Exploiting that $a_i$ is ground, a standard induction shows 
    that $\width(a_i) \leqslant \Slow[k](a_i)$, and consequently 
    $\width(a_i) \leqslant m$.
    Thus
    \begin{align}
      \label{e:bindwidth}
      \tag{\dag}
      o \leqslant \width(a) + k 
      = \max\set{1,\width(a_1), \dots,\width(a_n)} + k
      \leqslant m + k \leqslant k \cdot (m + 1) \tpkt
    \end{align}
    If $\rk(f) = 0$ then 
    we see
    \begin{align*}
      \Slow[k](b) & = \sum_{j=1}^o \Slow[k](b_i) && \text{using Lemma~\ref{l:slowsum}}\\
      & \leqslant \sum_{j=1}^o g_{k,l0}(m+2) && \text{applying induction hypothesis $o$ times}\\
      & \leqslant k \cdot (m + 1) \cdot g_{k,l0}(m+2) && \text{using \eqref{e:bindwidth}} \\
      & = k \cdot (m + 1) \cdot k^{l} \cdot {(m + 2)}^{l} && \text{by assumption $\rk(f) = 0$} \\
      & < k^{l+1} \cdot {(m + 2)}^{l+1}
      = g_{k,l+10}(m+2) \tpkt
    \end{align*}
    Otherwise $\rk(f) > 0$ and we conclude
    \begin{align*}
      \Slow[k](b) & \leqslant k \cdot (m + 1) \cdot g_{k,l,\rk(f)}(m+2)
      && \text{as in the case $\rk(f) = 0$}\\
      & < c_{k,\rk(f)-1} \cdot {((m + 2) \cdot g_{k,l,\rk(f)}(m+2))}^{k \cdot d_{k,\rk(f)-1}} 
      && \text{as $k \leqslant c_{k,\rk(f)-1}$ and $1 < k \cdot d_{k,\rk(f)-1}$} \\
      & = g_{k,l+1,\rk(f)}(m+2) && \text{by assumption $\rk(f) > 0$}\tpkt
    \end{align*}
  \end{description}
\end{proof}

\begin{theorem}\label{t:pop}
  For all $k,p \in \N$ there exist constants $c,d \in \N$ (depending only on $k$ and 
  $p$) such that for all $m$: $\Fpop{k}{p}(m) \leqslant c \cdot {(m+2)}^{d}$.
\end{theorem}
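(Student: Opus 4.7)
The plan is to prove $\Fpop{k}{p}(m) \leqslant c_{k,p}(m+2)^{d_{k,p}}$ by an outer induction on $p$ and an inner strong induction on $m$. Fix $f(\seq[n]{a}) \in \GTS$ with $\rk(f) \leqslant p$, $n \leqslant k$, and $\MSlow{n}{k}(\seq[n]{a}) \leqslant m$. Since $\Slow[k](f(\seq[n]{a})) = 1 + \sup \{\Slow[k](b) \mid f(\seq[n]{a}) \gpopv[k] b\}$, it suffices to bound each $\Slow[k](b)$ by $c_{k,p}(m+2)^{d_{k,p}} - 1$. As $f(\seq[n]{a})$ is a term, only $\cpopv{st}$, $\cpopv{ep}$, and $\cpopv{ialst}$ of Definition~\ref{d:gpopv} can apply.

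The first two cases are routine. For $\cpopv{st}$ we have $a_i \geqpopv[k] b$ for some $i$, so $\Slow[k](b) \leqslant \Slow[k](a_i) \leqslant m$, using that the leading digit $c^{k-1}$ of $\homo{n}{k}{c}$ dominates each $\Slow[k](a_i)$. For $\cpopv{ep}$, $b = g(\seq[l]{b'})$ with $g \ep f$ (so $\rk(g) = p$) and $l \leqslant k$; Lemma~\ref{l:slowpoly} delivers $\MSlow{l}{k}(\seq[l]{b'}) < m$, whence the inner IH on $m$ gives $\Slow[k](b) \leqslant c_{k,p}(m+1)^{d_{k,p}}$, comfortably absorbed by $c_{k,p}(m+2)^{d_{k,p}}$. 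The substantial case is $\cpopv{ialst}$: $b = \lseq[l]{b'}$ with $l \leqslant \width(f(\seq[n]{a})) + k \leqslant m + k$, since $\width(a_i) \leqslant \Slow[k](a_i) \leqslant m$ on ground terms (exploited already in Lemma~\ref{l:pop:aux}). By Lemma~\ref{l:slowsum}, $\Slow[k](b) = \sum_{j=1}^l \Slow[k](b'_j)$. Every ordinary child satisfies $f(\seq[n]{a}) \gppv[k][k-1] b'_j$, and Lemma~\ref{l:pop:aux}---whose premise is precisely the outer IH for ranks strictly below $p$---bounds it by $g_{k-1,k,p}(m+2)$. The exceptional child $b'_{j_0}$ satisfies only $f(\seq[n]{a}) \gpopv[k][k-1] b'_{j_0}$ and is treated by a sub-case analysis on the top rule: $\cpopv{st}$ yields $\Slow[k](b'_{j_0}) \leqslant m$, $\cpopv{ep}$ strictly decreases $\MSlow$ and fires the inner IH, and a nested $\cpopv{ialst}$ turns $b'_{j_0}$ into a sequence to which the same analysis is re-applied with the depth parameter of $\gpopv[k][\cdot]$ dropped by one, so the recursion bottoms out after at most $k$ iterations. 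Summing all contributions and invoking the recurrences $d_{k,p+1} = d_{k,p}^k + 1$ and $c_{k,p+1} = (c_{k,p} \cdot k)^e$ with $e = \sum_{i=0}^k (k \cdot d_{k,p})^i$ absorbs the aggregate into $c_{k,p+1}(m+2)^{d_{k,p+1}}$.

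The principal obstacle is the exceptional child in $\cpopv{ialst}$: Lemma~\ref{l:pop:aux} does not apply directly since the full $\gpopv[k][k-1]$ rather than the restricted $\gppv[k][k-1]$ is in force. The cleanest resolution is to strengthen the inductive claim to bound $\gpopv[k][l]$-descendants of $f(\seq[n]{a})$ for every $l$ by an inner induction on $l$ that parallels Lemma~\ref{l:pop:aux}; equivalently, one unfolds the "at most one" chain, which has length at most $k$. This detour is exactly what dictates the awkward closed forms of $c_{k,p}$ and $d_{k,p}$. The remainder is routine, if somewhat tedious, arithmetic verifying that these constants were chosen precisely to balance the inequalities at each inductive step.
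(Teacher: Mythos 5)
Your proposal is correct and follows essentially the same route as the paper: an outer induction on the rank $p$ with a side induction on $m$, the cases $\cpopv{st}$ and $\cpopv{ep}$ dispatched via Lemma~\ref{l:slowpoly} and the side induction hypothesis, and the exceptional child of $\cpopv{ialst}$ handled by the strengthened claim proved by an inner induction on the depth parameter $l$ of $\gpopv[k][l]$, with Lemma~\ref{l:pop:aux} (fed by the outer induction hypothesis) covering the $\gppv[k][l]$-children. You correctly isolate the one genuine obstacle and its resolution; the only part you leave implicit is the arithmetic verification that $c_{k,p}$ and $d_{k,p}$ absorb the resulting sums, which the paper carries out in detail.
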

\begin{proof}
  Fix $a = f(\seq[n]{a}) \in \GTS$ such that $\rk(f) = p$, $k \geqslant n$ and $\MSlow{n}{k}(a_1,\dots,a_n) \leqslant m$. 
  To show the theorem, we prove that for all $b$ with $a \gpopv[k] b$ 
  we have $\Slow[k](b) < c_{k,p} \cdot {(m+2)}^{d_{k,p}}$ by induction on the rank $p$ and side induction on $m$.

  \smallskip

  \noindent\basecase{$ p = 0 $}
  The base case of the side induction is trivial, so consider the inductive step $m > 0$.
  We first prove $\Slow[k](b) < k^k \cdot {(m+1)}^{k+1} + k^l \cdot {(m + 2)}^l$
  by induction on $\gpopv[k][l]$.
  \begin{description}[leftmargin=0.3cm]
  \item[\dcase{$f(\seq{a}) \cpopv{st}[k][l] b}$] Then $a_i \geqpopv[k][l] b$, and 
    we conclude since $\Slow[k](b) \leqslant \Slow[k](a_i) \leqslant m$ using 
    the assumptions and Lemma~\eref{l:homo}{4}.

  \item[\dcase{$f(\seq{a}) \cpopv{ep}[k][l] b$ where $g(\seq[o]{b})$}] 
    The ordering constraints give $o \leqslant k$, $f \ep g$ and
    $\mset{\seq{a}} \mextension{\gpopv[k][l]} \mset{\seq[o]{b}}$.
    Set $m' \defsym \MSlow{o}{k}(\seq[o]{b})$. 
    Hence $m' < \MSlow{n}{k}(\seq[n]{a}) \leqslant m$
    by Lemma~\ref{l:slowpoly} and assumption $n \leqslant k$.
    Thus side induction hypothesis gives
    $\Fpop{k}{0}(m') \leqslant c_{k,0} \cdot {(m' + 2)}^{c_{k,0}} = k^k {(m' + 2)}^{k+1}$.
    As the ordering constraints imply $\rk(g) = \rk(f) = 0$
    we conclude
    \begin{align*}
      \Slow[k](g(\seq[o]{b})) 
      & \leqslant \Fpop{k}{0}(m') 
      && \text{by definition of $\Fpop{k}{0}$} \\
      & = c_{k,0} \cdot {(m' + 2)}^{d_{k,0}}
      && \text{by side induction hypothesis}\\
      & = k^k \cdot {(m' + 2)}^{k+1}
      && \text{by definition}\\
      & < k^k \cdot {(m+1)}^{k+1} + k^{l} \cdot {(m + 2)}^{l}
      && \text{using $m' < m$.}
    \end{align*}

  \item[\dcase{$f(\seq{a}) \cpopv{ialst}[k][l]$ where $\lseq[o]{b}$}] 
    The ordering constraints give
    (i) $a \gpopv[k][l-1] b_{j_0}$ for some $j_0 \in\{1,\dots,o\}$,
    (ii) $a \gppv[k][l-1] b_j$ for all $j \neq j_0$, and
    (iii) $o \leqslant \width(a) + k$.
    By induction hypothesis on (i) we get $\Slow[k](b_{j_0}) < k^k \cdot {(m+1)}^{k+1} + k^{l-1} \cdot {(m + 2)}^{l-1}$, 
    the preparatory Lemma~\ref{l:pop:aux} on (ii) gives $\Slow[k](b_j) \leqslant k^{l-1} \cdot {(m + 2)}^{l-1}$ for $j \not = j_o$.
    Exactly as in Equation \eqref{e:bindwidth} we obtain $o \leqslant k \cdot (m + 1) < k \cdot (m + 2)$ from (iii).
    Summing up we have
    \begin{align*}
      \Slow[k](b) & = \sum_{j =1}^o \Slow[k](b_j) 
      && \text{by Lemma~\ref{l:slowsum}}\\
      & < k^k \cdot {(m+1)}^{k+1} + k^{l-1} \cdot {(m + 2)}^{l-1} 
      && \text{by induction hypothesis on (i), and}\\
      & \quad + (k \cdot (m+2) - 1) \cdot k^{l-1} \cdot {(m + 2)}^{l-1}  
      && \text{using $o < k \cdot (m+2)$ and Lemma~\ref{l:pop:aux} on (ii)}\\
      % & = k^k \cdot {(m+1)}^{k+1} + k^{l-1} \cdot {(m + 2)}^{l-1} \\
      % & \quad + k^{l} \cdot {(m + 2)}^l - k^{l-1} \cdot {(m + 2)}^{l-1} \\
      & =  k^k \cdot {(m+1)}^{k+1} + k^{l} \cdot {(m + 2)}^{l} \tpkt
    \end{align*}

  \end{description}
  
  Since $\gpopv[k] = \gpopv[k][k]$ this preparatory step gives 
  \begin{align*}
    \Slow[k](b) < k^k \cdot {(m+1)}^{k+1} + k^k \cdot {(m + 2)}^k 
%    & = k^k \cdot ({(m+1)}^{k+1} + {(m + 2)}^k) \\
    \leqslant k^k \cdot {(m + 2)}^{k+1}
  \end{align*}
  and concludes the base case.

  \smallskip

  \noindent\indcase{}
  % Set $d \defsym d_{k,p-1}$ and $d' \defsym d_{k,p} = {(k \cdot d_{k,p-1})}^k + 1$, likewise
  % set $c \defsym c_{k,p-1}$ and $c' \defsym c_{k,p} = c^{\sum_{i=0}^k{(k \cdot d)}^k}$.
  By induction hypothesis on $p$ we get 
  $\Fpop{k}{p}(m) \leqslant c_{k,p} \cdot {(m+2)}^{d_{k,p}}$, side induction hypothesis gives 
  $\Fpop{k}{p+1}(m') \leqslant c_{k,p+1} \cdot {(m+2)}^{d_{k,p+1}}$ for all $m' < m$. 
  A standard induction reveals $g_{l,k,p+1}(n) \leqslant c_{k,p}^{\sum_{i=0}^{l-1} {(k \cdot d_{k,p})}^i} \cdot n^{\sum_{i=1}^{l-1} {(k \cdot d_{k,p})}^i}$ 
  for all $n \in \N$.
  We continue with the proof of the lemma, and show that for all $l \geqslant 1$, if  $f(\seq{a}) \gpopv[k][l] b$ then
  \begin{equation}
    \label{t:pop:a}
    \tag{$\ddag$}
    \Slow[k](b) \leqslant c_{k,p+1}  \cdot {(m+1)}^{d_{k,p+1}} + c_{k,p+1} \cdot {(m + 2)}^{{(k \cdot d_{k,p})}^l}
  \end{equation}
  by induction on $l$.
  The only interesting case is $a \cpopv{ialst}[k][l+1] b$.
  Then $b = \lseq[o]{b}$ with 
  (i) $a \gpopv[k][l] b_{j_0}$ for some $j_0 \in \{1,\dots,o\}$,
  (ii) $a \gppv[k][l] b_j$ for all $j \neq j_0$, and
  (iii) $o \leqslant \width(a) + k$.
  By induction hypothesis on (i) we get $\Slow[k](b_{j_0}) \leqslant c_{k,p+1} \cdot {(m+1)}^{d_{k,p+1}} + c_{k,p+1} \cdot {(m + 2)}^{{(k \cdot d_{k,p})}^{l}}$, 
  Lemma~\ref{l:pop:aux} on (ii) gives $\Slow[k](b_j) \leqslant g_{l,k,p+1}(m + 2)$ for $j \not = j_o$ and 
  (iii) gives $o \leqslant k \cdot (m + 1)$ as in Equation~\eqref{e:bindwidth}.
  Summing up we see
  \begin{align*}
    \Slow[k](b) & = \sum_{j=1}^{o} \Slow[k](b_j) && \text{by Lemma~\ref{l:slowsum}}\\
    & \leqslant c_{k,p+1} \cdot {(m+1)}^{d_{k,p+1}} + c_{k,p+1} \cdot {(m + 2)}^{{(k \cdot d_{k,p})}^l} && \text{by induction hypothesis}\\
    & \quad + k \cdot (m+1) \cdot g_{l,k,p+1}(m+2) && \text{by Lemma~\ref{l:pop:aux} and bound on $o$}\\
    & \leqslant c_{k,p+1} \cdot {(m+1)}^{d_{k,p+1}} + c_{k,p+1} \cdot {(m + 2)}^{{(k \cdot d_{k,p})}^{l}} \\
    & \quad + k \cdot (m+1) \cdot c^{\sum_{i=0}^{l-1} {(k \cdot d_{k,p})}^i} \cdot {(m+2)}^{\sum_{i=1}^{l-1} {(k \cdot d_{k,p})}^i} &&  \text{bound on $g_{l,k,p+1}(m+2)$}\\
    & < c_{k,p+1} \cdot {(m+1)}^{d_{k,p+1}} + c_{k,p+1} \cdot {(m + 2)}^{{(k \cdot d_{k,p})}^{l}} \\
    & \quad + c_{k,p+1} \cdot {(m+2)}^{\sum_{i=0}^{l-1} {(k \cdot d_{k,p})}^i}
    && \text{using $k \cdot c_{k,p}^{\sum_{i=0}^{l-1} {(k \cdot d_{k,p})}^i} \leqslant c_{k,p+1}$}\\
    % & \leqslant c_{k,p+1} \cdot {(m+1)}^{d_{k,p+1}}  \\ 
    % & \quad + c_{k,p+1} \cdot ((m + 2)^{{(k \cdot d_{k,p})}^{l}} \cdot {(m+2)}^{\sum_{i=0}^{l-1} {(k \cdot d_{k,p})}^i})
    % && \\
    & \leqslant c_{k,p+1} \cdot {(m+1)}^{d_{k,p+1}} + c_{k,p+1} \cdot {(m+2)}^{\sum_{i=0}^{l} {(k \cdot d_{k,p})}^i} \\
    & \leqslant c_{k,p+1} \cdot {(m+1)}^{d_{k,p+1}} + c_{k,p+1} \cdot {(m+2)}^{{(k \cdot d_{k,p})}^{l+1}} 
  \end{align*}
  as desired.
  Using $\eqref{t:pop:a}$, $\gpopv[k] = \gpopv[k][k]$ 
  and ${(k \cdot d_{k,p})}^k < {(k \cdot d_{k,p})}^k + 1 < d_{k,p+1}$ we finally get
  \begin{align*}
    \Slow[k](b) & \leqslant c_{k,p+1}  \cdot {(m+1)}^{d_{k,p+1}} + c_{k,p+1} \cdot {(m + 2)}^{{(k \cdot d_{k,p})}^k} \\
    & = c_{k,p+1} \cdot ({(m+1)}^{d_{k,p+1}} + {(m + 2)}^{{(k \cdot d_{k,p})}^k}) \\
    & < c_{k,p+1} \cdot {(m + 2)}^{d_{k,p+1}}
  \end{align*}
  and conclude the inductive case.
\end{proof}

As a consequence, the number of $\gpopv[k]$-descents on 
basic terms interpreted with predicative interpretation $\ints$ 
is polynomial in sum of depths of normal arguments.

\begin{corollary}\label{c:pop}
  Let $f \in \DS$ with at most $k$ normal arguments. There exists a constant $d \in \N$ depending only on $k$ such that:
  $$\Slow[k](\ints(f(\sn{\seq{m}{u}}{\vec{v}}))) \in \bigO\bigl({(\max_{i=1}^m \depth(u_i))}^{d}\bigr)$$
  for all $\seq[m+n]{u} \in \Val$.
\end{corollary}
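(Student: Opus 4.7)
The plan is to unfold the predicative interpretation, apply Lemma~\ref{l:slowsum} and the bound $\MSlow{m}{k} \leqslant (1 + \max_i \Slow[k](a_i))^k$ from Lemma~\eref{l:homo}{4}, and finally invoke Theorem~\ref{t:pop}.

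First I would compute the left-hand side directly. Since each $v_j$ is a value, constructors have only safe argument positions, and so $\ints(v_j) = \nil$. Likewise each $u_i \in \Val$ satisfies $\ints(u_i) = \nil$, and on values $\norm{\cdot}$ and $\depth(\cdot)$ agree, so $\intn(u_i) = \natToSeq{\depth(u_i)}$. Hence
\begin{equation*}
\ints(f(\sn{\vec{u}}{\vec{v}})) = \lst{\fn(\natToSeq{\depth(u_1)}, \dots, \natToSeq{\depth(u_m)})} \tpkt
\end{equation*}
By Lemma~\ref{l:slowsum} this singleton sequence has the same $\Slow[k]$ value as the term inside, so we reduce to bounding $\Slow[k](\fn(\natToSeq{\depth(u_1)}, \dots, \natToSeq{\depth(u_m)}))$.

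Second, I would evaluate $\Slow[k]$ on the tally arguments. A further application of Lemma~\ref{l:slowsum} yields $\Slow[k](\natToSeq{n}) = n \cdot \Slow[k](\theconst)$. Since $\theconst$ is a constant (rank $0$) with no proper arguments, Theorem~\ref{t:pop} gives a constant $C_k \defsym \Fpop{k}{0}(0)$ depending only on $k$ with $\Slow[k](\theconst) \leqslant C_k$. Writing $D \defsym \max_{i=1}^m \depth(u_i)$, we obtain $\Slow[k](\natToSeq{\depth(u_i)}) \leqslant C_k \cdot D$ for every $i$.

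Third, I would bound the multiset measure. Since $m \leqslant k$ by assumption and each argument has $\Slow[k]$ value at most $C_k \cdot D$, Lemma~\eref{l:homo}{4} gives
\begin{equation*}
\MSlow{m}{k}(\natToSeq{\depth(u_1)}, \dots, \natToSeq{\depth(u_m)}) \leqslant (1 + C_k \cdot D)^k \tpkt
\end{equation*}
Finally I would apply Theorem~\ref{t:pop} to the term $\fn(\natToSeq{\depth(u_1)}, \dots, \natToSeq{\depth(u_m)})$, whose root symbol has rank $p \defsym \rk(\fn)$ fixed by the signature. This produces constants $c, d'$ depending only on $k$ and $p$ such that
\begin{equation*}
\Slow[k](\fn(\dots)) \leqslant c \cdot \bigl(\MSlow{m}{k}(\dots) + 2\bigr)^{d'} \leqslant c \cdot \bigl((1 + C_k \cdot D)^k + 2\bigr)^{d'} \in \bigO(D^{k \cdot d'}) \tpkt
\end{equation*}
Setting $d \defsym k \cdot d'$ concludes the proof. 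The only genuinely delicate point is making sure that $\theconst$ contributes only a constant factor, which is why the proof needs $\theconst$ to be of rank $0$ in the underlying precedence; everything else is a routine composition of the polynomial bounds already established.
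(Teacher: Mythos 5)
Your proposal is correct and follows essentially the same route as the paper's proof: unfold $\ints$ on a basic term to a singleton sequence, reduce via Lemma~\ref{l:slowsum} to the tally-sequence arguments, bound $\MSlow{m}{k}$ via Lemma~\eref{l:homo}{4}, and conclude with Theorem~\ref{t:pop}. The only cosmetic difference is the final exponent ($k \cdot d'$ in your version versus the paper's stated $k + d'$, where your arithmetic is in fact the more careful one), which is immaterial since both are polynomial degrees depending only on $k$ and $\rk(f)$.
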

\begin{proof}
  Let $s = f(\sn{\seq{m}{u}}{\vec{v}})$ be as given by the corollary. Recall that
  \begin{align*}
    \ints(s) 
    & = \lst{\fn(\intn(u_1), \dots, \intn(t_{u_m}))} \append \ints(u_{m+1}) \append \cdots \append \ints(u_{m+n}) \\
    & = \lst{\fn(\natToSeq{\depth(u_1)}, \dots, \natToSeq{\depth(u_m)})}
  \end{align*}
  As $\Slow[k](\theconst)$ is constant, say $\Slow[k](\theconst) = c$, by Lemma~\ref{l:slowsum} we see that
  $\Slow[k](\natToSeq{\depth(u_i)}) = c \cdot \depth(u_i)$. 
  Putting things together is tedious but not difficult:
%  Using this and Theorem~\ref{t:pop} we see that
  \begin{align*}
    \Slow[k](s) =
    & \Slow[k](\fn(\natToSeq{\depth(u_1)}, \dots, \natToSeq{\depth(u_m)})) 
    && \text{by Lemma~\ref{l:slowsum}} 
    \\
    & \leqslant \Fpop{k}{\rk(f)}\bigl(\MSlow{l}{k}(\natToSeq{\depth(u_1)}, \dots, \natToSeq{\depth(u_m)})\bigr) 
    \\
    & \leqslant \Fpop{k}{\rk(f)}\Bigl({\bigl(1+ \max_{i=1}^m \Slow[k](\NM{u_i})\bigr)}^k\Bigr) 
    && \text{by Lemma~\eref{l:homo}{4}} 
    \\
    & \leqslant \Fpop{k}{\rk(f)}\Bigl({\bigl(c \cdot (1 + \max_{i=1}^m \depth(u_i))\bigr)}^k\Bigr)
    && \text{using $\Slow[k](\NM{u_i}) \leqslant c \cdot \depth(u_i)$}
    \\
    & \in \bigO\Bigl({\bigl(c \cdot (1 + \max_{i=1}^m \depth(u_i))\bigr)}^{k+d'}\Bigr)
    && \text{by Theorem~\ref{t:pop}} 
    \\
    & \in \bigO\bigl({\max_{i=1}^m \depth(u_i)}^{k+d'}\bigr)
  \end{align*}
  Set $d \defsym k + d'$ and note that $d$ depends only on $k$ and $\rk(f)$ as desired.
\end{proof}

%%% Local Variables: 
%%% mode: latex
%%% TeX-master: "paper"
%%% End: 

\section{Predicative Embedding}\label{s:embed}

Fix a predicative recursive TRS $\RS$ and signature $\FS$, 
and let $\gpop$ be the polynomial path order underlying $\RS$ 
based on the (admissible) precedence $\qp$.
We denote by $\qp$ also the homomorphic precedence on $\FSn$ 
given by: $\fn \ep \gn$ if $f \ep g$ and $\fn \sp \gn$ if $f \sp g$.
Further, we set $f \sp \theconst$ for all $\fn \in \FSn$.
We denote by $\gpopv[\ell]$ (and respectively $\gppv[\ell]$) the approximation 
given in Definition~\ref{d:gpopv} (respectively Definition~\ref{d:gppv}) with underlying precedence $\qp$.
We now establish the embedding of $\irew$ into $\gpopv[\ell]$
for some $\ell$ depending only on $\RS$.
To simplify matters, we suppose for now that $\RS$ is \emph{completely defined}.
Since then normal forms and values coincide, $s \irew t$ 
if $s = C[l\sigma]$ and $t = C[r\sigma]$ where 
${l \to r} \in \RS$ and all arguments of $l\sigma$ are values.
In particular, this implies that the substitution $\sigma$
maps variables to values.

Lemma~\ref{l:embed:root} below proves the embedding of root steps 
for the case $l \gpop r$. In Lemma~\ref{l:embed:ctx} we then show that the embedding is closed under contexts.
The next lemma, exploited in Lemma~\ref{l:embed:root}, 
connects the auxiliary orders $\gsq$ and $\gppv[k][l]$ (compare Example \ref{ex:gppv}).
\begin{lemma}\label{l:embedgsq:root}
  Suppose $s = f(\pseq{s}) \in \BASICS$, $t \in \TERMS$ and $\ofdom{\sigma}{\VS \to \Val}$. 
  Then for predicative interpretation $\intq \in \{\ints, \intn\}$
  we have 
  $$
   s \gsq t \quad \IImp \quad \fn(\intn(s_1\sigma), \dots, \intn(s_k\sigma)) \gppv[2\cdot\size{t}] \intq(t\sigma) \tpkt
  $$
\end{lemma}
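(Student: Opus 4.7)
I would argue by induction on the derivation of $s \gsq t$ following the two clauses of Definition~\ref{d:gsq}, setting $a \defsym \fn(\intn(s_1\sigma),\dots,\intn(s_k\sigma))$ and $K \defsym 2\cdot\size{t}$. The goal is to orient $a \gppv[K] \intq(t\sigma)$ for each $\intq \in \{\ints,\intn\}$.

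\textbf{Subterm case $s \csq{st} t$.} Here $s_i \geqsq t$ for some $i$, and since $s$ is basic we have $f \in \DS$, so the side condition forces $i \in \{1,\dots,k\}$ to be a normal position. Because $s_i$ is a value and $\gsq$ is constructor-blind (the analogue of Lemma~\ref{l:gpop:val} for $\gsq$, provable by the same one-line induction), we obtain $t \in \Val$ and $\depth(t\sigma) \leqslant \depth(s_i\sigma)$. For $\intq = \ints$ the target $\ints(t\sigma) = \nil$ is reached vacuously by $\cppv{ialst}$. For $\intq = \intn$, we have $\intn(t\sigma) = \natToSeq{\depth(t\sigma)}$; the sequence $\intn(s_i\sigma) = \natToSeq{\depth(s_i\sigma)}$ sits among the arguments of $a$, so $\width(a) \geqslant \depth(s_i\sigma) \geqslant \depth(t\sigma)$, and the required bound on length in $\cppv{ialst}$ is met. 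Each component $\theconst$ is oriented by $\cppv{ia}$ using $\fn \sp \theconst$, provided $K \geqslant 2$, which is guaranteed since $\size{t} \geqslant 1$.

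\textbf{Inductive case $s \csq{ia} t$.} Here $t = g(\pseq[m][n]{t})$, $f \sp g$, and $s \gsq t_j$ for all $j = 1,\dots,m+n$. The intended orientation uses $\cppv{ialst}$ with right-hand side $\intq(t\sigma)$, which by Definition~\ref{d:pi} unfolds either as $\lst{\gn(\intn(t_1\sigma),\dots,\intn(t_m\sigma))} \append \ints(t_{m+1}\sigma) \append \cdots \append \ints(t_{m+n}\sigma)$ (possibly extended with $\NM{t\sigma}$ if $\intq = \intn$), or as the pure constructor case if $g \in \CS$. The length bound required by $\cppv{ialst}$ follows from Lemma~\ref{l:int:len}(1), which gives $\len(\ints(t\sigma)) \leqslant \size{t} \leqslant K$. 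The head term $\gn(\intn(t_1\sigma),\dots,\intn(t_m\sigma))$ is oriented by $\cppv{ia}$ using $\fn \sp \gn$, discharging each $\intn(t_i\sigma)$ via the induction hypothesis $\fn(\intn(s_1\sigma),\dots) \gppv[2\size{t_i}] \intn(t_i\sigma)$ and monotonicity in both parameters (Lemma~\ref{l:approx}(1) for the width parameter, together with the obvious analogue for the depth parameter). The arguments in safe position, and each $\theconst$ in $\NM{t\sigma}$, are handled by the induction hypothesis and $\fn \sp \theconst$ respectively.

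\textbf{Main obstacle.} The delicate step is bookkeeping the width and depth parameters so that the recursive invocations of the IH fit inside the single budget $K = 2\cdot\size{t}$. Since $\size{t} = 1 + \sum_i \size{t_i}$, we have $2\size{t_i} \leqslant K - 2$, which provides exactly the slack needed to pass through two nested applications of $\cppv{ialst}/\cppv{ia}$ (consuming one unit of depth each) before re-applying the IH, and ensures the width-sum constraints $m \leqslant \width(a) + K$ throughout. Once this accounting is set up, each case reduces mechanically to an application of the appropriate clause of Definition~\ref{d:gppv}.
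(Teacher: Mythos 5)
Your overall strategy --- induction on the derivation of $s \gsq t$, handling the subterm case via the value lemma and the composition case via $\cppv{ia}$ nested inside $\cppv{ialst}$, with the budget $2\cdot\size{t}$ decreasing to $2\cdot\size{t_i}$ on recursive calls --- is the same as the paper's. In the case $s \csq{st} t$ with $\intq = \intn$ the paper argues slightly differently: since $i$ is a normal position, $\intn(s_i\sigma)$ is literally an argument of $\fn(\intn(s_1\sigma),\dots,\intn(s_k\sigma))$, and one shows $\intn(s_i\sigma) \geqppv[2\cdot\size{t}] \intn(t\sigma)$ (using $\theconst \cppv{ialst}[2\cdot\size{t}] \nil$ and $\cppv{ms}$ on the tally sequences) and closes with the subterm clause $\cppv{st}$; your reconstruction of $\NM{t\sigma}$ from scratch via $\cppv{ialst}$ and $\fn \sp \theconst$ is an acceptable alternative since $\width(\fn(\dots)) \geqslant \norm{s_i\sigma} \geqslant \norm{t\sigma}$.

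The genuine gap is the length accounting for $\intq = \intn$ in the case $s \csq{ia} t$. You discharge the constraint $m \leqslant \width(\fn(\dots)) + 2\cdot\size{t}$ of $\cppv{ialst}$ by appeal to Lemma~\eref{l:int:len}{S}, i.e.\ $\len(\ints(t\sigma)) \leqslant \size{t}$. But $\intn(t\sigma) = \ints(t\sigma) \append \NM{t\sigma}$ has length $\len(\ints(t\sigma)) + \norm{t\sigma}$, and $\norm{t\sigma}$ is \emph{not} bounded by $\size{t}$: it depends on the depths of the values that $\sigma$ substitutes into safe positions of $t$ and can be arbitrarily large for fixed $t$. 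The bound you need is Lemma~\eref{l:int:len}{gsq}, which for $s \gsq t$ gives $\len(\intn(t\sigma)) \leqslant \max\{\norm{s_1\sigma},\dots,\norm{s_k\sigma}\} + 2\cdot\size{t}$, combined with the observation that $\width(\fn(\intn(s_1\sigma),\dots,\intn(s_k\sigma))) \geqslant \max_i \norm{s_i\sigma}$ because $\intn(s_i\sigma) = \NM{s_i\sigma}$ for the values $s_i\sigma$. Without routing the bound through the normal arguments of the left-hand side in this way, the width constraint of $\cppv{ialst}$ cannot be met. Once this is repaired, the rest of your accounting ($2\cdot\size{t_i} \leqslant 2\cdot\size{t}-2$ providing the slack for the two nested clause applications before re-entering the induction hypothesis) is exactly what the paper does.
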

\begin{proof}
  Let $s = f(\pseq{s}) \in \BASICS$, $t \in \TERMS$. 
  We continue by induction on the definition of $\gsq$. 
  \begin{description}[leftmargin=0.3cm]
  \item[\dcase{$s \csq{st} t$}]
    Then $s_i \geqsq t$ for some normal argument position $i \in \{1,\dots,k\}$.

    Note that by assumption $s \in \BASICS$, $s_i \in \Val$
    and so Lemma~\ref{l:gpop:val} (employing ${\gsq} \subseteq {\gpop}$)
    gives $s_i \esuperterm t$ and $t \in \Val$, consequently 
    $t\sigma \in \Val$
    and furthermore $\norm{s_i\sigma} \geqslant \norm{t\sigma}$.
    As $t\sigma \in \Val$, we get
    $\fn(\intn(s_1\sigma), \dots, \intn(s_k\sigma)) \cppv{ialst}[2 \cdot \size{t}] \nil = \ints(t\sigma)$ which concludes the case $\intq = \ints$.
    For the case $\intq = \intn$, 
    observe $\intn(s_i\sigma) = \NM{s_i\sigma}$ and $\intn(t\sigma) = \NM{t\sigma}$
    since both $s_i\sigma$ and $t\sigma$ are values.
    If $\norm{s_i\sigma} = \norm{t\sigma}$ then obviously 
    $\intn(s_i\sigma) = \intn(t\sigma)$. Otherwise 
    $\norm{s_i\sigma} > \norm{t\sigma}$ and then
    $\intn(s_i\sigma) \cppv{ms}[2\cdot\size{t}] \intn(t\sigma)$, 
    employing $\theconst \cppv{ialst}[2\cdot\size{t}] \nil$.
    Hence overall $\intn(s_i\sigma) \geqppv[2\cdot\size{t}] \intn(t\sigma)$.
    Since the position $i$ is normal, $\intn(s_i\sigma)$ is a direct subterm of $\fn(\intn(s_1\sigma), \dots, \intn(s_k\sigma))$
    and we conclude $\fn(\intn(s_1\sigma), \dots, \intn(s_k\sigma)) \cppv{st}[2\cdot\size{t}] \intn(t\sigma)$ as desired.

  \item[\dcase{$s \csq{ia} t$}]
    By the assumption $t = g(\pseq[m][n]{t})$ where $f \sp g$ and $s \gsq t_i$ for all $j \in =1,\dots,m+n$.
    
    We consider the more involved case $t\not\in\Val$. 
    Let $\ints(t_i\sigma) = \lst{v_{i,1}~\cdots~v_{i,j_i}}$ for all safe argument positions 
    $i = m+1, \dots m+n$ of $g$, i.e., 
    \begin{align*}
    \ints(t\sigma) 
    & = \lst{\gn(\intn(t_1), \dots, \intn(t_m))~v_{m+1,1}~\cdots~v_{m+1,j_{m+1}}\quad\cdots\quad v_{m+n,1}~\cdots~v_{m+n,j_{m+n}}} \tpkt
    \end{align*}
    By induction hypothesis on $i = 1,\dots, m$
    we get $\fn(\intn(s_1\sigma), \dots, \intn(s_k\sigma)) \gppv[2\cdot\size{t_i}] \intn(t_i\sigma)$.
    Since $\size{t_i} < \size{t}$, using Lemma~\eref{l:approx}{kmon}
    we have in particular $\fn(\intn(s_1\sigma), \dots, \intn(s_k\sigma)) 
    \gppv[2\cdot\size{t} - 2] \intn(t_i\sigma)$. % (employing $\size{t} > \size{t_i} \geqslant 1$).
    Using this, $\fn \sp \gn$, and $m < \size{t}$ we conclude 
    \begin{equation}
      \label{eq:egsq:gtv}
     \fn(\intn(s_1\sigma), \dots, \intn(s_k\sigma)) \cppv{ia}[2\cdot\size{t} -1] \gn(\intn(t_1), \dots, \intn(t_m))\tpkt
    \end{equation}
    By induction hypothesis on safe argument positions of $g$ we get 
    $$
    \fn(\intn(s_1\sigma), \dots, \intn(s_k\sigma)) \gppv[2\cdot\size{t_i}] \lst{v_{i,1}~\cdots~v_{i,j_i}} = \ints(t_i\sigma)
    $$
    for all $i =m+1,\dots,m+n$.
    Using a simple inductive argument
    one verifies 
    \begin{equation}
      \label{eq:egsq:safe}
      \fn(\intn(s_1\sigma), \dots, \intn(s_k\sigma)) \gppv[2\cdot\size{t}-1] v_{i,j} \text{ for all $i = m+1,\dots,m+n$ and $j = 1,\dots,j_i$}
    \end{equation}
    from this.
    Let $\intq \in \{\ints, \intn\}$.
    Observe 
    \begin{align*}
      \len(\intq(t\sigma)) 
      & \leqslant 2\cdot\size{t} + \max \{ \norm{s_1\sigma}, \dots, \norm{s_k\sigma}\}
      && \text{by Lemma~\eref{l:int:len}{gsq}} \\
      & \leqslant 2\cdot\size{t} + \width(\fn(\intn(s_1\sigma), \dots, \intn(s_k\sigma))) \tpkt
    \end{align*}
    Using this, Equations~\eqref{eq:egsq:gtv}, Equations~\eqref{eq:egsq:safe}
    and possibly $\fn(\intn(s_1\sigma), \dots, \intn(s_k\sigma)) \cppv{ia}[2\cdot\size{t}] \theconst$
    we have $u \cppv{ialst}[2\cdot\size{t}] \intq(t\sigma)$ as desired.

%     Now consider the case $\intq = \intn$, and 
%     recall $\norm{t\sigma} = 1 + \max\{\norm{t_j\sigma} \mid j \in \{m+1, \dots, m_n\}\}$. 
%     Pick the safe argument $t_j$ such that $\norm{t\sigma} = 1 + \norm{t_j\sigma}$.
%     By induction hypothesis on $j$ we have $u \gppv[(\size{t_j}+1)^2] \intn(t_j\sigma)$.
%     Hence
%     $\norm{t_j\sigma} \leqslant \len(\intn(t_j\sigma)) \leqslant \width(u) + (\size{t_j}+1)^2$
%     where the first inequality follows by Lemma~\ref{l:int:len} and 
%     the second by Lemma~\eref{l:approx}{bounds}. Hence overall
%     \begin{align*}
%       \len(\intn(t\sigma)) & \leqslant \size{t} + \norm{t\sigma}  
%       && \text{by Lemma~\ref{l:int:len}}
%       \\
%       & = \size{t} + \norm{t_j\sigma}  + 1 
%       && \text{by assumption $\norm{t\sigma} = \norm{t_j\sigma} + 1$}
%       \\
%       & \leqslant \size{t} + \width(u) + (\size{t_j} + 1)^2 + 1 
%       && \text{using $\norm{t_j\sigma} \leqslant \width(u) + (\size{t_j}+1)^2$}
%       \\
%       & \leqslant \size{t} + \width(u) + \size{t}^2 + 1 
%       && \text{using $\size{t_j}+1 \leqslant \size{t}$}
%       \\
% %%      & < \width(u) + \size{t}^2 + 2\size{t} + 1\\
% %%      & < \size{t} + 1 + \width(u) + \size{t} \cdot (\size{t}+1) \\
%       & < \width(u) + \ell 
%       && \text{using $\size{t}^2+1 \leqslant (\size{t} + 1)^2$}.
%     \end{align*}    
%     Using this, $u \cppv{ia}[\ell-1] \theconst$ since $\fn \sp \theconst$, 
%     and exploiting \eqref{eq:egsq:gtv} and \eqref{eq:egsq:safe} as before,
%     we conclude $u \cppv{ialst}[\ell] \intn(t\sigma)$ as desired.
  \end{description}
  We conclude this auxiliary lemma.
\end{proof}

\begin{lemma}\label{l:embed:root}
  Suppose $s = f(\pseq{s}) \in \BASICS$, $t \in \TERMS$ and $\ofdom{\sigma}{\VS \to \Val}$. 
  Then for predicative interpretation $\intq \in \{\ints, \intn\}$
  we have 
  $$
   s \gpop t \quad \IImp \quad \intq(s\sigma) \gpopv[2\cdot\size{t}] \intq(t\sigma) \tpkt
  $$
\end{lemma}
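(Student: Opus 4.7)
I plan to induct on the derivation $s \gpop t$, handling both $\intq \in \{\ints, \intn\}$ simultaneously in each case. Write $u := \fn(\intn(s_1\sigma),\ldots,\intn(s_k\sigma))$, so $\ints(s\sigma) = \lst{u}$ and $\intn(s\sigma) = \lst{u} \append \NM{s\sigma}$, with $\width(u) = \max\{1,\norm{s_1\sigma},\ldots,\norm{s_k\sigma}\}$. Length constraints on the right-hand side appearing in \cpopv{ialst} and \cpopv{ms} will be discharged using Lemma~\eref{l:int:len}{gpop}, which bounds $\len(\intn(t\sigma))$ by $\max\{\norm{s_1\sigma},\ldots,\norm{s_k\sigma},\norm{s\sigma}\} + 2\cdot\size{t}$.

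For case \cpop{st}, the hypothesis $s_i \geqpop t$ together with basicness of $s$ and Lemma~\ref{l:gpop:val} force $t \in \Val$ with $\norm{s_i\sigma} \geqslant \norm{t\sigma}$. The $\ints$-claim is immediate since $\ints(t\sigma) = \nil$, which is dominated by $u \cpopv{ialst}[2\cdot\size{t}] \nil$. For $\intn$, I distinguish whether $i$ is safe or normal: in the safe case $\NM{s\sigma}$ strictly dominates $\NM{t\sigma}$ in length and \cpopv{ms} applies directly to $\lst{u} \append \NM{s\sigma}$; in the normal case $\intn(s_i\sigma) = \NM{s_i\sigma}$ is a direct argument of $u$, so \cpopv{st} gives $u \gpopv[2\cdot\size{t}] \NM{t\sigma}$ via $\NM{s_i\sigma} \geqpopv[2\cdot\size{t}] \NM{t\sigma}$, and the tail $\NM{s\sigma}$ is incorporated by \cpopv{ms}.

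Case \cpop{ia}, with $t = g(\pseq[m][n]{t})$ and $f \sp g$, is the main case. My strategy is first to establish $u \cpopv{ialst}[2\cdot\size{t}] \ints(t\sigma)$ by treating each element of the right-hand side sequence separately: (i) the head $\gn(\intn(t_1\sigma),\ldots,\intn(t_m\sigma))$ is dominated via \cppv{ia} using $\fn \sp \gn$, after applying Lemma~\ref{l:embedgsq:root} to each normal $s \gsq t_j$ with $j \leqslant m$; (ii) for the at-most-one ``exceptional'' safe argument $t_{m+j_0}$ not restricted to $\TA(\sigbelow{\Fun(s)}{\FS},\VS)$, the induction hypothesis $\lst{u} \gpopv[2\cdot\size{t_{m+j_0}}] \ints(t_{m+j_0}\sigma)$ is unfolded via \cpopv{ms} and \cpopv{ialst} into the required element-wise comparisons; (iii) for every remaining safe $t_{m+j}$, whose image $\ints(t_{m+j}\sigma)$ contains only function symbols strictly below $f$ (and the constant $\theconst$), an auxiliary induction analogous to the \cppv{ia}-argument of Lemma~\ref{l:embedgsq:root} bounds each element by $u$ via $\gppv[2\cdot\size{t}-1]$. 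The $\intn$-embedding is then obtained by \cpopv{ms} on $\lst{u} \append \NM{s\sigma}$ against $\ints(t\sigma) \append \NM{t\sigma}$: $u$ is matched with $\ints(t\sigma)$ augmented by any excess $\theconst$'s (absorbed via $u \cppv{ia}[2\cdot\size{t}] \theconst$, using $f \sp \theconst$), and each $\theconst$ in $\NM{s\sigma}$ is paired either with a remaining $\theconst$ in $\NM{t\sigma}$ or with $\nil$.

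For case \cpop{ep}, basicness of $s$ and Lemma~\ref{l:gpop:val} together with the multiset comparisons force all arguments of $t$ to be values (hence $t \in \BASICS$), and the safe-argument multiset comparison yields $\norm{t\sigma} \leqslant \norm{s\sigma}$. I orient the heads via $u \cpopv{ep}[2\cdot\size{t}] \gn(\NM{t_1\sigma},\ldots,\NM{t_m\sigma})$ by lifting the normal-argument multiset comparison $\mset{\seq[k]{s}} \gpopmul \mset{\seq[m]{t}}$ through $\NM{\cdot}$ (using that $\gpop$ collapses to $\esupertermstrict$ on values), and append the $\NM{\cdot}$ tails via \cpopv{ms}. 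I expect the main obstacle to be the length-bookkeeping in case \cpop{ia} when extending from $\ints$ to $\intn$: since $\norm{t\sigma}$ can strictly exceed $\norm{s\sigma}$, one must invoke Lemma~\eref{l:int:len}{gpop} carefully to verify that the absorbed excess $\theconst$'s fit within the slack $\width(u) + 2\cdot\size{t}$ permitted by \cpopv{ialst}.
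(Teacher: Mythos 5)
Your proposal is correct and follows essentially the same route as the paper's proof: induction on the $\gpop$-derivation, Lemma~\ref{l:gpop:val} to collapse the subterm and $\cpop{ep}$ cases to values, Lemma~\ref{l:embedgsq:root} for the normal arguments under $\cpop{ia}$, the induction hypothesis for the single exceptional safe argument, and Lemma~\ref{l:int:len} to discharge the length side-conditions of $\cpopv{ialst}$ and $\cpopv{ms}$. The only difference is organisational: what you call an ``auxiliary induction'' for the remaining safe arguments is packaged in the paper as a strengthened assertion proved simultaneously in the same induction, namely $\fn(\intn(s_1\sigma),\dots,\intn(s_k\sigma)) \gppv[2\cdot\size{t}] \ints(t\sigma)$ whenever $t \in \termsbelow[\Fun(s)]$.
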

\begin{proof}
  Let $s$, $t$, $\sigma$ be as given in the lemma.
  We prove the stronger assertions
  \begin{enumerate}
  \item \label{l:er1} $\fn(\intn(s_1\sigma), \dots, \intn(s_k\sigma)) \gpopv[2\cdot\size{t}] \ints(t\sigma)$, 
  \item \label{l:er2} $\fn(\intn(s_1\sigma), \dots, \intn(s_k\sigma)) \gppv[2\cdot\size{t}] \ints(t\sigma)$ if $t \in \termsbelow[\Fun(s)]$, and
  \item \label{l:er3} $\fn(\intn(s_1\sigma), \dots, \intn(s_k\sigma)) \append \NM{s\sigma} \gpopv[2\cdot\size{t}] \intn(t\sigma)$.
  \end{enumerate}
  % Note that Property~\ref{l:er1} together with 
  % $\len(\ints(t\sigma)) \leqslant \width(u) + 2\cdot\size{t} \leqslant \width(\ints(s\sigma)) + 2\cdot\size{t}$
  % (using Lemma~\eref{l:approx}{bounds}) gives $\ints(s\sigma) \cpopv{ms}[2\cdot\size{t}] \ints(s\sigma)$
  % and concludes the case $\intq = \ints$.
  % For the case $\intq = \intn$, observe 
  % $u \append \NM{s\sigma}$ is a subsequence of $\intn(s\sigma)$.
  % By case analysis on the rule concluding Property~\ref{l:er3} one 
  % verifies $\intn(s\sigma) \cpopv{ms}[2\cdot\size{t}] \intn(t\sigma)$.
  We continue with the proof of the assertions by induction on $\gpop$.
  \begin{description}[leftmargin=0.3cm]
  \item[{\dcase{$s \cpop{st} t$}}]
    Exactly as in Lemma~\ref{l:embedgsq:root} we conclude $s_i\sigma \esuperterm t\sigma$ and $t\sigma \in \Val$.
    The latter implies $\ints(t\sigma) = \nil$ and thus
    Assertion~\ref{l:er1} and Assertion~\ref{l:er2} are immediate.
    For Assertion~\ref{l:er3}, observe that 
    $\len(\NM{t\sigma}) 
    = \norm{t\sigma} \leqslant \norm{s_i\sigma} 
    \leqslant \width(\fn(\intn(s_1\sigma), \dots, \intn(s_k\sigma)) \append \NM{s\sigma})$
    where the latter inequality is obtained by case analysis on $i$.
    % Here the first inequality follows as $s_i\sigma \esuperterm t\sigma$, 
    % and the latter is obtained by case analysis on $i$:
    % if $i$ is a normal argument position of $f$, then in particular 
    % $\intn(s_i\sigma)$ is a direct subterm of $\fn(\intn(s_1\sigma), \dots, \intn(s_k\sigma))$ and the inequality follows as 
    % $\norm{s_i\sigma} = \width(\intn(s_i\sigma)) \leqslant \width(\fn(\intn(s_1\sigma), \dots, \intn(s_k\sigma)))$.
    % For the case that $i$ is a safe argument position of $f$, 
    % the inequality follows as $\norm{s_i\sigma} < \norm{s\sigma} = \width(\NM{s\sigma})$.
    % Hence overall $\len(\NM{t\sigma}) = \norm{t\sigma} \leqslant \width(\fn(\intn(s_1\sigma), \dots, \intn(s_k\sigma)) \append \NM{s\sigma})$.
    From this and $\fn(\intn(s_1\sigma), \dots, \intn(s_k\sigma)) \cppv{ia}[2\cdot\size{t} - 1] \theconst$ we get
    $\fn(\intn(s_1\sigma), \dots, \intn(s_k\sigma)) \append \NM{s\sigma} \cpopv{ms}[2\cdot\size{t}] \NM{t\sigma} = \intn(t\sigma)$ as desired.

  \item[\dcase{$s \cpop{ia} t$}]
    The assumption gives $t = g(\pseq[m][n]{t})$ where $f \sp g$
    and further $s \gsq t_i$ for all normal argument positions $i = 1,\dots,m$ and
    $s \gpopps t_i$ for all safe argument positions $i = m+1,\dots,m+n$ of $g$. 
    Additionally $t_{i_0} \not\in \termsbelow[\Fun(s)]$ for at most one argument position $i_0$.

    We first verify Assertion~\ref{l:er1} and Assertion~\ref{l:er3} for the non-trivial case $t \not \in \Val$.
    Set $v \defsym \gn(\intn(t_1\sigma), \dots, \intn(t_m\sigma))$ and 
    let $\ints(t_i\sigma) = \lst{v_{i,1}~\cdots~v_{i,j_i}}$ for all safe argument positions 
    $i = m+1,\dots,m+n$, hence
    \begin{align*}
    \ints(t\sigma) =  \lst{\gn(\intn(t_1\sigma), \dots, \intn(t_m\sigma))~v_{m+1,1}~\cdots~v_{m+1,j_{m+1}}\quad\cdots\quad v_{m+n,1}~\cdots~v_{m+n,j_{m+n}}} \tpkt
    \end{align*}
    Applying Lemma~\ref{l:embedgsq:root} on all normal arguments of $t$ we see
    \begin{align}
      \label{e:c3:0}
      \fn(\intn(s_1\sigma), \dots, \intn(s_k\sigma)) \gppv[2\cdot\size{t} - 1] \gn(\intn(t_1\sigma), \dots, \intn(t_m\sigma))
    \end{align}
    from the assumptions $\fn \sp \gn$ and $s \gsq t_i$ for all $i = 1,\dots,m$.
    Since $s \gpop t_{i_0}$ by assumption, induction hypothesis on $i_0$ gives 
    \begin{align*}
      \fn(\intn(s_1\sigma), \dots, \intn(s_k\sigma)) \gpopv[2\cdot \size{t_{i_0}}] \lst{v_{i_0,1}, \dots, v_{i_0,j_{i_0}}} = \ints(t_{i_0}\sigma)\tpkt
    \end{align*}
    Employing $2\cdot\size{t_{i_0}} \leqslant 2\cdot\size{t} - 1$, 
    it is not difficult to check that due to the above inequality we have 
    \begin{align}
      \fn(\intn(s_1\sigma), \dots, \intn(s_k\sigma)) & \gpopv[2\cdot\size{t}-1] {v_{i_0,j_0}} 
      && \text{for some $j_0 \in \{1,\dots,j_{i_0}\}$}\label{e:c3:1}\\
      \fn(\intn(s_1\sigma), \dots, \intn(s_k\sigma)) & \gppv[2\cdot\size{t}-1] {v_{i_0,j}} 
      && \text{for all $j = 1,\dots,j_{i_0}$, $j \not=j_0$.}\label{e:c3:2}
    \end{align}
    Similar induction hypothesis on safe argument positions $i = m+1,\dots,m+n$ ($i \not=i_0$) of $g$,
    where in particular $t_i \in \termsbelow[\Fun(s)]$ by assumption,
    gives 
    \begin{align}
      \fn(\intn(s_1\sigma), \dots, \intn(s_k\sigma)) & \gppv[2\cdot\size{t}-1] {v_{i,j}} 
      && \text{for all $i = m+1,\dots,m+n$, $i \not=i_0$ and $j = 1,\dots,j_{i}$.} \label{e:c3:3}
    \end{align}
    Observe $\len(\ints(t\sigma)) \leqslant \size{t}$ by Lemma~\eref{l:int:len}{S}.
    Summing up, Assertion~\ref{l:er1} follows by $\cpopv{ialst}[2\cdot\size{t}]$
    using Equations \eqref{e:c3:0}, \eqref{e:c3:1}, \eqref{e:c3:2} and \eqref{e:c3:3}.
    Likewise, Assertion~\ref{l:er3} follows using additionally 
    $\fn(\intn(s_1\sigma), \dots, \intn(s_k\sigma)) \cppv{ia}[2\cdot\size{t}-1] \theconst$
    and
    \begin{align*}
      \len(\intn(t\sigma)) 
      & \leqslant 2\cdot\size{t} + \max \{ \norm{s_1\sigma}, \dots, \norm{s_k\sigma}, \norm{s\sigma}\}
      && \text{by Lemma~\eref{l:int:len}{gpop}} \\
      & \leqslant 2\cdot\size{t} + \width(\fn(\intn(s_1\sigma), \dots, \intn(s_k\sigma)) \append \NM{s\sigma}) \tpkt
    \end{align*}

    Finally, for Assertion~\ref{l:er2} we proceed exactly as above, but strengthen the inequality \eqref{e:c3:1} to
    $\fn(\intn(s_1\sigma), \dots, \intn(s_k\sigma)) \gppv[2\cdot\size{t}-1] {v_{i_0,j_0}}$
    which follows as $t_{i_0} \in \termsbelow[\Fun(s)]$ by assumption, and thus
    induction hypothesis can be strengthened to
    $\fn(\intn(s_1\sigma), \dots, \intn(s_k\sigma)) \gppv[2\size{t_{i_0}}] \ints(t_{i_0}\sigma)$.
    This concludes the case $\cpop{ia}$.
    
%     We consider Assertion~\ref{l:er3}.
%     Note $\size{t} + \norm{t\sigma} \leqslant \width(u \append \NM{s\sigma}) + 2\cdot\size{t}$.
%     We prove this for the nontrivial case 
%     $\norm{t\sigma} = \norm{t_j\sigma} + 1$ for some safe argument position $j \in \{m+1, \dots, m+n\}$ of $g$.    
%     As $s \gpop t_j$, $u \append \NM{s\sigma} \gpopv[(\size{t_j}+1)^2] \intn(t_j\sigma)$ by induction hypothesis,
%     and consequently $\norm{t_j\sigma} \leqslant \len(\intn(t_j\sigma)) \leqslant \width(u \append \NM{s\sigma}) + (\size{t_j}+1)^2$
%     using Lemma~\ref{l:int:len} and then Lemma~\eref{l:approx}{bounds}. 
%     Hence
%     \begin{align*}
%       \len(\intn(t\sigma)) & \leqslant \size{t} + \norm{t\sigma}
%       && \text{by Lemma~\ref{l:int:len}}
%       \\
%       & =  \size{t} + \norm{t_j\sigma}  + 1 
%       && \text{by assumption $\norm{t\sigma} = \norm{t_j\sigma} + 1$}
%       \\
%       & \leqslant \size{t} + \width(u \append \NM{s\sigma}) + (\size{t_j}+1)^2 + 1 
% %      && \text{using $\norm{t_j\sigma} \leqslant \width(u \append \NM{s\sigma}) + (\size{t_j}+1)^2$} 
%       \\
%       & < \width(u \append \NM{s\sigma}) + {(\size{t} + 1)}^2
%       && \text{using $\size{t_j} + 1 \leqslant \size{t}$} 
%       \\
%       & = \width(u \append \NM{s\sigma}) + 2\cdot\size{t} \tpkt
%     \end{align*}
%     Using this inequality
%     $u \append \NM{s\sigma} \cpopv{ms}[2\cdot\size{t}] \intn(t\sigma) = \ints(t\sigma) \append \NM{t\sigma}$ 
%     follows by repeating the arguments for Assertion~\ref{l:er1}.

  \item[\dcase{$s \cpop{ep} t$}]
    Then $t = g(\pseq[m][n]{t})$ where $f \ep g$.
    Further, the assumption gives
    $\mset{\seq[k]{s}} \gpopmul \mset{\seq[m]{t}}$
    and $\mset{\seq[k+l][k+1]{s}} \geqpopmul \mset{\seq[m+n][m+1]{t}}$.
    Hence $t \not \in \termsbelow[\Fun(s)]$ and Property~\ref{l:er2} is vacuously 
    satisfied. 
    We prove Property~\ref{l:er1} and Property~\ref{l:er3}.
    Using that $s_i \in \Val$ for all normal argument positions $i = 1,\dots,m$
    and employing Lemma~\ref{l:gpop:val} we see exactly as in the 
    case $s \cpop{st} t$ above that
    $\mset{\seq[k]{s}} \gpopmul \mset{\seq[m]{t}}$ implies
    $\mset{\intn(s_1\sigma), \dots, \intn(s_k\sigma)} \mextension{\gpopv[2\cdot\size{t}-1]} \mset{\intn(t_1\sigma), \dots, \intn(t_m\sigma)}$.
    Hence 
    \begin{equation}
      \label{eq:root:ep}
      \fn(\intn(s_1\sigma), \dots, \intn(s_k\sigma)) \cpopv{ep}[2\cdot\size{t} -1] \gn(\intn(t_1\sigma), \dots, \intn(t_m\sigma))
    \end{equation}
    follows as by assumption $\fn \ep \gn$ and clearly $m \leqslant \size{t} \leqslant 2\cdot\size{t} - 1$.
    Note that the assumption $\mset{\seq[k+l][k+1]{s}} \geqpopmul \mset{\seq[m+n][m+1]{t}}$ together with
    $s_i \in \Val$ for all $i = k+1,k+l$ gives
    $t_j \in \Val$, and consequently $\ints(t_j\sigma) = \nil$ for all
    $j=k+1,\dots,k+l$.
    Hence 
    $$
    \fn(\intn(s_1\sigma), \dots, \intn(s_k\sigma)) \cpopv{ialst}[2\cdot\size{t}] 
    \lst{\gn(\intn(t_1\sigma), \dots, \intn(t_m\sigma))}% \append \ints(t_{m+1}\sigma) \append \cdots \append \ints(t_{m+n}\sigma) 
    = \ints(t\sigma) 
    $$
    which concludes Assertion~\ref{l:er1}.
    
    To prove Assertion~\ref{l:er3}
    we additionally verify $\norm{t\sigma} \leqslant \norm{s\sigma}$ 
    by case analysis on $\norm{t\sigma}$.
    Thus $\NM{s\sigma} \geqpopv[2\cdot\size{t}] \NM{t\sigma}$ follows.
    Using this and Equation~\eqref{eq:root:ep} we obtain 
    $$\fn(\intn(s_1\sigma), \dots, \intn(s_k\sigma)) \append \NM{s\sigma} 
    \cpopv{ms}[2\cdot\size{t}]  \gn(\intn(t_1\sigma), \dots, \intn(t_m\sigma)) \append \NM{t\sigma} = \intn(t\sigma)$$
    by Lemma~\eref{l:approx}{kmon} and Lemma~\eref{l:approx}{subseq}.
  \end{description}
  We conclude the lemma.
\end{proof}

\begin{lemma}
  \label{l:embed:ctx}
  Let $\ell \geqslant \max\{\ar(\fn) \mid \fn \in \FSn \}$ and $s,t \in \TERMS$. Then for $\intq \in \{\intn,\ints\}$, 
  $$
  \intq(s) \gpopv[\ell] \intq(t) \quad \IImp \quad \intq(C[s]) \gpopv[\ell] \intq(C[t]) \tpkt
  $$
\end{lemma}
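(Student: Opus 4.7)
The proof proceeds by structural induction on the context $C$, establishing the conclusion simultaneously for $\intq = \ints$ and $\intq = \intn$; the base case $C = \Box$ is immediate. In the inductive step I would write $C$ as a one-hole context whose root is an $n$-ary function symbol $f$ (with $k$ normal and $l$ safe argument positions) and whose hole occupies some argument position $j \in \{1,\dots,n\}$, then appeal to the induction hypothesis applied to the smaller context $C'$ obtained by stripping off the top $f$-application; the induction hypothesis is used for both flavours of $\intq$.

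The analysis splits on whether $j$ is safe or normal. In the safe case ($k+1 \leqslant j \leqslant k+l$), the interpretations $\ints(C[s])$ and $\ints(C[t])$ share the same head $\lst{\fn(\intn(t_1),\dots,\intn(t_k))}$ and differ only in the $j$-th block of their concatenations, where $\ints(C'[s])$ appears on the left and $\ints(C'[t])$ on the right. The induction hypothesis $\ints(C'[s]) \gpopv[\ell] \ints(C'[t])$ combined with Lemma~\eref{l:approx}{subseq} then yields $\ints(C[s]) \gpopv[\ell] \ints(C[t])$. In the normal case ($1 \leqslant j \leqslant k$), the induction hypothesis $\intn(C'[s]) \gpopv[\ell] \intn(C'[t])$ allows me to replace the $j$-th argument of the head symbol via case $\cpopv{ep}$ of Definition~\ref{d:gpopv}: with $\fn \ep \fn$ and only one argument strictly decreased, the required multiset comparison holds, and the assumption $\ell \geqslant \ar(\fn)$ supplies the width bound. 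Lemma~\eref{l:approx}{subseq} then propagates the decrease through the surrounding concatenation.

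For the $\intn$ conclusion I write $\intn(C[u]) = \ints(C[u]) \append \NM{C[u]}$ and must additionally account for the two tallies. When $j$ is normal, the norm of $C[u]$ depends only on the unchanged safe arguments, so $\NM{C[s]} = \NM{C[t]}$ and the $\ints$-level inequality extends by Lemma~\eref{l:approx}{subseq}. When $j$ is safe the two norms can differ, so I would directly construct a witness for the $\cpopv{ms}$ comparison on $\intn(C[s])$ versus $\intn(C[t])$: each element of the $\ints(C[t])$ portion is paired with its counterpart from $\ints(C[s])$ according to the previously established decrease, the matching $\theconst$'s of the two tallies are paired trivially, and any surplus $\theconst$'s in $\NM{C[t]}$ are covered using the descent $\fn(\dots) \cppv{ia} \theconst$ (valid because $\fn \sp \theconst$), with $\theconst \cppv{ialst} \nil$ disposing of any remaining excess on the left. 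The width slack built into $\cpopv{ms}$ is exactly what allows the length of the chosen witness to meet the length of the right-hand side.

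The principal obstacle I expect is the bookkeeping in the safe-position case of the $\intn$ conclusion: one must design the multiset matching so that (i) each element on the right is weakly dominated by its partner on the left, (ii) at least one comparison is strict, and (iii) the overall width bound of $\cpopv{ms}$ is met. Once this matching is written out explicitly, the rest is routine monotonicity relying on the properties collected in Lemma~\ref{l:approx}.
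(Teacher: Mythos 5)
Your overall architecture coincides with the paper's proof: induction on the context, a case split on whether the hole sits below a safe or a normal argument position, Lemma~\eref{l:approx}{subseq} to propagate the decrease through the surrounding concatenation, and a head comparison with $f \ep f$ (clause \cpopv{ep}, with $\ell \geqslant \ar(\fn)$ supplying the width bound) when the hole is normal. Your observation that $\NM{C[s]} = \NM{C[t]}$ when the hole is normal is also exactly how the paper dispenses with the tallies in that case. The divergence, and the problem, lies in the step you yourself single out as the principal obstacle: the $\intn$ conclusion when the hole is at a safe position and $\norm{C[t]} > \norm{C[s]}$. You propose to cover the surplus occurrences of $\theconst$ in $\NM{C[t]}$ by the descent $\fn(\dots) \cppv{ia}[\ell] \theconst$. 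But in a $\cpopv{ms}[\ell]$ comparison every element of the left-hand sequence covers exactly one block of the decomposition of the right-hand side, and here the natural coverer is already spoken for: the head $\fn(\intn(s_1),\dots,\intn(s_k))$ is \emph{identical} on both sides (the normal arguments are untouched), and no element can weakly dominate a block consisting of an equal copy of itself together with extra $\theconst$'s --- clause $\cpopv{ialst}[\ell]$ would require $\fn(\dots) \gpopv[\ell] \fn(\dots)$ or $\fn(\dots) \gppv[\ell] \fn(\dots)$, both of which fail by irreflexivity. The same objection applies to the unchanged blocks $\ints(s_j)$, $j \neq i$, each of which must cover its own equal counterpart. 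So the multiset cover, as you describe it, cannot be completed.

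The missing idea is the arithmetic the paper exploits. Writing $s_i$ and $t_i$ for the terms at the hole position, $\norm{C[t]} > \norm{C[s]}$ forces $\norm{C[t]} = \norm{t_i} + 1$ and $\norm{C[s]} \geqslant \norm{s_i} + 1$, so the entire surplus already lives inside the induction hypothesis $\intn(s_i) \gpopv[\ell] \intn(t_i)$, that is, inside the comparison of $\NM{s_i}$ against $\NM{t_i}$. Appending a single $\theconst$ to both sides of that hypothesis (Lemma~\eref{l:approx}{subseq}) yields $\ints(s_i) \append \NM{s_i} \append \theconst \gpopv[\ell] \ints(t_i) \append \NM{t_i} \append \theconst = \ints(t_i) \append \NM{C[t]}$; the left-hand side is then weakly dominated by $\ints(s_i) \append \NM{C[s]}$ because $\NM{C[s]}$ contains at least $\norm{s_i}+1$ occurrences of $\theconst$, and the surrounding equal blocks are reinstated via Lemma~\eref{l:approx}{modeqi} and Lemma~\eref{l:approx}{subseq}. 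You could in principle salvage a direct construction by re-using the cover implicit in the induction hypothesis --- the elements of $\ints(s_i)$ that absorb the surplus of $\NM{t_i}$ there can absorb the (no larger) surplus of $\NM{C[t]}$ --- but that has to be said explicitly, and it is just the paper's argument in disguise; the appeal to the head via $\fn(\dots) \cppv{ia}[\ell] \theconst$ does not do the job.
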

\begin{proof}
  It suffices to consider the inductive step. 
  Consider terms $s = f(s_1,\dots,s_i, \dots, s_{k+l})$ and $t = f(s_1,\dots,t_i, \dots, s_{k+l})$.
  We show that for $\intq \in \{\intn,\ints\}$, under the assumption $\intq(s_i) \gpopv[\ell] \intq(t_i)$ also 
  $\intq(f(s_1,\dots,s_i, \dots, s_{k+l})) \gpopv[\ell] \intq(f(s_1,\dots,t_i, \dots, s_{k+l}))$ holds.
  \begin{description}[leftmargin=0.3cm]
    \item[\dcase{$\intq = \ints$}] 
      Consider the non-trivial case $t \not \in \Val$.
      Without loss of generality, suppose the first $k$ 
      argument positions of $f$ are normal, 
      and the remaining $l$ positions are safe.
      Depending on the position $i$, we distinguish two cases.
      If $i \in \{k+1,\dots,k+l\}$ is safe, then by definition
      \begin{align*}
        \ints(s) & = \lst{\fn(\intn(s_1), \dots, \intn(s_k))} 
                         \append \ints(s_{k+1}) \append \cdots \append \ints(s_i) \append \cdots \append \ints(s_{k+l}) \text{, and} \\
        \ints(t) & = \lst{\fn(\intn(s_1), \dots, \intn(s_k))} 
                         \append \ints(s_{k+1}) \append \cdots \append \ints(t_i) \append \cdots \append \ints(s_{k+l})
      \end{align*}
      If $i$ is a normal argument position, the assumption 
      $\intn(s_i) \gpopv[\ell] \intn(t_i)$ and $\ell \geqslant k$ gives
      \begin{align*}
        \fn(\intn(s_1), \dots, \intn(s_i) ,\dots, \intn(s_k))
        \cpopv{ms}[\ell] \fn(\intn(s_1), \dots, \intn(t_i) ,\dots, \intn(s_k))
      \end{align*}
      and the lemma follows again using Lemma~\eref{l:approx}{subseq}.

    \item[\dcase{$\intq = \intn$}] 
      Recall that $\intn(s) = \ints(s) \append \NM{s}$ and $\intn(t) = \ints(t) \append \NM{t}$.
      If $\norm{s} \geqslant \norm{t}$ then $\intn(s) \gpopv[l] \intn(t)$ follows 
      from $\ints(s) \gpopv[l] \ints(t)$ and Lemma~\eref{l:approx}{subseq}.
      Hence suppose $\norm{s} < \norm{t}$.
      % First consider the case $t \in \Val$.
      % By assumption and Lemma~\eref{l:approx}{subseq} we see
      % $$
      % \intn(s_i) \append \theconst 
      % = \ints(s_i) \append \NM{s_i} \append \theconst 
      % \gpopv[\ell] \intn(t_i) \append \theconst
      % = \intn(t) \tpkt
      % $$
      % Note that $\norm{s}$ is by definition strictly larger than $\norm{s_i}$.
      % Since additionally $i$ is a safe argument position of $f$, we have that 
      % $\intn(s_i) \append \theconst$ is a subsequence of $\intn(s)$.
      % We conclude $\intn(s) \gpopv[\ell] \intn(t)$ by Lemma~\eref{l:approx}{modeqi}
      % and Lemma~\eref{l:approx}{subseq}.
      %
      First, consider the more involved case $t \not \in \Val$.
      As $\norm{s} < \norm{t}$ implies that $i$ is a safe argument position of $f$, we thus have
      \begin{align*}
        \intn(s) & = \lst{\fn(\intn(s_1), \dots, \intn(s_k))} 
                         \append \ints(s_{k+1}) \append \cdots \append \ints(s_i) \append \cdots \append \ints(s_{k+l}) \append \NM{s} \text{, and} \\
        \intn(t) & = \lst{\fn(\intn(s_1), \dots, \intn(s_k))} 
                         \append \ints(s_{k+1}) \append \cdots \append \ints(t_i) \append \cdots \append \ints(s_{k+l})  \append \NM{t} 
      \end{align*}
      Using Lemma~\eref{l:approx}{modeqi}
      and Lemma~\eref{l:approx}{subseq}
      we see that $\intn(s) \gpopv[\ell] \intn(t)$ follows from 
      $\ints(s_i) \append \NM{s} \gpopv[\ell] \ints(t_i) \append \NM{t}$.
      Note $\norm{s_i} < \norm{s}$
      and observe that 
      the assumption $\norm{s} < \norm{t}$ gives $\norm{t} = \norm{t_i} + 1$ 
      by the shape of $s$ and $t$.
      Thus using Lemma~\eref{l:approx}{subseq} and the assumption $\intn(s_i) \gpopv[\ell] \intn(t_i)$
      we can even prove the stronger property 
      $\ints(s_i) \append \NM{s_i} \append \theconst \gpopv[\ell] \ints(t_i) \append \NM{t_i} \append \theconst = \intn(t)$.
      
      By similar reasoning we can also prove $t \in \Val$ where $\intn(t) = \norm{t}$.
      This concludes the case analysis.
    \end{description}
\end{proof}

We have established the embedding for completely defined TRSs.\@ 
Putting things together we obtain:
This allows us to estimate the derivation height in terms of $\Slow[\ell]$.
\begin{lemma}\label{l:embed:bound}
  Let $\RS$ be a completely defined TRS compatible with $\gpop$. 
  Define $\ell \defsym \max\{\ar(\fn) \mid \fn \in \FSn \} \cup \{2\cdot\size{r} \mid {l \to r} \in \RS \}$
  and $\intq \in \{\intn,\ints\}$. Then $\dheight(s, \irew) \leqslant \Slow[\ell](\intq(s))$.
\end{lemma}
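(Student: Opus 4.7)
The plan is to show that each innermost rewrite step strictly decreases $\Slow[\ell] \circ \intq$, so that the derivation height is immediately bounded by $\Slow[\ell](\intq(s))$. The whole argument relies on combining the root embedding (Lemma~\ref{l:embed:root}) with the context closure lemma (Lemma~\ref{l:embed:ctx}), using the two clauses in the definition of $\ell$ to satisfy their respective hypotheses.

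First, I would verify the single-step embedding: for every innermost step $s \irew t$ and every $\intq \in \{\ints,\intn\}$, one has $\intq(s) \gpopv[\ell] \intq(t)$. A step can be written as $s = C[l\sigma] \irew C[r\sigma] = t$ for some rule $l \to r \in \RS$ and some substitution $\sigma$. Since $\RS$ is predicative recursive it is a constructor TRS, so $l = f(\pseq{l}) \in \BASICS$ with all proper subterms of $l$ being values, and since $\RS$ is completely defined, innermost rewriting forces all arguments of the redex to be normal forms, i.e.\ values; hence $\ofdom{\sigma}{\VS \to \Val}$. Compatibility gives $l \gpop r$, so Lemma~\ref{l:embed:root} yields $\intq(l\sigma) \gpopv[2\cdot\size{r}] \intq(r\sigma)$. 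By the choice of $\ell \geqslant 2\cdot\size{r}$ and Lemma~\eref{l:approx}{kmon} this sharpens to $\intq(l\sigma) \gpopv[\ell] \intq(r\sigma)$. Finally, $\ell \geqslant \max\{\ar(\fn) \mid \fn \in \FSn\}$ lets us apply Lemma~\ref{l:embed:ctx} repeatedly (one application per layer of $C$) to conclude $\intq(C[l\sigma]) \gpopv[\ell] \intq(C[r\sigma])$, i.e.\ $\intq(s) \gpopv[\ell] \intq(t)$.

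With the single-step embedding in hand, an innermost derivation $s = s_0 \irew s_1 \irew \cdots \irew s_n$ of length $n$ yields a $\gpopv[\ell]$-chain $\intq(s_0) \gpopv[\ell] \intq(s_1) \gpopv[\ell] \cdots \gpopv[\ell] \intq(s_n)$ of the same length in $\GTLS$. By the very definition of $\Slow[\ell]$ as one plus the maximal $\gpopv[\ell]$-descending length, we obtain $n \leqslant \Slow[\ell](\intq(s_0))$, from which $\dheight(s,\irew) \leqslant \Slow[\ell](\intq(s))$ follows by taking the supremum over all such derivations.

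The only delicate points are ensuring that the redex substitution really is value-valued, which requires both complete definedness and the innermost discipline, and matching the parameter $\ell$ to the needs of both auxiliary lemmas; both are already accounted for in the definition of $\ell$. No new technical machinery is needed beyond invoking Lemmas~\ref{l:embed:root},~\ref{l:embed:ctx} and~\ref{l:approx}.
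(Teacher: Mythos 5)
Your proposal is correct and follows essentially the same route as the paper: the paper's proof likewise chains Lemma~\ref{l:embed:root} and Lemma~\ref{l:embed:ctx} along a maximal innermost derivation and concludes by the definition of $\Slow[\ell]$, merely leaving implicit the details you spell out (value-valued substitutions from complete definedness plus the innermost discipline, and the lift from $\gpopv[2\cdot\size{r}]$ to $\gpopv[\ell]$ via Lemma~\eref{l:approx}{kmon}).
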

\begin{proof}
  Suppose $\RS$ is completely defined TRS compatible with $\gpop$, and
  let $\ell$ be given by the Lemma.
  Consider a maximal $\RS$-derivation
  $$
  \mparbox[c]{1cm}{s} 
  \mparbox[c]{1cm}{\irew[\RS]} \mparbox[c]{1cm}{s_1} 
  \mparbox[c]{1cm}{\irew[\RS]} \mparbox[c]{1cm}{s_2} 
  \mparbox[c]{1cm}{\irew[\RS]} \mparbox[c]{1cm}{\cdots}
  \mparbox[c]{1cm}{\irew[\RS]} \mparbox[c]{1cm}{s_m\tkom} 
  $$
  starting from an arbitrary term $s$, i.e., $m = \dheight(s, \irew[\RS])$.
  Using Lemma~\ref{l:embed:root} together with Lemma~\ref{l:embed:ctx} $m$-times 
  we get 
  $$
   \mparbox[c]{1cm}{\intq(s)} 
   \mparbox[c]{1cm}{\gpopv[\ell]} \mparbox[c]{1cm}{\intq(s_1)} 
   \mparbox[c]{1cm}{\gpopv[\ell]} \mparbox[c]{1cm}{\intq(s_2)}
   \mparbox[c]{1cm}{\gpopv[\ell]} \mparbox[c]{1cm}{\cdots}
   \mparbox[c]{1cm}{\gpopv[\ell]} \mparbox[c]{1cm}{\intq(s_m)}
  $$
  and consequently $m \leqslant \Slow[\ell](\intq(s))$ by definition.
\end{proof}

The final proof step is to lift the requirement that $\RS$ is completely defined.
Call a normal form $s$ \emph{garbage} if its root symbol is defined.
Let $\bot \not \in \FS$ be a fresh constructor symbol.
For each garbage term $s$ we extend $\RS$ by a rule 
that replaces $s$ with $\bot$. Although infinite, the resulting 
system is completely defined.

\begin{definition}\label{d:rss}
  Let $\bot$ be a fresh constructor symbol $\bot \not \in \FS$ and $\RS$ a TRS over $\FS$.
  We define $\RSS$ over the signature $\FS \cup \{\bot\}$ by
  $$
  \RSS \defsym \{t \to \bot \mid \text{$t \in \TA(\FS \cup \{\bot\},\VS)$ is a normal form of $\RS$ with defined root symbol} \}\tpkt
  $$
  % Note that $t$ is supposed to range over the set of terms constructed from 
  % $\FS \cup \{\bot\}$ and $\VS$.
  We set $\RSbot \defsym \RS \cup \RSS$.
\end{definition}
We extend the precedence $\qp$ on $\FS$ to $\FS \cup \{\bot\}$ so that $\bot$ is minimal.
As clearly $s \cpop{ia} \bot$ for each garbage term $s$, 
for predicative TRS $\RS$ the TRS $\RSS$ is compatible with $\gpop$.
Note that $\RSS$ is confluent and terminating, in particular every term
$s$ has a unique normal form with respect to $\RSS$, in notation $\normalise{s}$.
Clearly $\normalise{f(\seq{s})} = \normalise{f(\normalise{s_1},\dots, \normalise{s_n})}$.
Exploiting that the additional rules do not interfere with 
pattern matching of $\RS$, the TRS $\RSbot$ is able to simulate $\RS$ in the following sense. 

\begin{lemma}\label{l:rss:step}
  Suppose $\RS$ is a constructor TRS. Then
  $$
  s \irew t \quad\IImp\quad \normalise{s} \irst[\RSbot] \normalise{t}
  $$
\end{lemma}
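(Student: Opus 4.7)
The plan is to decompose the simulated step into two phases: a single application of the rule $l \to r \in \RS$ inside the normalized term $\normalise{s}$, followed by a sequence of innermost $\RSS$-steps that garbage-collect the remaining reducts. The central bookkeeping to establish first is that, for an innermost step $s \irew t$ with $s = C[l\sigma]$ and $t = C[r\sigma]$, the normalization of $s$ has the shape
\begin{equation*}
\normalise{s} = \hat{C}[l\sigma']
\end{equation*}
where $\sigma'(x) \defsym \normalise{\sigma(x)}$ for each variable $x$ of $l$, and $\hat{C}$ is the context obtained from $C$ by $\RSS$-normalizing every subterm that does not lie on the path to the hole. Two observations justify this identity: every position strictly above the redex in $s$ contains the reducible subterm $l\sigma$, hence is not an $\RS$-normal form and thus never serves as a left-hand side of an $\RSS$-rule; and because $\RS$ is a constructor TRS, $l$ has only constructor symbols above its variable positions, so every proper non-variable subterm of $l\sigma$ strictly contains the reducible $l\sigma$ and cannot be in normal form either. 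This leaves precisely the variable positions as the only places inside $l\sigma$ where $\RSS$ can act, giving $\normalise{l\sigma} = l\sigma'$.

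Next I would exhibit the simulating step followed by the garbage collection. Each $\sigma(x)$ is an $\RS$-normal form by the innermost assumption, and since $\normalise{\cdot}$ only replaces subterms by the fresh constructor $\bot$, which does not appear in any left-hand side of $\RS$, each $\sigma'(x)$ is simultaneously an $\RS$- and an $\RSS$-normal form, hence an $\RSbot$-normal form. Thus the rule $l \to r$ applies as an innermost $\RSbot$-step at the redex position of $\normalise{s}$, yielding $\normalise{s} \irew[\RSbot] \hat{C}[r\sigma']$. For the remaining reduction, note that $\hat{C}[r\sigma']$ and $t = C[r\sigma]$ have a common $\RSS$-normal form, namely $\normalise{t}$, since $t$ reduces to $\hat{C}[r\sigma']$ under $\RSS$ by the very construction of $\hat{C}$ and $\sigma'$. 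To realise this reduction using innermost $\RSbot$-steps I would repeatedly contract a minimal (deepest) garbage subterm $v$ of the current term: its proper subterms are $\RS$-normal forms (as $v$ itself is), and by minimality none of them is garbage, so all are $\RSbot$-normal forms. The resulting innermost $\RSS$-derivation terminates by well-foundedness of $\RSS$ and reaches $\normalise{t}$ by confluence of $\RSS$.

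The main obstacle is the first-phase identity $\normalise{s} = \hat{C}[l\sigma']$, which hinges on two facts: reducibility propagates upward, preserving the spine from the root of $s$ down to the redex position; and the constructor discipline of $\RS$ forces all positions inside $l\sigma$ that are candidates for $\RSS$-replacement to lie at the variable positions of $l$. Once this is in place, the rest is a routine combination of confluence and termination of $\RSS$ together with the fact that $\bot$ is a fresh constructor invisible to $\RS$.
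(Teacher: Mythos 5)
Your proof is correct and rests on exactly the observations the paper's proof uses: every position on the spine above the redex contains the reducible $l\sigma$ and hence is never garbage, and the constructor discipline confines $\RSS$-activity inside $l\sigma$ to the variable positions, so that $\normalise{s}$ still carries the redex $l\sigma'$ with $\sigma'(x) = \normalise{\sigma(x)}$; the paper packages this as an induction on the context $C$ rather than your one-shot characterisation of $\normalise{s}$, and it leaves the innermost realisation of the trailing $\RSS$-normalisation implicit where you spell it out via deepest-first collapsing. One local repair: your claim that proper non-variable subterms of $l\sigma$ ``strictly contain the reducible $l\sigma$'' is not what you mean --- the correct reason these positions are immune to $\RSS$ is that they are constructor-rooted, while every left-hand side of $\RSS$ has a defined root symbol.
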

\begin{proof}
  Suppose $s \irew t$, i.e., $s = C[f(l_1\sigma, \dots, l_n\sigma)]$ and 
  $t = C[r\sigma]$ for some context $C$, rule ${f(\seq{l}) \to r} \in \RS$
  and substitution $\sigma$ where $l_i \sigma \in \NF(\RS)$ for all $i = 1,\dots,n$.
  We continue by induction on $C$.
  Let $\sigma_{\normalise{}}(x) \defsym \normalise{x\sigma}$ for all $x \in \dom(\sigma)$.

  Consider the base case $C = \hole$. Since $\RS$ is by assumption a constructor TRS, 
  the direct arguments of the left-hand sides of $\RS$ do not contain defined symbols, 
  consequently $\normalise{l_i}\sigma = \normalise{l_i\sigma_{\normalise{}}} = l_i\sigma_{\normalise{}}$
  is a constructor term
  for all $i = 1,\dots,n$.
  We conclude the inductive step
  $$
  \normalise{f(l_1\sigma, \dots, l_n\sigma)} = f(l_1\sigma_{\normalise{}}, \dots, l_n\sigma_{\normalise{}})
  \irew[\RSbot] r{\sigma_{\normalise{}}} 
  \irss[\RSbot] \normalise{(r\sigma)} \tpkt
  $$
  Here in the first equality we employ that  $f(l_1\sigma_{\normalise{}}, \dots, l_n\sigma_{\normalise{}})$ is not a normal form of $\RS$.

  For the inductive step, let $s = f(s_1, \dots, s_i, \dots, s_n)$ and 
  $t = f(s_1, \dots, t_i, \dots, s_n)$ where $s_i \irew t_i$. Induction hypothesis 
  gives $\normalise{s_i} \irew[\RSbot] \normalise{t_i}$. 
  Then 
  $$
  \normalise{s} = f(\normalise{s_1}, \dots, \normalise{s_i}, \dots, \normalise{s_n}) 
  \irew[\RSbot] f(\normalise{s_1}, \dots, \normalise{t_i}, \dots, \normalise{s_n}) 
  \irss[\RSbot] \normalise{t} \tpkt
  $$
  For the first equality we employ that 
  $\normalise{s_i} \not \in \NF(\RS)$. This concludes the proof.
\end{proof}

An immediate consequence is the following.
\begin{lemma}\label{l:rss:simul}
  Let $\RS$ be a predicative recursive TRS. Then
  $\RSbot$ is a completely defined TRS compatible with $\gpop$. 
  Further $\dheight(s, \irew) \leqslant \dheight(s, \irew[\RSbot])$ 
  for all basic terms $s$.
\end{lemma}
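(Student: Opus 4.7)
The plan is to verify the three assertions of the lemma separately, each following by a direct computation from Definition~\ref{d:rss} together with Lemma~\ref{l:rss:step}. For complete definedness of $\RSbot$, I would argue that any term $f(\seq{v})$ with $f\in\DS$ and $v_i\in\Val$ either reduces under some rule of $\RS$, or is itself a normal form of $\RS$ with defined root symbol. In the latter case $\RSS$ contains by construction a rule $f(\seq{v})\to\bot$, so $f(\seq{v})$ is $\RSbot$-reducible in either case. Consequently every defined symbol is completely defined in $\RSbot$.

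For compatibility, since $\RS\subseteq{\gpop}$ holds by assumption of predicative recursiveness, it suffices to orient each rule $t\to\bot$ of $\RSS$. Here $t$ has defined root symbol $f$ by construction, and the precedence has been extended so that $\bot$ is minimal; in particular $f\sp\bot$. I would then invoke case $\cpop{ia}$ of Definition~\ref{d:gpop}: since $\bot$ is nullary, the premises concerning its arguments all collapse to vacuous conditions, so $t\cpop{ia}\bot$ is immediate. Note that the infinitude of $\RSS$ is harmless, as compatibility is merely the set-theoretic inclusion $\RSbot\subseteq{\gpop}$.

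For the derivation height bound, I would split on whether the basic term $s=f(\seq{v})$ lies in $\NF(\RS)$. If so, $\dheight(s,\irew)=0$ and the inequality is trivial. Otherwise $s\notin\NF(\RS)$, so the top symbol does not match any $\RSS$-rule; moreover its arguments $v_i\in\Val$ contain only constructors and are thus $\RSS$-normal, whence $\normalise{s}=s$. Iterating Lemma~\ref{l:rss:step} along any innermost $\RS$-derivation $s\irew s_1\irew\cdots\irew s_n$ of maximal length $n=\dheight(s,\irew)$ then yields an $\RSbot$-derivation $s=\normalise{s}\irst[\RSbot]\normalise{s_1}\irst[\RSbot]\cdots\irst[\RSbot]\normalise{s_n}$ of length at least $n$, proving the claim.

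The only delicate point is the second bullet: one must be careful to confirm that the clause $\cpop{ia}$ can indeed be invoked with the nullary right-hand side $\bot$ (which is straightforward from the phrasing of Definition~\ref{d:gpop}), and that for basic $s$ the normalisation $\normalise{s}$ collapses either to $s$ or to a case already covered by the trivial branch. Both observations rest ultimately on $\RS$ being a constructor TRS, which ensures that no subterm at a value-position can ever trigger a rule from $\RSS$.
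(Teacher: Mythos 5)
Your proposal is correct and follows essentially the same route as the paper: complete definedness and compatibility (the latter via $\cpop{ia}$ with $\bot$ minimal in the extended precedence) are read off from Definition~\ref{d:rss}, and the derivation-height bound is obtained by iterating Lemma~\ref{l:rss:step} along a maximal innermost derivation and observing that $\normalise{s}=s$ for a basic term $s$ not in normal form. The paper's own proof is, if anything, terser on the first two assertions, so nothing further is needed.
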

\begin{proof}
  We have already observed that $\RSbot$ is compatible with $\gpop$. 
  Moreover it is completely defined by definition.
  As $\RS$ is predicative recursive, it is a constructor TRS.
  To prove the second halve of the assertion, consider a maximal derivation
  $$
  \mparbox[c]{1cm}{s} \mparbox[c]{1cm}{\irew[\RS]}
  \mparbox[c]{1cm}{s_1}  \mparbox[c]{1cm}{\irew[\RS]}
  \mparbox[c]{1cm}{s_2}  \mparbox[c]{1cm}{\irew[\RS]}
  \mparbox[c]{1cm}{\cdots}  \mparbox[c]{1cm}{\irew[\RS]}
  \mparbox[c]{1cm}{s_m}
  $$
  starting from a basic term $s$, i.e., $m = \dheight(s, \irew)$.
  If $m = 0$ the lemma is immediate.
  For the case $m > 0$, 
  $m$-times application of Lemma~\ref{l:rss:step} gives
  $$
  \mparbox[c]{1cm}{\normalise{s}}  \mparbox[c]{1cm}{\irew[\RS]}
  \mparbox[c]{1cm}{\normalise{s_1}}  \mparbox[c]{1cm}{\irew[\RS]}
  \mparbox[c]{1cm}{\normalise{s_2}}  \mparbox[c]{1cm}{\irew[\RS]}
  \mparbox[c]{1cm}{\cdots}  \mparbox[c]{1cm}{\irew[\RS]}
  \mparbox[c]{1cm}{\normalise{s_m}} \tpkt
  $$
  
  Hence overall, $\dheight(s, \irew) \leqslant \dheight(\normalise{s}, \irew[\RSbot])$.
  Since by assumption $s$ is a basic term not in normal form, 
  we have $\normalise{s} = s$ and the lemma follows.
\end{proof}
We arrive at the proof of the main theorem:

\begin{proof}[Proof of Theorem~\ref{t:popstar}]
  Let $\RS$ be a predicative recursive TRS and fix an 
  arbitrary basic term $s = f(\pseq[m][n]{u})$. 
  Set $\ell \defsym \max\{\ar(\fn) \mid \fn \in \FSn \} \cup \{2\cdot\size{r} \mid {l \to r} \in \RSbot\}$
  and note that $\ell$ is well defined since $\FSn$ and $\RS$ are finite. 
  Putting things together we see
  \begin{align*}
    \dheight(s, \irew) & \leqslant \dheight(s, \irew[\RSbot])
    && \text{using Lemma~\ref{l:rss:simul}} \\
    & \leqslant \Slow[\ell](\ints(s)) 
    && \text{using Lemma~\ref{l:embed:bound}}\\
    & \in \bigO\bigl((\max_{i=1}^m \depth(u_i))^{d}\bigr) 
    && \text{using Corollary~\ref{c:pop}}
  \end{align*}
  where $d$ depends only on $\ell$.
\end{proof}

%%% Local Variables: 
%%% mode: latex
%%% TeX-master: "paper"
%%% End: 

\section{An Order-Theoretic Characterisation of the Polytime Functions}\label{s:icc}
We now present the application of polynomial path orders
in the context of \emph{implicit computational complexity (ICC)}.
As by-product of Proposition~\ref{p:invariance} and Theorem~\ref{t:popstar} we immediately obtain
that $\POPSTAR$ is \emph{sound} for $\FNP$ respectively $\FP$.
\begin{theorem}\label{t:icc:soundness}
  Let $\RS$ be a predicative recursive TRS.\@ 
  For every relation $\sem{f}$ defined by $\RS$, 
  the functional problem $F_f$ associated with $\sem{f}$ is in $\FNP$.
  Moreover, if $\RS$ is confluent than $\sem{\mf} \in \FP$.
\end{theorem}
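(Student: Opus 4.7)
The plan is to compose three ingredients already established in the paper: the polynomial runtime bound of Theorem~\ref{t:popstar}, the invariance theorem Proposition~\ref{p:invariance:1}, and the characterisation of $\FNP$ given by Proposition~\ref{p:fnp}.

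First, I would fix a predicative recursive TRS $\RS$ and a defined symbol $f$. Theorem~\ref{t:popstar} furnishes a polynomial $q$ such that for every basic term $f(\vec{v})$ of size $n$, any innermost reduction has length at most $q(n)$. If necessary, I would replace $q$ by $q(n)+n$ so that the runtime meets the ``at least linear'' hypothesis of Proposition~\ref{p:invariance:1} without affecting its polynomial character.

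Second, I would invoke Proposition~\ref{p:invariance:1}(1) to obtain a nondeterministic Turing machine $N_f$ that computes \emph{some} normal form $w$ of $f(\vec{v})$ in time $p_\RS(q(n))$, polynomial in $n$. To align this with the graph of $F_f$, I would extend $N_f$ so as to additionally verify that $w$ is a value (no defined symbol appears in $w$) and that $w$ is accepting (no pattern $p\in\NA$ instantiates to $w$). Both checks are polynomial, since $\NA$ is a finite set of patterns and matching is polytime. The extended machine rejects on either failure and otherwise outputs $w$, so it computes $F_f$ in nondeterministic polynomial time. Proposition~\ref{p:fnp} then yields $F_f\in\FNP$.

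For the second part, suppose $\RS$ is confluent. Then $f(\vec{v})$ has at most one normal form, so $\sem{f}$ is a (partial) function. I would invoke Proposition~\ref{p:invariance:1}(2) instead to obtain a deterministic polynomial-time Turing machine computing this unique normal form, and then perform the same value-and-acceptance check deterministically, yielding $\sem{f}\in\FP$. I anticipate no serious obstacle, since the heavy lifting has already been carried out in Theorem~\ref{t:popstar} and Proposition~\ref{p:invariance:1}; the only mildly delicate point is to integrate the value and acceptance test into the (non)deterministic procedure in such a way that rejection of non-value or non-accepting outputs does not eliminate legitimate accepting computations.
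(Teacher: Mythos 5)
Your proposal is correct and follows essentially the same route as the paper: the theorem is stated there as an immediate by-product of Theorem~\ref{t:popstar} combined with Proposition~\ref{p:invariance} (which is itself the composition of Proposition~\ref{p:fnp} with Proposition~\ref{p:invariance:1}). You merely unfold that composition and spell out the value-and-acceptance check that the paper leaves implicit in Definition~\ref{d:computation}.
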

Although it is decidable whether a TRS $\RS$ is predicative recursive (we 
present a sound and complete automation in Section~\ref{s:exps}), 
confluence is undecidable in general. To get a decidable result for $\FP$, 
one can replace by an decidable criteria, for instance orthogonality. 

We will now also establish that \POPSTAR\ is \emph{complete} for $\FP$, that is, 
every function $f \in \FP$ is computed by some confluent (even orthogonal) predicative recursive TRS.\@
For this we use the \emph{term rewriting formulation} of the 
predicative recursive functions from~\cite{BW96}. 

\begin{definition}\label{d:Rb}
For each $k,l \in \N$ the set of function symbols $\Fb^{k,l}$ 
with $k$ normal and $l$ safe argument positions is the 
least set of function symbols such that
\begin{enumerate}
\item $\epsilon \in \Fb^{0,0}$, $\mS_1,\mS_2 \in \Fb^{0,1}$, $\m{P} \in \Fb^{0,1}$, $\m{C} \in \Fb^{0,4}$
  and $\m{I}^{k,l}_j, \m{O}^{k,l} \in \Fb^{k,l}$, where $j = 1, \dots, k+l$; 
\item if $\vec{r} = \seq[m]{r} \in \Fb^{k,0}$, $\vec{s} = \seq[n]{s} \in \Fb^{k,l}$
  and $h \in \Fb^{m,n}$ then $\m{SC}[h,\vec{r}, \vec{s}] \in \Fb^{k,l}$; 
\item if $g \in \Fb^{k,l}$ and $h_1,h_2 \in \Fb^{k+1,l+1}$ then $\m{SRN}[g,h_1,h_2] \in \Fb^{k+1,l}$; 
\end{enumerate}
The \emph{predicative signature} is given by $\Fb \defsym \bigcup_{k,l \in \N} \Fb^{k,l}$.
Only the constant $\epsilon$ and \emph{dyadic successors} $\mS_1,\mS_2$, 
which serve the purpose of encoding natural numbers in binary, 
are constructors.
The remaining symbols from $\Fb$ are defined 
by the following (infinite) schema of rewrite rules $R_\B$.
Here we $k,l$ range over $\N$ and we abbreviate 
$\vec{x} = \seq[k]{x}$ and $\vec{y} = \seq[l]{y}$ for $k$ respectively $l$ distinct variables. 
\medskip

\begin{tabular}{@{\quad}r@{~}c@{~}l@{\hspace{-10mm}}r}
  \multicolumn{4}{@{}l}{\textbf{Initial Functions}}\\[2mm]
  $\m{P}(\sn{}{\epsilon})$ & $\to$ & $\epsilon$ \\[1mm]
  $\m{P}(\sn{}{\mS_i(\sn{}{x})})$ & $\to$ & $x$ &  for $i=1,2$ \\[1mm]
  $\m{I}^{k,l}_j(\svec{x}{y})$ & $\to$ & $ x_j$ & for all $j = 1,\dots,k$ \\[1mm]
  $\m{I}^{k,l}_j(\svec{x}{y})$ & $\to$ & $y_{j-k}$ & for all $j =k+1, \dots, l+k$ \\[1mm]
  $\m{C}(\sn{}{\epsilon, y, z_1, z_2})$ & $\to$ & $y$ \\[1mm]
  $\m{C}(\sn{}{\mS_i(\sn{}{x}), y, z_1, z_2})$ & $\to$ & $z_i$ & for $i = 1, 2$ \\[1mm]
  $\m{O}(\svec{x}{y})$ & $\to$ & $\epsilon$ 
  \\[3mm]
  \multicolumn{4}{@{}l}{\textbf{Safe Composition} ($\m{SC}$)} \\[2mm]
    $\m{SC}[h,\vec{r}, \vec{s}](\svec{x}{y})$ & $\to$ & $h(\sn{\vec{r}(\sn{\vec{x}}{})}{\vec{s}(\svec{x}{y})})$
  \\[3mm]
  \multicolumn{4}{@{}l}{\textbf{Safe Recursion on Notation} ($\m{SRN}$)} \\[2mm]
    $\m{SRN}[g,h_1,h_2](\sn{\epsilon, \vec x}{\vec y})$ & $\to$ & $ g(\svec{x}{y})$ \\[1mm]
    $\m{SRN}[g,h_1,h_2](\sn{\mS_i (\sn{}{z}), \vec x}{\vec y}) 
    $ & $\to$ & $h_i(\sn{z, \vec x}{\vec y, \m{SRN}[g,h_1,h_2](\sn{z, \vec x}{\vec y})})$ & for $i = 1, 2$
\end{tabular}
\end{definition}

We emphasise that the above rules are all orthogonal.
Also, we stress that the system $R_\B$ is dupped \emph{infeasible} in~\cite{BW96}.
Indeed $R_\B$ admits an exponential lower bound on the derivation height
which has to do with effects caused by duplicating redexes as
explained already in Example~\ref{ex:dup} on page~\pageref{ex:dup}.
Therefore $R_\B$ is not (directly) suitable as a term-rewriting characterisation
of the predicative recursive functions. 
However this exponential lower-bound is only correct if we consider unrestricted rewriting.
The following proposition verifies that $R_\B$ generates only polytime computable functions.
% Note that we identify natural numbers with its binary representation using $\epsilon$ 
% and the dyadic successors $\mS_1, \mS_2$. 
\begin{proposition}\label{prop:Rf}{\cite[Lemma~5.2]{BW96}}
  Let $f \in \FP$. There exists a finite restriction $\RS_f \subsetneq R_\B$
  such that $\RS_f$ computes $f$.
\end{proposition}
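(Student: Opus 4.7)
The proposition is essentially a rewriting-theoretic restatement of the Bellantoni--Cook theorem $\B = \FP$ recalled in Section~\ref{s:intro}. Given $f \in \FP$, this theorem supplies a Bellantoni--Cook derivation of $f$, that is, a finite syntactic construction that builds $f$ from the initial functions listed in clause~(1) of Definition~\ref{d:Rb} by finitely many applications of safe composition \eqref{scheme:sc} and safe recursion on notation \eqref{scheme:srn}. I would read this derivation directly as a function symbol $g_f \in \Fb$, translating an application of \eqref{scheme:sc} into the outer symbol $\m{SC}[h,\vec r,\vec s]$ and an application of \eqref{scheme:srn} into $\m{SRN}[g,h_1,h_2]$.

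Since the derivation of $f$ is finite, only finitely many function symbols occur as syntactic subterms in $g_f$. Let $\RS_f$ be the finite restriction of $R_\B$ consisting of those defining rules whose root symbol is one of these syntactic subterms. By construction, $\RS_f$ is closed under the right-hand sides of its own rules: the $\m{SC}$-rule for $\m{SC}[h,\vec r,\vec s]$ only mentions the components $h$, $\vec r$, $\vec s$, and the $\m{SRN}$-rules for $\m{SRN}[g,h_1,h_2]$ only mention $g$, $h_1$, $h_2$ and the symbol itself.

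The core verification step is then to show that $\RS_f$ computes $f$, and more generally that for every $e \in \B$ appearing in the derivation tree of $f$, the symbol $g_e$ computes $e$ under $\RS_f$. I would prove this by induction on the Bellantoni--Cook derivation. The base cases, corresponding to the initial functions, are immediate from the explicit rules for $\epsilon$, $\mS_i$, $\m{P}$, $\m{I}^{k,l}_j$, $\m{C}$, $\m{O}^{k,l}$. The inductive case for safe composition uses the $\m{SC}$-rule to expose the subcomputations $\vec r(\sn{\vec x}{})$ and $\vec s(\svec{x}{y})$, to which the induction hypothesis applies; a final application of the hypothesis to $h$ finishes the step. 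The inductive case for safe recursion on notation uses a secondary induction on the length of the first normal argument (a dyadic word over $\epsilon,\mS_1,\mS_2$): the base case $\epsilon$ reduces via the first $\m{SRN}$-rule to $g(\svec{x}{y})$; the step case $\mS_i(\sn{}{z})$ unfolds one recursion level via the second $\m{SRN}$-rule and then invokes the hypotheses on $h_i$ and on the smaller recursive call $\m{SRN}[g,h_1,h_2](\sn{z,\vec x}{\vec y})$.

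The main obstacle is not conceptual but bookkeeping: one must check that the reductions produced by these inductive steps stay inside the finite fragment $\RS_f$ (which follows from the closure property noted above) and that values are handled consistently as the constructor terms over $\{\epsilon,\mS_1,\mS_2\}$ that encode binary strings in $\B$. Once these routine but slightly tedious bookkeeping checks are in place, the statement follows immediately from $\B = \FP$.
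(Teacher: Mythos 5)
The paper does not prove this proposition itself -- it is imported directly from Beckmann and Weiermann~\cite{BW96} -- and your reconstruction (extract a Bellantoni--Cook derivation of $f$ from $\B = \FP$, read it as a symbol of $\Fb$, restrict $R_\B$ to the rules of the finitely many component symbols, and verify by induction on the derivation, with a secondary induction on the dyadic recursion argument for the $\m{SRN}$ case, that the resulting orthogonal fragment computes $f$ under innermost rewriting) is precisely the standard argument behind that citation. It is sound; the only points needing the care you already flag are that the left-hand sides are exhaustive on the constructors $\epsilon,\mS_1,\mS_2$, so innermost normal forms of the relevant basic terms are again values, and that the evaluation order forced by innermost rewriting matches the order in which your induction hypotheses are applied.
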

We arrive at our completeness result.
\begin{theorem}\label{t:icc:completeness}
  For every $f \in \FP$ there exists an orthogonal predicative recursive
  TRS $\RS_f$ that computes $f$.
\end{theorem}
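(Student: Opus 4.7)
The plan is to extract a witness TRS from Proposition~\ref{prop:Rf} and then verify predicative recursiveness by constructing an explicit precedence and safe mapping that orients every rule scheme of $R_\B$ with $\gpop$.

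First, I would apply Proposition~\ref{prop:Rf} to obtain a finite $\RS_f \subsetneq R_\B$ computing $f$. Orthogonality of $\RS_f$ is essentially free: the rules of $R_\B$ are left-linear by inspection, and non-overlap follows from the case distinctions on $\{\epsilon, \mS_1, \mS_2\}$ in the pattern positions and from the fact that distinct instantiations of the parameterised symbols $\m{SC}[h,\vec r,\vec s]$ and $\m{SRN}[g,h_1,h_2]$ yield distinct root symbols; this is explicitly noted in the text following Definition~\ref{d:Rb}. The safe mapping is already dictated by the notation $f(\svec{x}{y})$, and for constructors ($\epsilon, \mS_1, \mS_2$) all positions are safe as required.

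For the precedence, I would define $\qp$ on the finite signature of $\RS_f$ so that constructors are minimal and every parameterised symbol sits strictly above its parameters: $\m{SC}[h,\vec r,\vec s] \sp h$ and $\m{SC}[h,\vec r,\vec s] \sp r_i, s_j$ for all $i,j$; and $\m{SRN}[g,h_1,h_2] \sp g, h_1, h_2$. This is well-founded on any finite subset, admissible in the sense of Definition~\ref{d:gpop} (constructors are only related to constructors), and preserves the partition into $\DS$ and $\CS$.

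The core of the argument is then rule-by-rule verification. The initial functions ($\m{P}$, $\m{I}^{k,l}_j$, $\m{C}$, $\m{O}^{k,l}$) are immediate from $\cpop{st}$ together with one application of $\cpop{ia}$ against the constructor $\epsilon$. For safe composition, orient $\m{SC}[h,\vec r,\vec s](\svec{x}{y}) \gpop h(\sn{\vec r(\sn{\vec x}{})}{\vec s(\svec x y)})$ by $\cpop{ia}$ using $\m{SC}[h,\vec r,\vec s] \sp h$: each normal argument $r_i(\sn{\vec x}{})$ is handled by $\csq{ia}$ followed by $\csq{st}$ on the $x_k$'s, and each safe argument $s_j(\svec{x}{y})$ by a nested $\cpop{ia}$; the side condition on safe arguments is vacuous because every $s_j$ lies strictly below $\m{SC}[h,\vec r,\vec s]$ and the remaining subterms are variables. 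For the base case of safe recursion, $\cpop{ia}$ with $\m{SRN}[g,h_1,h_2] \sp g$ suffices. The recursive rule is the only subtle one: orient it by $\cpop{ia}$ against $h_i$; the normal arguments $z, \vec x$ are handled by $\csq{st}$ (using that $z$ is a proper subterm of $\mS_i(\sn{}{z})$), the $\vec y$'s by $\cpop{st}$, and the recursive call $\m{SRN}[g,h_1,h_2](\sn{z,\vec x}{\vec y})$ is oriented by $\cpop{ep}$ with $\mset{\mS_i(\sn{}{z}),\vec x} \gpopmul \mset{z,\vec x}$ (pick $X=\mset{\mS_i(\sn{}{z})}$, $Y=\mset{z}$, using $\mS_i(\sn{}{z}) \cpop{st} z$) and $\mset{\vec y} \geqpopmul \mset{\vec y}$. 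The main obstacle lies here, in the ``at most one'' safe-argument side condition of clause \cpop{ia}: the recursive call contains $\m{SRN}[g,h_1,h_2]$ which is \emph{not} in $\sigbelow{\m{SRN}[g,h_1,h_2]}{\FS}$, but since the remaining safe arguments of $h_i$ are the variables $\vec y$, exactly one offending position is allowed and the side condition is met. Combining all orientations, $\RS_f \subseteq {\gpop}$, so $\RS_f$ is predicative recursive and hence the required orthogonal witness for $f$.
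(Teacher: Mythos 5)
Your proposal is correct and follows essentially the same route as the paper: obtain $\RS_f \subsetneq R_\B$ from Proposition~\ref{prop:Rf}, observe orthogonality, and orient the rule schemes of $R_\B$ with $\gpop$ under the indicated safe mapping and a precedence placing each parameterised symbol strictly above its parameters (the paper realises this precedence via a numeric weight $\lh$ rather than by generating it directly, which is an inessential difference). Your rule-by-rule verification, including the treatment of the ``at most one'' side condition of clause \cpop{ia} for the recursive \textsf{SRN} rule, is exactly the check the paper declares straightforward and leaves implicit.
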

\begin{proof}
  Take the TRS $\RS_f \subsetneq R_\B$ from Proposition~\ref{prop:Rf} that computes $f$.
  Obviously $\RS_f$ is orthogonal hence confluent, 
  it remains to verify that $\RS_f$ is compatible with some instance $\gpop$.
  To define $\gpop$ we use the separation of normal from safe argument positions
  as indicated in the rules.
  To define the precedence underlying $\gpop$, we 
  first define a mapping $\lh$ from the signature of $\Fb$ 
  into the natural numbers as follows:
  \begin{itemize}
  \item $\lh(f) \defsym 0$ if $f$ is one of $\epsilon$, $\mS_0$, $\mS_1$, $\m{C}$, $\m{P}$, $\m{I}^{k,l}_j$ or $\m{O}^{k,l}$;
  \item $\lh(\m{SC}[h,\vec{r}, \vec{s}]) \defsym 1 + \lh(h) + \sum_{r \in \vec{r}} \lh(r) + \sum_{s \in \vec{s}}  \lh(s)$;
  \item $\lh(\m{SRN}[g,h_1,h_2]) \defsym 1 + \lh(g) + \lh(h_1) + \lh(h_2)$. 
  \end{itemize}
  Finally for each pair of function symbol $f$ and $g$ occurring in $\RS_f$ set
  $f \sp g$ if and only if $\lh(f) > \lh(g)$.
  Then $\sp$ defines an admissible precedence. 
  It is straight forward to verify that $\RS_f \subseteq {\gpop}$ where
  $\gpop$ is based on the precedence $\sp$ and the safe mapping as indicated in
  Definition~\ref{d:Rb}.
\end{proof}

By Theorem~\ref{t:icc:soundness} and Theorem~\ref{t:icc:completeness} we thus obtain a
precise characterisation of the class polytime computable functions.

\begin{corollary}\label{c:FP}
  The class of confluent (or orthogonal) predicative recursive TRSs define exactly $\FP$.
\end{corollary}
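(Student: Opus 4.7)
The plan is to prove the two inclusions separately and note that both cases (confluent and orthogonal) follow from the same two theorems, relying on the standard fact recalled in the preliminaries that every orthogonal TRS is confluent.

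For the soundness inclusion, I would take an arbitrary confluent predicative recursive TRS $\RS$ and apply Theorem~\ref{t:icc:soundness} directly: the second clause of that theorem yields $\sem{f} \in \FP$ for every defined symbol $f$, so every function computable by a confluent predicative recursive TRS lies in $\FP$. Since every orthogonal TRS is confluent, the same holds a~fortiori for orthogonal predicative recursive TRSs, which handles both versions of the statement in one go.

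For the completeness inclusion, I would start from an arbitrary $f \in \FP$ and invoke Theorem~\ref{t:icc:completeness} to obtain an orthogonal predicative recursive TRS $\RS_f$ computing $f$. Since orthogonality entails confluence, $\RS_f$ witnesses both the ``orthogonal'' and the ``confluent'' version of the characterisation simultaneously, so $f$ is definable by a TRS of the required form.

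Combining the two inclusions gives the claimed equality of classes. There is no real obstacle here: the work has all been done in the two preceding theorems, and the only observation needed at this level is that orthogonality implies confluence, so the orthogonal and confluent formulations of the corollary are equivalent and both coincide with $\FP$.
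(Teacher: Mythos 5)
Your proposal is correct and matches the paper's own argument exactly: the corollary is stated as an immediate consequence of Theorem~\ref{t:icc:soundness} (soundness for confluent systems) and Theorem~\ref{t:icc:completeness} (completeness via orthogonal systems), with the observation that orthogonality implies confluence bridging the two formulations. Nothing further is needed.
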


%%% Local Variables: 
%%% mode: latex
%%% TeX-master: "paper"
%%% End: 

\section{A Non-Trivial Closure Property of the Polytime Computable Functions}\label{s:popstarps}

Bellantoni already observed that the class $\B$ is closed under
\emph{predicative recursion on notation with parameter substitution} (scheme \eqref{scheme:srnps}). 
Essentially this recursion scheme allows substitution on \emph{safe} argument positions. More precise, 
a new function $f$ is defined by the equations
\begin{equation}\label{scheme:srnps} \tag{\ensuremath{\mathsf{SRN_{PS}}}}
  \begin{array}{r@{\;}c@{\;}l}
   f(\sn{0,\vec{x}}{\vec{y}}) & = & g(\sn{\vec{x}}{\vec{y}}) \\
   \qquad\quad f(\sn{2z + i,\vec{x}}{\vec{y}}) & = &
   h_i(\sn{z,\vec{x}}{\vec{y},f(\sn{z,\vec{x}}{\vec{p}(\svec{x}{y})})}),~i \in \set{1,2} \tpkt
  \end{array}
\end{equation}
Notably closure of $\B$ under parameter substitution has been proven also been Beckmann and Weiermann~\cite{BW96}
based on rewriting techniques.
In the following we introduce a polynomial path order beyond MPO, 
the \emph{polynomial path order with parameter substitution} (\emph{\POPSTARP} for short).
The next definition introduces $\POPSTARP$.
It is a variant of $\POPSTAR$ where 
clause $\cpop{ep}$ has been modified and allows computation at safe argument positions.
\begin{definition}\label{d:gpopps}
  Let $s, t \in \TERMS$ such that $s = f(\pseq[k][l]{s})$.
  Then $s \gpopps t$ with respect to the precedence $\qp$ and safe mapping $\safe$ if either
  \begin{enumerate}
  \item\label{d:gpopps:st} $s_i \geqpopps t$ for some $i \in \{1,\dots,k+l\}$, or
  \item\label{d:gpopps:ia} $f \in \DS$, $t = g(\pseq[m][n]{t})$ where $f \sp g$ 
    and the following conditions hold:
    \begin{itemize}
    \item $s \gsq t_j$ for all normal argument positions $j = 1,\dots,m$;
    \item $s \gpopps t_j$ for all safe argument positions $j = m+1,\dots,m+n$;
    \item $t_j \not\in \TA(\sigbelow{\Fun(s)}{\FS},\VS)$ for at most one safe argument position $j \in \{m+1,\dots,m+n\}$;
    \end{itemize}
  \item\label{d:gpopps:ep} $f \in \DS$, $t = g(\pseq[m][n]{t})$ where $f \ep g$
    and the following conditions hold:
    \begin{itemize}
    \item $\mset{\seq[k]{s}} \gpopmulps \mset{\seq[m]{t}}$;
    \item $s \gpopps t_j$ and $t_j \in \TA(\sigbelow{\Fun(s)}{\FS},\VS)$ for all safe argument positions $j = m+1, \dots, m+n$.
    \end{itemize}
  \end{enumerate}
  Here ${\geqpopps} \defsym {\gpopps \cup \eqis}$.
\end{definition}

We adapt the notion of predicative recursive TRS to \POPSTARP\ in the obvious way.
It is not difficult to see that $\POPSTARP$ extends the analytic power of $\POPSTAR$.
\begin{lemma}\label{l:psextends}
  For any underlying admissible precedence $\qp$, ${\gpop} \subseteq {\gpopps}$.
\end{lemma}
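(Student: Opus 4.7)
The plan is to argue by routine induction on the derivation $s \gpop t$ according to Definition~\ref{d:gpop} and to match each of its three principal clauses with the corresponding clause of Definition~\ref{d:gpopps}. The auxiliary order $\gsq$ is literally the same in both definitions, so any premise phrased in terms of $\gsq$ transfers verbatim, and the only work is in upgrading premises that mention $\gpop$ into ones mentioning $\gpopps$.

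First I would dispose of the two easy cases. For clause $\cpop{st}$ the defining conditions are identical to those of the $\cpop{st}$ clause of $\gpopps$ up to replacing $\gpop$ by $\gpopps$ in the recursive premise, so the induction hypothesis closes the case immediately. Clause $\cpop{ia}$ is equally direct: the $\cpop{ia}$ clauses of $\gpop$ and of $\gpopps$ share exactly the same side conditions, the $\gsq$ premises are preserved, the restriction ``at most one safe $t_j$ outside $\TA(\sigbelow{\Fun(s)}{\FS},\VS)$'' is common to both, and the $\gpop$ premises on safe arguments lift to the required $\gpopps$ premises by the induction hypothesis.

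The substantive case is $\cpop{ep}$, where $f \ep g$, the normal arguments satisfy $\mset{\seq[k]{s}} \gpopmul \mset{\seq[m]{t}}$, and the safe arguments satisfy $\mset{\seq[k+l][k+1]{s}} \geqpopmul \mset{\seq[m+n][m+1]{t}}$. The strict normal-argument comparison lifts pointwise by the induction hypothesis to the required $\gpopmulps$ comparison. For each safe $t_j$ the weak multiset comparison produces a safe argument $s_i$ of $s$ with $s_i \geqpop t_j$; applying $\cpop{st}$ with this $s_i$ yields $s \gpop t_j$ and hence $s \gpopps t_j$ by the induction hypothesis, matching the ``$s \gpopps t_j$'' requirement of the $\gpopps$ version of $\cpop{ep}$.

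The hard part will be establishing the extra symbol-restriction $t_j \in \TA(\sigbelow{\Fun(s)}{\FS},\VS)$ that is demanded by the $\gpopps$ version of $\cpop{ep}$ but not explicitly stated in the $\gpop$ version. To discharge it I plan to prove a small auxiliary monotonicity lemma by a side induction on $\gpop$, tracking the function symbols on right-hand sides: roughly, whenever $u \gpop v$ holds, every symbol of $v$ either already occurs in $u$ or lies strictly below some symbol of $u$ in the precedence $\qp$, and in the proper recursive cases (both $\cpop{ia}$ and $\cpop{ep}$) the symbols introduced below the root are strictly below $\Fun(u)$. Combined with the containment $\Fun(s_i) \subseteq \Fun(s)$ arising from $s_i$ being a direct safe subterm of $s$, and with admissibility of $\qp$ to keep defined symbols and constructors separated, this will confine $\Fun(t_j)$ inside $\sigbelow{\Fun(s)}{\FS}$, completing case $\cpop{ep}$ and thereby the induction.
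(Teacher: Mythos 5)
The paper offers no proof of this lemma (it is stated as evident after ``It is not difficult to see\dots''), so there is nothing to compare against; your argument has to stand on its own. Your handling of $\cpop{st}$ and $\cpop{ia}$ is correct and routine, and in case $\cpop{ep}$ both the lifting of the strict multiset comparison on normal arguments and the derivation of $s \gpopps t_j$ for each safe $t_j$ (pick a covering safe $s_i$ with $s_i \geqpop t_j$, upgrade by the induction hypothesis, then apply clause $\cpop{st}$ of $\gpopps$) are fine.

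The gap is exactly where you locate it, and the auxiliary lemma you sketch does not close it. At best it yields $\Fun(t_j) \subseteq \Fun(s) \cup \sigbelow{\Fun(s)}{\FS}$ modulo $\ep$, whereas clause $\cpopps{ep}$ demands $t_j \in \TA(\sigbelow{\Fun(s)}{\FS},\VS)$, i.e.\ that \emph{every} symbol of $t_j$ lie \emph{strictly below} some symbol of $s$ in the precedence. Occurring in $\Fun(s)$ is not the same as lying in $\sigbelow{\Fun(s)}{\FS}$: the latter excludes precisely the symbols of $s$ that are $\sp$-maximal among $\Fun(s)$, so the containment $\Fun(s_i) \subseteq \Fun(s)$ buys you nothing here. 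Concretely, for constructors $c,h$ and a defined symbol $f$, consider $s = f(\sn{c(x)}{h(y)})$ and $t = f(\sn{x}{h(y)})$. Then $s \cpop{ep} t$ for every admissible precedence, since $\mset{c(x)} \gpopmul \mset{x}$ and the safe-argument multisets coincide; but the corresponding instance of $\cpopps{ep}$ requires $h(y) \in \TA(\sigbelow{\{f,c,h\}}{\FS},\VS)$, i.e.\ $f \sp h$ or $c \sp h$, which admissibility does not guarantee, and no other clause of $\gpopps$ orients this pair. So your side induction cannot confine $\Fun(t_j)$ inside $\sigbelow{\Fun(s)}{\FS}$; the inclusion in this generality in fact hinges on an extra convention on the precedence (such as every defined symbol lying strictly above every constructor, as is hard-wired in the paper's propositional encoding), and even then safe arguments containing defined symbols not strictly below $\Fun(s)$ remain to be dealt with. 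You need either to import such a convention explicitly and handle that residual case, or to find an argument that avoids clause $\cpopps{ep}$ here entirely; your sketch provides neither.
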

Note that \POPSTARP~is strictly more powerful than \POPSTAR, as witnessed by 
the following example.

\begin{example}\label{ex:rsrev}
Consider the TRS $\RSrev$ defining the reversal of
lists in a tail recursive fashion:
\begin{align*}
  \mrev(\sn{xs}{}) & \to \mrevt(\sn{xs}{\mnil}) &
  \mrevt(\sn{\nil}{ys}) & \to ys &
  \mrevt(\sn{\mcs(x, xs)}{ys}) & \to \mrevt(\sn{xs}{\mcs(x,ys)}) \tpkt
\end{align*}
Then $\RSrev \subseteq {\gpopps}$ with precedence 
$\mrev \sp \mrevt \sp \mnil \ep \mcs$.
Note that orientation of the last rule with $\gpopps$ 
breaks down to $\mcs(x, xs) \gpopps xs$ and  $\mrevt(\sn{\mcs(x, xs)}{ys}) \gpopps \mcs(x,ys)$.
On the other hand, $\gpop$ fails as the corresponding clause $\cpop{ep}$ requires $ys \geqpop \mcs(x,ys)$.
\end{example}

The order \POPSTARP\ is complete for the class of
polytime computable functions.
To show that it is sound, we prove that 
\POPSTARP~induces polynomially bounded runtime complexity in the sense of Theorem~\ref{t:popstar}.
The crucial observation is that the embedding of $\irew$ into $\gpopv$ does not break
if we relax compatibility constraints to $\RS \subseteq {\gpopps}$.

\begin{lemma}\label{l:embed:root:ps}
  Suppose $s = f(\pseq{s}) \in \Tb$, $t \in \TERMS$ and $\ofdom{\sigma}{\VS \to \Val}$. 
  Then for predicative interpretation $\intq \in \{\ints, \intn\}$
  we have 
  $$
   s \gpopps t \quad \IImp \quad \intq(s\sigma) \gpopv[2\cdot\size{t}] \intq(t\sigma) \tpkt
  $$
\end{lemma}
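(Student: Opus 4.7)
The plan is to adapt the proof of Lemma~\ref{l:embed:root} to $\gpopps$ by induction on the derivation of $s \gpopps t$, establishing the same three auxiliary assertions used there: (i)~$\fn(\intn(s_1\sigma), \dots, \intn(s_k\sigma)) \gpopv[2\cdot\size{t}] \ints(t\sigma)$; (ii)~the strengthening of~(i) to $\gppv[2\cdot\size{t}]$ whenever $t \in \termsbelow[\Fun(s)]$; and (iii)~$\fn(\intn(s_1\sigma), \dots, \intn(s_k\sigma)) \append \NM{s\sigma} \gpopv[2\cdot\size{t}] \intn(t\sigma)$. Since the clauses~\ref{d:gpopps:st} and~\ref{d:gpopps:ia} are syntactically identical to the corresponding clauses of $\gpop$, and since Lemma~\ref{l:gpop:val} continues to hold for $\gpopps$---the only clause of $\gpopps$ applicable when the left-hand side is a value is the subterm clause, identical to that of $\gpop$---these two cases carry over verbatim from the proof of Lemma~\ref{l:embed:root}.

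The only genuinely new content is the equivalence case $s \gpopps t$ via clause~\ref{d:gpopps:ep}, where $t = g(\pseq[m][n]{t})$ with $f \ep g$. The relaxation from $\gpop$ to $\gpopps$ in this clause replaces the weak multiset comparison on safe arguments with the requirement that $s \gpopps t_j$ and $t_j \in \TA(\sigbelow{\Fun(s)}{\FS}, \VS)$ for every safe argument $t_j$. Because $f \ep g$ precludes $t \in \termsbelow[\Fun(s)]$, Assertion~(ii) is vacuous. For Assertions~(i) and~(iii), the normal-argument part behaves as in the corresponding $\gpop$ case: the descent $\mset{\seq[k]{s}} \gpopmulps \mset{\seq[m]{t}}$ between values translates via Lemma~\ref{l:gpop:val} into a $\gpopv[2\cdot\size{t}-1]$ multiset descent on the $\intn$-images, so that
\[
\fn(\intn(s_1\sigma), \dots, \intn(s_k\sigma)) \cpopv{ep}[2\cdot\size{t}-1] \gn(\intn(t_1\sigma), \dots, \intn(t_m\sigma))\tpkt
\]
For each safe argument $t_j$ the condition $t_j \in \termsbelow[\Fun(s)]$ makes IH-Assertion~(ii) applicable, supplying the stronger $\gppv[2\cdot\size{t_j}]$ embedding of $\ints(t_j\sigma)$; a routine propagation then lifts this to a $\gppv[2\cdot\size{t}-1]$ relation between $\fn(\dots)$ and each individual element of $\ints(t_j\sigma)$. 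Assertion~(i) follows by a single $\cpopv{ialst}[2\cdot\size{t}]$ step, using the displayed $\cpopv{ep}$ step as the unique $\gpopv$-oriented component and letting the safe-argument elements be oriented by $\gppv$. Assertion~(iii) is obtained by the analogous $\cpopv{ms}[2\cdot\size{t}]$ step, absorbing in addition the $\theconst$'s arising from $\NM{t\sigma}$ via $\fn \cppv{ia}[2\cdot\size{t}-1] \theconst$.

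The main technical obstacle I anticipate is that the length bounds on $\intq(t\sigma)$ demanded by clauses~\ref{d:gpopv:ialst} and~\ref{d:gpopv:ms} are, in the original proof, provided by Lemma~\ref{l:int:len}, whose verification for the equivalence case crucially appeals to Lemma~\ref{l:gpop:val} to conclude that safe arguments remain values. Under $\gpopps$, safe arguments may be genuinely reduced, so this step fails and a separate argument is required. Fortunately, the constraint $t_j \in \TA(\sigbelow{\Fun(s)}{\FS}, \VS)$ together with the fact that every variable occurring in $t_j$ also occurs in $s$ yields $\norm{t_j\sigma} \leqslant \size{t_j} + \max_i \norm{s_i\sigma}$, and combined with $\len(\ints(t_j\sigma)) \leqslant \size{t_j}$ from Lemma~\ref{l:int:len}(\ref{l:int:len:S}) this recovers the same length bound on $\len(\intn(t\sigma))$ as in Lemma~\ref{l:int:len}(\ref{l:int:len:gpop}). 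With the adapted length estimates in hand, the remaining bookkeeping proceeds exactly as in the proof of Lemma~\ref{l:embed:root}.
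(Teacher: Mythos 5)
Your proposal is correct and takes essentially the same route as the paper: only the new clause $\cpopps{ep}$ needs treatment, with the multiset descent on normal arguments yielding the $\cpopv{ep}[2\cdot\size{t}-1]$ step, the condition $t_j \in \termsbelow[\Fun(s)]$ feeding Assertion~(ii) of the induction hypothesis for the safe arguments, and the conclusion obtained via $\cpopv{ialst}$ respectively $\cpopv{ms}$. The only difference is one of explicitness: the paper simply asserts that Lemma~\ref{l:int:len} continues to hold with $\gpop$ replaced by $\gpopps$, whereas you correctly identify that its proof for the equivalence case relied on safe arguments being values and supply the needed replacement argument for the length bound.
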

\begin{proof}
  First one verifies that 
  Lemma~\ref{l:int:len} holds even if we replace 
  $\gpop$ by $\gpopps$. 
  In particular, the assumptions give 
  \begin{align}
    \label{l:embed:root:ps:len}
    \len(\intn(t\sigma)) \leqslant 2\cdot\size{t} + \max \{ \norm{s_1\sigma}, \dots, \norm{s_k\sigma}, \norm{s\sigma}\}
  \end{align}
  The proof proceeds then in correspondence to 
  Lemma~\ref{l:embed:root} by induction on
  $\gpopps$. We cover only the new case.
  Let $s$, $t$, $\sigma$ be as given in the lemma.
  \begin{description}[leftmargin=0.3cm]
  \item[{\dcase{$s \cpopps{ep} t$}}]
    Then $t = g(\pseq[m][n]{t})$ where $f \ep g$.
    Further, the assumption gives
    $\mset{\seq[k]{s}} \gpopmul \mset{\seq[m]{t}}$.
    As $t \not \in \termsbelow[\Fun(s)]$ 
    it suffices to verify Property~\ref{l:er1} and Property~\ref{l:er3}
    from Lemma~\ref{l:embed:root}.
    Exactly as in the corresponding case of Lemma~\ref{l:embed:root} we see
    \begin{equation}
      \label{eq:root:ep:ps}
      \fn(\intn(s_1\sigma), \dots, \intn(s_k\sigma)) \cpopv{ep}[2\cdot\size{t} -1] \gn(\intn(t_1\sigma), \dots, \intn(t_m\sigma)) \tpkt
    \end{equation}
    As by assumption $s \gpopps t_j$ and $t_j \in \termsbelow[\Fun(s)]$, induction hypothesis gives 
    \begin{equation}
      \label{eq:ti:ep:ps}
      \fn(\intn(s_1\sigma), \dots, \intn(s_k\sigma)) \gppv[2\cdot\size{t} -1] \ints(t_j\sigma) \tpkt
    \end{equation}
    As $\len(\ints(t\sigma)) \leqslant \size{t}$ by Lemma~\eref{l:int:len}{S}, 
    we obtain $\fn(\intn(s_1\sigma), \dots, \intn(s_k\sigma)) \cpopv{ialst}[2\cdot\size{t}] \ints(t\sigma)$
    from Equation~\eqref{eq:root:ep:ps} and Equation~\eqref{eq:ti:ep:ps}.
    Likewise, from this Assertion~\ref{l:er3} follows by $\cpopv{ms}$ using additionally 
    $\fn(\intn(s_1\sigma), \dots, \intn(s_k\sigma)) \cppv{ia}[2\cdot\size{t}-1] \theconst$
    and
    \begin{align*}
      \len(\intn(t\sigma)) 
      & \leqslant 2\cdot\size{t} + \max \{ \norm{s_1\sigma}, \dots, \norm{s_k\sigma}, \norm{s\sigma}\}
      && \text{by Equation~\eqref{l:embed:root:ps:len}} \\
      & \leqslant 2\cdot\size{t} + \width(\fn(\intn(s_1\sigma), \dots, \intn(s_k\sigma)) \append \NM{s\sigma}) \tpkt
    \end{align*}
  \end{description}  
\end{proof}

Following the Proof of Theorem~\ref{t:popstar}, but replacing 
Lemma~\ref{l:embed:root} by Lemma~\ref{l:embed:root:ps} we obtain:
\begin{theorem}
  Let $\RS$ be predicative recursive TRS (in the sense of Definition~\ref{d:gpopps}).
  Then the innermost derivation height of any basic term 
  $f(\svec{u}{v})$ is bounded by a polynomial in the 
  maximal depth of normal arguments $\vec{u}$.
  The polynomial depends only on $\RS$ and the signature $\FS$.
\end{theorem}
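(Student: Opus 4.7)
The plan is to mirror the proof of Theorem~\ref{t:popstar} step by step, substituting $\gpopps$ for $\gpop$ wherever the ordering at the root is relevant, while observing that the remaining machinery (predicative interpretations, the order on sequences, and closure under contexts) depends only on the predicative interpretation and on $\gpopv[k]$, not on which $\gpop$-variant was used for compatibility. Concretely, I would fix a predicative recursive TRS $\RS$ in the sense of Definition~\ref{d:gpopps} and an arbitrary basic term $s = f(\svec{u}{v})$, and then follow the same chain of inequalities as in the proof of Theorem~\ref{t:popstar}.

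First, I would note that Lemma~\ref{l:embed:ctx} goes through verbatim: it is a statement purely about how predicative interpretations interact with contexts under $\gpopv[\ell]$, and makes no reference to $\gpop$ or $\gpopps$. Second, I would establish a $\gpopps$-variant of Lemma~\ref{l:embed:bound}. Since Lemma~\ref{l:embed:root:ps} has already been proved (covering the delicate new case $\cpopps{ep}$, in which both the normal arguments decrease multiset-wise and each safe argument $t_j$ lies below $\Fun(s)$ so that the auxiliary order $\gppv[k][l]$ can orient it), combining it with Lemma~\ref{l:embed:ctx} yields, for any completely defined TRS $\RS \subseteq {\gpopps}$, an embedding of each innermost step into $\gpopv[\ell]$ with $\ell \defsym \max\{\ar(\fn) \mid \fn \in \FSn\} \cup \{2\cdot\size{r} \mid l\to r \in \RSbot\}$. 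This gives $\dheight(s,\irew) \leqslant \Slow[\ell](\intq(s))$ as before.

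Third, I would handle the completeness-of-definition requirement by the same $\bot$-extension trick used in Definition~\ref{d:rss} and Lemma~\ref{l:rss:simul}. The construction of $\RSS$ and the simulation $s \irew t \,\Rightarrow\, \normalise{s} \irst[\RSbot] \normalise{t}$ depend only on $\RS$ being a constructor TRS (which predicative recursive TRSs are by convention), not on the particular order. The only point to check is that $\RSS \subseteq {\gpopps}$ once the precedence is extended by making $\bot$ minimal; this follows just as before since for every garbage term $t$ we have $t \cpopps{ia} \bot$ (the premises of clause~\ref{d:gpopps:ia} are vacuously satisfied because $\bot$ has no arguments). Then Lemma~\ref{l:rss:simul} transfers: $\RSbot$ is completely defined and compatible with $\gpopps$, and $\dheight(s,\irew) \leqslant \dheight(s,\irew[\RSbot])$.

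Finally, stringing the inequalities together,
\[
  \dheight(s,\irew) \leqslant \dheight(s,\irew[\RSbot]) \leqslant \Slow[\ell](\ints(s)) \in \bigO\bigl((\max_{i=1}^m \depth(u_i))^{d}\bigr),
\]
where the last step invokes Corollary~\ref{c:pop}, with $d$ depending only on $\ell$, and $\ell$ depending only on $\RS$ and $\FS$. The only genuinely novel work is already absorbed into Lemma~\ref{l:embed:root:ps}; the main obstacle one should double-check is that the restriction imposed on clause~\cpopps{ep}, requiring $t_j \in \termsbelow[\Fun(s)]$ on safe positions, is exactly the hypothesis needed to retain a $\gppv[k][l]$ comparison (and hence the $\cpopv{ialst}$-step) in that proof, which is precisely what prevents the $\cpopps{ep}$ case from blowing up sequence lengths beyond the bound of Lemma~\ref{l:int:len}.
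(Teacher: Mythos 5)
Your proposal is correct and follows exactly the route the paper takes: its proof of this theorem is literally ``follow the proof of Theorem~\ref{t:popstar}, replacing Lemma~\ref{l:embed:root} by Lemma~\ref{l:embed:root:ps}'', and you have correctly identified that Lemma~\ref{l:embed:ctx}, the $\bot$-completion of Definition~\ref{d:rss}, Lemma~\ref{l:embed:bound} and Corollary~\ref{c:pop} all carry over unchanged. Your observation that the side condition $t_j \in \termsbelow[\Fun(s)]$ in the modified clause is precisely what keeps the sequence lengths within the bound of Lemma~\ref{l:int:len} is exactly the point the paper's Lemma~\ref{l:embed:root:ps} is designed to absorb.
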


Using this theorem, Proposition~\ref{p:invariance} states
that \POPSTARP\ is sound for the polytime computable functions.
Lemma~\ref{l:psextends} together with Theorem~\ref{t:icc:completeness}
shows completeness of \POPSTARP~for the polytime computable functions.
\begin{corollary}\label{c:fptime:ps}
  The class of confluent (or orthogonal) predicative recursive TRSs (in the sense of Definition~\ref{d:gpopps}) define exactly $\FP$.
\end{corollary}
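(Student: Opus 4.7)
The plan is to mirror the structure of Corollary~\ref{c:FP}, splitting the biconditional into soundness and completeness, and to reuse as much of the machinery already developed for $\POPSTAR$ as possible.

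For soundness, I would start from the runtime bound just established (the unnumbered theorem preceding the corollary): if $\RS$ is predicative recursive in the sense of Definition~\ref{d:gpopps}, then $\rc[\RS]$ is polynomially bounded. An immediate appeal to Proposition~\ref{p:invariance} then yields that for each defined $f$, the functional problem associated with $\sem{f}$ lies in $\FNP$, and specialises to $\FP$ when $\RS$ is confluent (in particular when $\RS$ is orthogonal, by the standard fact that orthogonality entails confluence, recalled in Section~\ref{s:basics}). This is essentially a one-line argument that just chains the two preceding results.

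For completeness, the idea is to recycle Theorem~\ref{t:icc:completeness} without modification. Given $f \in \FP$, that theorem supplies an orthogonal predicative recursive TRS $\RS_f$ in the sense of Definition~\ref{d:gpop}, computing $f$. By Lemma~\ref{l:psextends}, we have ${\gpop} \subseteq {\gpopps}$ for the same admissible precedence and safe mapping, hence $\RS_f \subseteq {\gpopps}$, so $\RS_f$ is also predicative recursive in the stronger sense of Definition~\ref{d:gpopps}. Since $\RS_f$ is already orthogonal (hence confluent) and already computes $f$, no further construction is required.

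I anticipate no real obstacle here; the whole content of the corollary has been front-loaded into the preceding theorem and into Lemma~\ref{l:psextends}, so the proof is just a matter of assembling these pieces. The only minor subtlety worth spelling out is to make explicit that the safe mapping and admissible precedence witnessing $\RS_f \subseteq {\gpop}$ in the proof of Theorem~\ref{t:icc:completeness} are unchanged when we reinterpret $\RS_f$ as being compatible with $\gpopps$, which is immediate from Lemma~\ref{l:psextends} since that inclusion is stated for an arbitrary underlying admissible precedence. With this observation, the two inclusions $\FP \subseteq \{\sem{f} \mid \RS \text{ orthogonal predicative recursive}\} \subseteq \FP$ close, establishing the equality asserted by Corollary~\ref{c:fptime:ps}.
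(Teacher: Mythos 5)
Your proposal is correct and follows exactly the paper's own argument: soundness via the preceding runtime theorem for $\gpopps$ combined with Proposition~\ref{p:invariance}, and completeness by reusing Theorem~\ref{t:icc:completeness} through the inclusion ${\gpop} \subseteq {\gpopps}$ of Lemma~\ref{l:psextends}. No discrepancies to report.
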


%%% Local Variables: 
%%% mode: latex
%%% TeX-master: "paper"
%%% End: 

\section{Automation of Polynomial Path Orders}\label{s:impl}

In this section we present an automation of polynomial path orders,  
for brevity we restrict our efforts to the order $\gpop$.
Consider a constructor TRS $\RS$. Checking whether $\RS$ is predicative 
recursive is equivalent to guessing a precedence $\qp$ and partitioning 
of argument positions so that $\RS \subseteq {\gpop}$ holds for the induces
order $\gpop$. 
As standard for recursive path orders~\cite{ZM07,SFTGACMZ07},
this search can be automated by encode the constraints imposed by Definition~\ref{d:gpop}
into \emph{propositional logic}. 
To simplify the presentation, we extend language of propositional 
logic with truth-constants $\top$ and $\bot$ in the obvious way.
In the constraint presented below we employ the following atoms. 

\paragraph*{\textbf{Propositional Atoms}}
To encode the separation of normal from safe arguments, we
introduce $f \in \DS$ and $i = 1, \dots, \ar(f)$ the atoms $\esafe{f}{i}$
so that $\esafe{f}{i}$ represents the assertion that the \nth{$i$} argument position of $f$ is safe.
Further we set $\esafe{f}{i} \defsym \top$ for $n$-ary $f \in \CS$ and $i = 1,\dots,n$ which
reflects that argument positions of constructors are always safe. 

One verifies that predicative recursive
TRSs are even compatible with $\gpop$ as induced
by an \emph{admissible} precedence where constructors are equivalent, 
that is, polynomial path orders are \emph{blind} on constructors.
This is exploited in the propositional encoding of precedences, where
we encode a precedence $\qp$ on the set of defined symbols $\DS$ only:
For each pair of symbols $f,g \in \DS$, we introduce
propositional atoms $\esp{f}{g}$ and $\eep{f}{g}$ so that $\esp{f}{g}$
represents the assertion $f \sp g$, and likewise 
$\eep{f}{g}$ represents the assertion $f \ep g$.
Overall we define for function symbols $f$ and $g$ the propositional formulas
\begin{equation*}
  \enc{f \sp g} \defsym 
  \begin{cases}
    \top & \text{if $f\in\DS$ and $g\in \CS$,} \\
    \bot & \text{if $f\in\CS$ and $g\in \CS$,} \\
    \esp{f}{g} & \text{otherwise.}
  \end{cases}
  \quad
  \enc{f \ep g} \defsym
  \begin{cases}
    \top & \text{if $f\in\CS$ and $g\in \CS$, } \\
    \bot & \text{if $f\in\DS$ and $g \in \DS$,} \\
    \eep{f}{g} & \text{otherwise.}
  \end{cases}
\end{equation*}

To ensure that the variables $\esp{f}{g}$ and respectively $\eep{f}{g}$ encode a preorder on $\DS$ 
we encode an order preserving homomorphism into the natural order $>$. 
To this extend, to each $f \in \DS$ we associate a natural number $\rk_f$ encoded as binary string
with $\lceil \log_2(\size{\DS}) \rceil$ bits. 
It is straight forward to define Boolean formulas $\enc{\rk_f > \rk_g}$ (respectively $\enc{\rk_f = \rk_g}$) that are satisfiable iff 
the binary numbers $\rk_f$ and $\rk_g$ are decreasing (respectively equal) in the natural order. Using these we set
\begin{align*}
  \vprec(\DS)
  \defsym \bigwedge_{f,g \in \DS} (\enc{f \sp g} \imp \enc{\rk_f > \rk_g})
  \wedge \bigwedge_{f,g \in \DS} (\enc{f \ep g} \imp \enc{\rk_f = \rk_g})
\end{align*}

We say that a propositional assignment $\mu$ \emph{induces} the precedence
$\qp$ if $\mu$ satisfies $\enc{f \sp g}$ when $f \sp g$ and $\enc{f \ep g}$ when $f \ep g$. 
The next lemma verifies that $\vprec$ serves our needs.
\begin{lemma}
  For any valuation $\mu$ that satisfies $\vprec(\DS)$, 
  $\mu$ induces an admissible precedence on $\FS$.
  Vice versa, for any admissible precedence $\qp$ on $\FS$, 
  any valuation $\mu$, satisfying $\mu(\enc{f \sp g})$ iff $f \sp g$ and 
  $\mu(\enc{f \sp g})$ iff $f \ep g$, also satisfies
  the formula $\vprec(\DS)$.
\end{lemma}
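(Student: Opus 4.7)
The plan is to establish each direction by direct inspection of the encoding, paired with a standard rank construction.

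For the forward direction I would read off the induced precedence from $\mu$: set $f \sp g$ iff $\mu$ satisfies $\enc{f \sp g}$ and $f \ep g$ iff $\mu$ satisfies $\enc{f \ep g}$. Admissibility is immediate from the fixed clauses in the definition of $\enc{f \ep g}$, which exclude any equivalence crossing the $\DS$/$\CS$ boundary. The preorder properties on the defined part follow from $\vprec(\DS)$: since the implications $\esp{f}{g} \to \enc{\rk_f > \rk_g}$ and $\eep{f}{g} \to \enc{\rk_f = \rk_g}$ hold under $\mu$, irreflexivity of $\sp$ on $\DS$ is automatic (no $\rk_f > \rk_f$), and any chain of atoms is consistent with the natural order on the $\rk_f$ values. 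On the constructor part the fixed clauses guarantee that all constructors form a single $\ep$-class sitting strictly below every defined symbol.

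For the backward direction, starting from an admissible precedence $\qp$ on $\FS$, I would first restrict to $\DS$ and exploit the fact that $\qp$ on the finite set $\DS$ is a preorder whose quotient by $\ep$ is a finite poset. A linearisation of this poset yields ranks $\rk_f \in \{0,\dots,|\DS|-1\}$ with $\rk_f > \rk_g$ exactly when $f \sp g$ and $\rk_f = \rk_g$ exactly when $f \ep g$. The valuation $\mu$ is then defined by (i) encoding each $\rk_f$ as a binary string of $\lceil \log_2 |\DS| \rceil$ bits via the corresponding propositional atoms, and (ii) setting $\mu(\esp{f}{g}) = \top$ iff $f \sp g$ and $\mu(\eep{f}{g}) = \top$ iff $f \ep g$. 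Admissibility of $\qp$ takes care of the mixed-kind cases, so the fixed clauses of the encoding are automatically consistent with $\qp$. The two conjuncts of $\vprec(\DS)$ then hold by construction: whenever an $\esp$ or $\eep$ atom is true under $\mu$, the corresponding rank inequality or equality holds by choice of $\rk_f$.

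The main subtle point is the one-sided nature of the implications in $\vprec(\DS)$: the formula does not force $\esp{f}{g}$ whenever $\rk_f > \rk_g$, so the relation obtained from the atoms alone need not be transitively closed. I would address this by taking the precedence induced by $\mu$ in the forward direction to be the closure of the atoms under the natural order on ranks (equivalently, by reading $\sp$ on $\DS$ as direct rank comparison). This closure is consistent with every clause of $\vprec(\DS)$ and yields a bona fide admissible preorder, so the lemma follows.
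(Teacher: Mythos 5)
The paper states this lemma without proof, so there is no official argument to match yours against; what you give is the standard argument for SAT encodings of precedences and is essentially sound. Two points should be tightened. First, in the backward direction your linearisation cannot achieve ``$\rk_f > \rk_g$ exactly when $f \sp g$'': symbols that are incomparable in $\qp$ still receive comparable ranks under any linear extension, so at best one obtains the implications $f \sp g \Rightarrow \rk_f > \rk_g$ and $f \ep g \Rightarrow \rk_f = \rk_g$. Fortunately $\vprec(\DS)$ consists solely of implications from the atoms $\esp{f}{g}$, $\eep{f}{g}$ to the rank constraints, so this weaker (and correct) property is all that is needed; note also that the lemma's ``any valuation'' has to be read as ``some valuation extending the given atom assignment by suitable rank bits'', which is exactly what your construction supplies. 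Second, in the forward direction the transitive closure of the atom relation and the direct rank comparison are \emph{not} equivalent --- the latter may relate symbols joined by no chain of atoms --- but either choice, extended by the fixed clauses placing all constructors in one equivalence class below every defined symbol, yields an admissible preorder whose strict part contains every pair with $\mu(\esp{f}{g})$ true, and that containment is the property the subsequent soundness lemmas actually use. You are right that reading the precedence directly off the atoms would in general fail transitivity, and passing to the rank comparison is the correct repair; just state the construction once rather than first defining the precedence from the atoms and then revising it.
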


\paragraph{\textbf{Order Constraints}}
For concrete pairs of terms $s = f(\seq{s})$ and $t$, we define the order constraints
$$
\enc{s \gpop t} \defsym \enc{s \cpop{st} t} \vee \enc{s \cpop{ia} t} \vee \enc{s \cpop{ep} t}
$$
which enforces the orientation $f(\seq{s}) \gpop t$ using propositional formulations 
of the three clauses in Definition~\ref{d:gpop}. 
To complete the definition for arbitrary left-hand sides, we set $\enc{x \gpop t} \defsym \bot$ for all $x \in \VS$.
% The constraint $\enc{s \caseref{\gpop}{i} t}$
% defined below will enforce orientation with respect to the $i$-th case in Definition~\ref{d:gpop}. 
Further weak orientation is given by
$$
\enc{s \geqpop t} \defsym \enc{s \gpop t} \vee \enc{s \eqis t} \tkom
$$
where the constraint $\enc{s \eqis t}$ refers to a formulation of 
Definition~\ref{d:eqis} in propositional logic, defined as follows.
For $s = t$ we simply set $\enc{s \eqis t} \defsym \top$. 
Consider the case $s = f(\seq{s})$ and $t = g(\seq{t})$. 
Then $s \eqis t$ if $f \ep g$ 
and moreover $s_i \eqis t_{\pi(i)}$ for all $i = 1,\dots,n$ and 
some permutation $\pi$ on argument positions that takes the separation of normal and safe positions into account.
To encode $\pi(i) = j$, we use fresh atoms $\pi_{i,j}$ for $i,j=1,\dots,n$. 
The propositional formula $\vperm(\pi,n) \defsym \bigwedge_{i=1}^{n} \eone(\pi_{i,1}, \dots, \pi_{i,n})$
is used to assert that the atoms $\pi_{i,j}$ reflect a permutation on $\{1,\dots,n\}$. 
Here $\eone(\pi_{i,1}, \dots, \pi_{i,n})$ expresses that exactly one of its arguments evaluates to $\top$.
We set
\begin{align*}
  \enc{s \eqis t} \defsym 
    \enc{f \ep g} \wedge \vperm(\pi,n)
      \wedge~\bigl({\bigwedge_{j=1}^{n} \pi_{i,j} \imp \enc{s_i \eqis t_j} \wedge (\esafe{f}{i} \iff \esafe{g}{j})}\bigr) \tpkt
\end{align*}
To complete the definition, we set $\enc{s \eqis t} = \bot$ for 
the remaining cases.
\begin{lemma}
  Suppose $\mu$ induces an admissible precedence $\qp$ and satisfies $\enc{s \eqis t}$. 
  Then $s \eqis t$ with respect to the precedence $\qp$.
  Vice versa, if $s \eqis t$ then $\enc{s \eqis t}$ is satisfiable by assignments $\mu$ 
  that induce the precedence underlying $\eqis$.
\end{lemma}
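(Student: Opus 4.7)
The plan is to prove both directions by structural induction on the pair $(s,t)$, mirroring the recursive shape of both $\enc{s \eqis t}$ and $\eqis$ itself. The base case $s = t$ is immediate in both directions: the encoding trivially gives $\top$, and the relation trivially holds. The recursive case to analyse is $s = f(s_1,\dots,s_n)$ and $t = g(t_1,\dots,t_m)$; note that whenever $n \neq m$, the encoding $\enc{s \eqis t}$ collapses to $\bot$ (since $\vperm$ requires equal arities), which matches the fact that $s \eqis t$ forces equal arities.

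For the soundness direction, assume $\mu$ induces $\qp$ and satisfies $\enc{s \eqis t}$. In the recursive case, $\mu$ must satisfy the top-level conjunction, hence $\mu \models \enc{f \ep g}$. Since $\mu$ induces $\qp$ this yields $f \ep g$. The satisfaction of $\vperm(\pi,n)$ means that for each row $i$ exactly one $\pi_{i,j}$ is true, and I would additionally observe (or strengthen $\vperm$ if necessary) that each column is also uniquely determined, so the atoms define a genuine permutation $\pi$. For each $i$, the implication $\pi_{i,\pi(i)} \imp \enc{s_i \eqis t_{\pi(i)}} \wedge (\esafe{f}{i} \iff \esafe{g}{\pi(i)})$ fires, so by the induction hypothesis $s_i \eqis t_{\pi(i)}$, and the safe-mapping biconditional is respected. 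All conditions of Definition~\ref{d:eqis} are thus met.

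For the completeness direction, assume $s \eqis t$ and fix a witnessing permutation $\pi$. We build $\mu$ inductively: at the top level, set $\pi_{i,j}$ to true iff $j = \pi(i)$, which satisfies $\vperm(\pi,n)$. Because $\qp$ is admissible and $f \ep g$ witnesses the equivalence, an assignment inducing $\qp$ satisfies $\enc{f \ep g}$; and $\esafe{\cdot}{\cdot}$ is forced either by its truth-constant definition on constructors or by the preservation condition $i \in \safe(f) \iff \pi(i) \in \safe(g)$. For each subterm pair $(s_i, t_{\pi(i)})$ with $s_i \eqis t_{\pi(i)}$, the induction hypothesis yields an assignment satisfying $\enc{s_i \eqis t_{\pi(i)}}$ while inducing the same precedence $\qp$; these partial assignments act on disjoint sets of fresh permutation atoms $\pi_{i,j}$ (renamed by position if needed) and agree on the shared precedence atoms, so they combine into a single consistent $\mu$ that satisfies the full formula.

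The main obstacle I anticipate is a bookkeeping one rather than conceptual: ensuring that the permutation atoms are scoped per occurrence of the equivalence test (so that partial assignments from different subterms do not clash), and verifying that $\vperm$ as stated -- which literally only forbids two entries in a row -- still yields a permutation in the soundness step. If $\vperm$ as written only encodes a function and not a bijection, I would either strengthen it with symmetric column constraints $\bigwedge_{j} \eone(\pi_{1,j},\dots,\pi_{n,j})$, or observe that injectivity follows \emph{a posteriori} from the structural constraints combined with equal arity, making the argument go through regardless.
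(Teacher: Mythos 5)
The paper states this lemma without proof, so there is no argument of the authors' to compare against; your structural induction on $(s,t)$, handling both directions simultaneously and combining the per-subterm assignments on disjoint permutation atoms while keeping the shared precedence and $\esafe{\cdot}{\cdot}$ atoms consistent, is exactly the argument the lemma tacitly presupposes, and both directions are set up correctly.

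One point needs repair, and you half-spotted it yourself. Your fallback claim that injectivity of the encoded map ``follows \emph{a posteriori} from the structural constraints combined with equal arity'' is false. Take $s = f(a,a)$ and $t = f(a,c)$ with $a \eqis a$ but $a \not\eqis c$: the assignment with $\pi_{1,1} = \pi_{2,1} = \top$ and all other $\pi_{i,j} = \bot$ satisfies $\vperm(\pi,2)$ as literally defined (exactly one true atom per \emph{row}), and every fired implication $\pi_{i,1} \imp \enc{s_i \eqis t_1} \wedge (\esafe{f}{i} \iff \esafe{f}{1})$ holds, yet $s \not\eqis t$ because no argument of $s$ is equivalent to $c$. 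So the soundness direction genuinely breaks unless $\vperm$ is read (or strengthened) to include the column constraints $\bigwedge_{j=1}^{n} \eone(\pi_{1,j},\dots,\pi_{n,j})$ — your first proposed fix is the correct and necessary one, not merely a stylistic alternative. This is consistent with the paper's treatment of the multiset cover in the encoding of $\cpop{ep}$, where exactly-one constraints are imposed in both directions; the definition of $\vperm$ as printed is best regarded as an oversight, and your proof should state the bijectivity of the encoded $\pi$ as an explicit consequence of the (repaired) constraint rather than as something derivable from the recursive clauses.
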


We now define the encoding for the different cases underlying the definition of $\gpop$.
Assuming that $\enc{s_i \geqpop t}$ enforces $s_i \gpop t$ clause $\cpop{st}$ is expressible
as 
\begin{align*}
  \enc{f(\seq{s}) \cpop{st} t} \defsym \bigor_{i=1}^n \enc{s_i \geqpop t}
\end{align*}
in propositional logic. 
For clause $\cpop{ia}$ we use propositional atoms $\alpha_i$ ($i = 1,\dots,m$)
to mark the unique argument position of $t = g(\seq[m]{t})$ that allows
$t_i \not\in \termsbelow[\Fun(s)]$. 
The propositional formula $\ezeroone(\seq[m]{\alpha})$ expresses that zero or one
$\alpha_i$ valuates to $\top$. 
Further, we introduce the auxiliary constraint 
\begin{align*}
\enc{g(\seq[m]{t}) \in \termsbelow[F]} \defsym \bigvee_{f \in F} \enc{f \sp g} \wedge \bigwedge_{j=1}^m \enc{t_j \in \termsbelow[F]}
\end{align*}
and $\enc{x \in \termsbelow[F]} \defsym \top$ for $x \in \VS$.
Using these, clause $\cpop{ia}$ becomes expressible as 
\begin{multline*}
  \enc{f(\seq{s}) \cpop{ia} g(\seq[m]{t})} \defsym
  \enc{f \in \DS}
  \wedge \enc{f \sp g} \\
  \wedge \bigwedge_{j=1}^m (\esafe{g}{j} \imp \enc{s \gpop t_j})
  \wedge \bigwedge_{j=1}^m (\neg \esafe{g}{j} \imp \enc{s \gsq t_j}) \\
  \wedge \ezeroone(\seq[m]{\alpha}) 
  \wedge \bigwedge_{j=1}^m (\neg \alpha_j \imp \enc{t_j \in \termsbelow[\Fun(s)]}) \tpkt
\end{multline*}
Here $\enc{f \in \DS} = \top$ if $f \in \DS$ and otherwise $\enc{f \in \DS} = \bot$.
The propositional formula $\enc{s \gsq t}$ expresses the orientation with the $\gsq$ and is given by
\begin{align*}
  \enc{f(\seq{s}) \gsq t} \defsym \enc{f(\seq{s}) \csq{st} t} \vee \enc{f(\seq{s}) \csq{ia} t}
\end{align*}
and otherwise $\enc{x \gsq t} = \bot$, where
\begin{align*}
  \enc{f(\seq{s}) \csq{st} t} & \defsym \bigor_{i=1}^n ((\enc{s_i \gsq t} \vee \enc{s_i \eqis t}) \wedge (\enc{f \in \DS} \imp \neg \esafe{f}{i})) \\
  \enc{f(\seq{s}) \csq{ia} t} & \defsym 
  \begin{cases}
    \enc{f \in \DS} \wedge \enc{f \sp g} & \text{ if $t = g(\seq[m]{t})$} \\
    \quad \wedge \bigwedge_{j=1}^m \enc{f(\seq{s}) \gsq t_j} \\
    \bot & \text{ if $t \in \VS$}.
  \end{cases}
\end{align*}
This concludes the propositional formulation of clause $\cpop{ia}$.

The main challenge in formulating clause $\cpop{ep}$
is to deal with the encoding of multiset-comparisons. 
We proceed as in~\cite{SK07} and encode the underlying \emph{multiset cover}.
\begin{definition}
Let $\succ_\mul$ denote the multiset extension of a binary relation ${\succcurlyeq} = {\succ} \uplus {\eqi}$.
Then a pair of mapping $(\gamma, \varepsilon)$ 
where $\ofdom{\gamma}{\set{1,\dots,m} \to \set{1,\dots,n}}$ 
and $\ofdom{\varepsilon}{\set{1,\dots,n} \to \set{\top,\bot}}$
is a multiset cover on multisets $\mset{\seq{a}}$ and $\mset{\seq[m]{b}}$
if the following holds for all $j \in \{1,\dots,m\}$:
\begin{enumerate}\label{d:mscover}
\item\label{d:mscover:1} if $\gamma(j) = i$ then $a_i \succcurlyeq b_j$, in this case we say that $a_i$ \emph{covers} $b_j$; 
\item\label{d:mscover:2} if $\varepsilon(j) = \top$ then $s_{\tau(j)} \eqi t_j$ and $\tau$ is invective on $\{j\}$, 
  i.e., $a_{\tau(j)}$ covers only $b_j$.
\end{enumerate}
The multiset cover $(\gamma, \varepsilon)$ is said to be \emph{strict} if at least one cover is strict, 
i.e., $\varepsilon(j) = \bot$ for some $j \in \{1,\dots,m\}$.
\end{definition}
\noindent It is straight forward to verify that multiset covers characterise the multiset extension
of $\succ$ in the following sense.
\begin{lemma}
  We have $\mset{\seq{a}} \mextension{\succcurlyeq} \mset{\seq[m]{b}}$ if and only if there 
  exists a multiset cover $(\gamma, \varepsilon)$ on $\mset{\seq{a}}$ and $\mset{\seq[m]{b}}$.
  Moreover, $\mset{\seq{a}} \mextension{\succ} \mset{\seq[m]{b}}$ if and only if the cover is strict.
\end{lemma}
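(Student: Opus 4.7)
The plan is to prove both directions of the equivalence by unpacking the definition of $\mextension{\succcurlyeq}$ (which itself splits into the equality case and $\mextension{\sqsupset}$ on equivalence classes) and constructing the obvious translations to and from the data $(\gamma,\varepsilon)$. The key dictionary is that $\varepsilon(i) = \bot$ marks elements $a_i$ that are in the ``removed'' multiset $X$ from the definition of multiset extension, while $\varepsilon(i) = \top$ marks elements $a_i$ that survive unchanged (modulo $\eqi$) to the right-hand side, and $\gamma$ records, for each $b_j$, which $a_i$ it is descended from.

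For the direction $(\Leftarrow)$ I would proceed as follows. Given a multiset cover $(\gamma,\varepsilon)$, set $X \defsym \mset{a_i \mid \varepsilon(i) = \bot}$ and $Y \defsym \mset{b_j \mid \varepsilon(\gamma(j)) = \bot}$. Condition~\ref{d:mscover:2} of Definition~\ref{d:mscover} guarantees that the $a_i$ with $\varepsilon(i) = \top$ are covered by a unique $b_{j_i}$ via $\gamma$ with $a_i \eqi b_{j_i}$; hence $\gamma$ restricts to a bijection between indices of $M_1\setminus X$ and indices of $M_2\setminus Y$ witnessing $\eclass{M_1\setminus X} = \eclass{M_2\setminus Y}$. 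Condition~\ref{d:mscover:1} applied to $b_j \in Y$ yields $a_{\gamma(j)} \succcurlyeq b_j$, and since $\varepsilon(\gamma(j)) = \bot$ excludes the equivalence-only case, we in fact get $a_{\gamma(j)} \succ b_j$. If the cover is strict then $X \neq \varnothing$ and the three bullets of the definition of $\mextension{\sqsupset}$ applied to $\eclass{M_1}, \eclass{M_2}$ are verified, yielding $M_1 \mextension{\succ} M_2$; otherwise $\varepsilon \equiv \top$, $\gamma$ is a bijection and $M_1 \eqi M_2$, so $M_1 \mextension{\succcurlyeq} M_2$.

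For the direction $(\Rightarrow)$ I distinguish the two cases in the definition of $\mextension{\succcurlyeq}$. If $\eclass{M_1} = \eclass{M_2}$, then $n=m$ and there is a permutation $\pi$ with $a_i \eqi b_{\pi(i)}$; take $\gamma \defsym \pi^{-1}$ and $\varepsilon \equiv \top$, yielding a non-strict cover. Otherwise $\eclass{M_1} \mextension{\sqsupset} \eclass{M_2}$, so by definition there exist multisets $X \subseteq M_1$ and $Y$ with $M_2 \eqi (M_1 \setminus X) \uplus Y$, $X \neq \varnothing$, and for every element of $Y$ some element of $X$ strictly dominates it. Set $\varepsilon(i) \defsym \bot$ exactly when $a_i$ is (a copy taken by) $X$; for indices $j$ of $b_j$ originating from $M_1 \setminus X$ use the bijection to $M_1\setminus X$ to define $\gamma(j)$, and for $j$ with $b_j$ originating from $Y$ pick any $a_i \in X$ with $a_i \succ b_j$ and set $\gamma(j) \defsym i$. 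Conditions~\ref{d:mscover:1} and~\ref{d:mscover:2} are then immediate, and strictness of the cover matches $X \neq \varnothing$, hence matches $\mextension{\succ}$.

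The main obstacle I expect is purely bookkeeping: the definition of $\mextension{\succcurlyeq}$ operates on equivalence classes, so $X$ and $Y$ and the equation $M_2 = (M_1\setminus X)\uplus Y$ all live modulo $\eqi$, whereas $(\gamma,\varepsilon)$ is formulated on concrete indices. Keeping track of multiplicities and the fact that the decomposition in the definition is only unique up to $\eqi$ (so the ``copy'' of $a_i$ in $X$ must be pinned down to a concrete index $i$) requires care but poses no conceptual difficulty; the compatibility condition ${\eqi}\cdot{\succ}\cdot{\eqi}\subseteq{\succ}$ assumed earlier in the paper ensures that all strictness relations survive the passage between representatives and equivalence classes.
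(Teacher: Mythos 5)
The paper gives no proof of this lemma at all --- it is dismissed as ``straightforward to verify'' --- so there is nothing of the authors' to compare yours against. Your argument is the standard one: translate between the decomposition $M_2 = (M_1 \setminus X) \uplus Y$ underlying the multiset extension and the cover data by letting $\varepsilon(i) = \bot$ mark membership of $a_i$ in $X$ and letting $\gamma$ record provenance, then check the three conditions in each direction. The case split, the handling of strictness via $X \neq \varnothing$, and the remark that compatibility ${\eqi}\cdot{\succ}\cdot{\eqi}\subseteq{\succ}$ lets you pass between concrete representatives and equivalence classes are all correct and complete.

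One step should be made explicit, although the blame lies with Definition~\ref{d:mscover} rather than with you. In the direction from a cover to the multiset decrease you argue that $a_{\gamma(j)} \succcurlyeq b_j$ together with $\varepsilon(\gamma(j)) = \bot$ yields $a_{\gamma(j)} \succ b_j$ because the latter ``excludes the equivalence-only case''. The second clause of Definition~\ref{d:mscover} is, as written, only an implication from $\varepsilon = \top$; nothing in the literal text forbids $\varepsilon(i) = \bot$ while $a_i \eqi b_j$, and under that reading the lemma is false: for $a \eqi b \eqi b'$ the pair $\gamma(1)=\gamma(2)=1$, $\varepsilon(1)=\bot$ is a strict cover of $\mset{a}$ over $\mset{b,b'}$, yet even $\mset{a} \mextension{\succcurlyeq} \mset{b,b'}$ fails. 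Likewise your construction of $X$ and $Y$ needs every $i$ with $\varepsilon(i)=\top$ to lie in the image of $\gamma$, which the written definition does not guarantee. Both missing clauses are clearly intended --- the propositional encoding of $\cpop{ep}$ demands $\enc{s_i \gpop t_j}$ under $\gamma_{i,j} \wedge \neg\varepsilon_i$, and $\eone(\gamma_{i,1},\dots,\gamma_{i,m})$ under $\varepsilon_i$ --- so your proof is sound for the definition the encoding actually implements; just state these two properties of the cover before invoking them.
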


Consider the orientation $f(\seq{s}) \cpop{ep} g(\seq[m]{t})$. 
Then normal arguments are strictly, and safe arguments weakly decreasing with 
respect to the multiset-extension of $\gpop$. 
Since the partitioning of normal and safe argument is not fixed, 
in the encoding of $\cpop{ep}$ we formalise a multiset-comparison on \emph{all} arguments, 
where the underlying multiset-cover $(\gamma, \varepsilon)$ 
will be restricted so that if $s_i$ covers $t_j$, i.e., $\gamma(i) = j$, 
then both $s_i$ and $t_j$ are safe or respectively normal.
% Note that In the constraint below, we require 
% Further one readily verifies that $\mset{\seq[k]{a}} \mextension{\succ} \mset{\seq[m]{b}}$\scmt{really?}
% and $\mset{\seq[k+l][k+1]{a}} \mextension{\succcurlyeq} \mset{\seq[m+n][m+1]{b}}$ if and only if
% $\mset{\seq[k+l]{a}} \mextension{\succ} \mset{\seq[m+n]{b}}$ where the underlying 
% multiset cover $(\gamma,\varepsilon)$ satisfies
% (i) $\tau(j) = i$ so that $j \in \{m+1, \dots, m+n\}$ if and only if $i \in \{k+1, \dots, k+l\}$; and 
% (ii) $\varepsilon(i) = \bot$ for some $i \in \{1,\dots,n\}$.\scmt{}
% This simple observation allows us to give a concise encoding of 
% $\mset{\seq[k]{s}} \gpopmul \mset{\seq[m]{t}}$ and
% $\mset{\seq[k+l][k+1]{s}} \geqpopmul \mset{\seq[m+n][m+1]{t}}$
% by encoding $\mset{\seq[k+l]{s}} \gpopmul \mset{\seq[m+n]{t}}$
% where we require 
% (i') $\varepsilon(i) = \bot$ for some normal argument position $i$ of $f$, and 
% (ii') $\gamma(j) = i$ if and only if both $i$ and $j$ are safe argument positions.
To this extend, for a specific multiset cover $(\gamma, \varepsilon)$ we introduce variables $\gamma_{i,j}$ and 
$\varepsilon_i$, where $\gamma_{i,j} = \top$ represents $\gamma(j) = i$ and
$\varepsilon_i = \top$ denotes $\varepsilon(i) = \top$ ($1 \leqslant i \leqslant n$, $1 \leqslant j \leqslant m$).
We set
\begin{multline*}
  \enc{f(\seq{s}) \cpop{ep} g(\seq[m]{t})} \defsym 
  \enc{f \in \DS}
  \wedge \enc{f \sp g} \\
  \wedge 
  \bigwedge_{i=1}^{n} \bigwedge_{j=1}^{m} \Bigl( \gamma_{i,j} \to \bigl( \varepsilon_i \to \enc{s_i \eqis t_j} \bigr)
                                             \wedge \bigl( \neg \varepsilon_i \to \enc{s_i \gpop t_j} \bigr)
                                             \wedge \bigl( \esafe{f}{i} \iff \esafe{g}{j} \bigr)
                                      \Bigr) \\
    % jedes normale t_j ist von einem s_i "gecovered"
  \wedge \bigwedge_{j=1}^m \eone(\gamma_{1,j},\dots,\gamma_{n,j}) %\bigvee_{i=1}^n \gamma_{i,j}
  % wenn s_i fuer "cover by equality" dann wird % von "s_i" genau ein t_j "gecovered"
  \wedge \bigwedge_{i=1}^{n} \bigl(\varepsilon_i \to \eone(\gamma_{i,1},\dots,\gamma_{i,m})\bigr) 
  % mindestens ein strikter abstieg
  \wedge  \bigvee_{i=1}^n \bigl( \neg \esafe{f}{i} \wedge \neg \varepsilon_i \bigr)  \tpkt
\end{multline*}
Here the first line establishes the Condition~\eref{d:mscover}{1}, where
$\esafe{f}{i} \iff \esafe{g}{j}$ additionally enforces the separation of normal from safe arguments.
The final line formalises
that $\gamma$ maps $\{1,\dots,m\}$ to $\{1,\dots,n\}$, Condition~\eref{d:mscover}{2}
as well as the strictness condition on normal arguments.
This completes the encoding of $\gpop$.

\begin{lemma}
  Suppose $\mu$ induces an admissible precedence $\qp$ and satisfies $\enc{s \gpop t}$. 
  Then $s \gpop t$ with respect to the precedence $\qp$.
  Vice versa, if $s \gpop t$ then $\enc{s \gpop t}$ is satisfiable assignments $\mu$ 
  that induce the precedence underlying $\gpop$.
\end{lemma}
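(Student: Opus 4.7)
The plan is to prove both directions simultaneously by structural induction on the pair $(s,t)$ (ordered by, say, $\size{s} + \size{t}$), carrying along as auxiliary statements the analogous soundness/completeness claims for $\enc{s \gsq t}$ and for the auxiliary predicate $\enc{t \in \termsbelow[F]}$. The previous lemma about $\enc{s \eqis t}$ and the lemma on precedences (which together yield $\mu \models \enc{f \sp g}$ iff $f \sp g$, and the analogous statement for $\ep$) will be invoked as black boxes. Since $\enc{x \gpop t} = \bot$ for $x \in \VS$, we may assume throughout that $s = f(\seq{s})$.

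For soundness, if $\mu \models \enc{s \gpop t}$ then by construction $\mu$ satisfies exactly one of the three disjuncts, and I perform a case analysis mirroring Definition~\ref{d:gpop}. In case $\cpop{st}$ some $\enc{s_i \geqpop t}$ holds under $\mu$, and the induction hypothesis supplies $s_i \geqpop t$. In case $\cpop{ia}$, $\mu$ forces $f \in \DS$ and $f \sp g$, the conjuncts guarded by $\esafe{g}{j}$ yield $s \gpop t_j$ on safe positions and $s \gsq t_j$ on normal positions (via induction hypothesis on the auxiliary $\gsq$-encoding), and the $\ezeroone(\seq[m]{\alpha})$ conjunct together with the $\enc{t_j \in \termsbelow[\Fun(s)]}$ constraints (also by induction) yields the at-most-one-exception side condition. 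In case $\cpop{ep}$, the assignments to $\gamma_{i,j}$ and $\varepsilon_i$ define a pair $(\gamma,\varepsilon)$ which the encoding forces to be a valid multiset cover (by the $\eone$ and injectivity conjuncts) and even a strict one on normal positions (by the final disjunction); invoking the multiset cover lemma plus the induction hypothesis on $\enc{s_i \gpop t_j}$ and the previous $\enc{s_i \eqis t_j}$ lemma then delivers $\mset{\seq[k]{s}} \gpopmul \mset{\seq[m]{t}}$ and $\mset{\seq[k+l][k+1]{s}} \geqpopmul \mset{\seq[m+n][m+1]{t}}$, where the partition into normal and safe is respected because of the $\esafe{f}{i} \iff \esafe{g}{j}$ conjunct.

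For completeness, given $s \gpop t$ I build an assignment $\mu$ by choosing the precedence and safe-mapping atoms to agree with the witnessing $\qp$ and $\safe$, then splitting on which clause of Definition~\ref{d:gpop} applies. In case $\cpop{st}$ I invoke the induction hypothesis on the witnessing subterm. In case $\cpop{ia}$ I set the $\alpha_j$ to mark the (at most one) position where $t_j \not\in \termsbelow[\Fun(s)]$, and use the induction hypothesis on each subterm comparison (choosing $\gpop$ or $\gsq$ according to whether the position is safe). In case $\cpop{ep}$ I extract a strict multiset cover $(\gamma,\varepsilon)$ from the hypothesis via the multiset cover lemma, set the atoms $\gamma_{i,j}$ and $\varepsilon_i$ accordingly, and apply induction on the covers. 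The permutation variables inside the $\enc{s_i \eqis t_j}$ subformulas are filled in by the completeness half of the $\eqis$ lemma.

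The main technical obstacle is the $\cpop{ep}$ case: one must (i) match the multiset cover machinery of~\cite{SK07} to the \emph{partitioned} multiset comparison of Definition~\ref{d:gpop} — that is, check that the $\esafe{f}{i} \iff \esafe{g}{j}$ conjunct correctly splits a single multiset cover into the separate covers on normal and safe positions required by $\cpop{ep}$ — and (ii) verify that the strictness condition, which in Definition~\ref{d:gpop} applies only to the multiset of normal arguments, is faithfully captured by the final disjunction $\bigvee_{i} (\neg \esafe{f}{i} \wedge \neg \varepsilon_i)$. Once this is in place the remaining cases are routine applications of the induction hypothesis.
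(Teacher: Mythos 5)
The paper states this lemma without proof, treating it as a routine adaptation of the standard SAT encodings of recursive path orders (cf.~\cite{SK07,ZM07}); your plan --- mutual induction on $\size{s}+\size{t}$ for both directions, with auxiliary claims for $\enc{s \gsq t}$ and $\enc{t \in \termsbelow[F]}$, the $\eqis$ and precedence lemmas used as black boxes, and the multiset-cover lemma resolving clause $\cpop{ep}$ including the split into normal/safe covers and the strictness disjunct --- is exactly that routine argument and is correct. The only cosmetic slip is the claim that $\mu$ satisfies ``exactly one'' of the three disjuncts; a satisfied disjunction only guarantees at least one satisfied disjunct, which is all your case analysis needs.
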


As a predicative recursive TRS $\RS$ is a constructor TRS compatible with some 
polynomial path order $\gpop$, putting the constraints together we get the following theorem.

\begin{theorem}
  Let $\RS$ be a constructor TRS.\@ 
  The propositional formula 
  $$
  \vpredrec(\RS) \defsym \vprec(\DS) \wedge \bigwedge_{{l \to r} \in \RS} \enc{l \gpop r}
  $$
  is satisfiable if and only if $\RS$ is predicative recursive.
\end{theorem}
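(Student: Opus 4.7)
The theorem follows by putting together the lemmas already established for the components $\vprec(\DS)$, $\enc{s \eqis t}$, and $\enc{s \gpop t}$. The plan is to prove both directions in parallel by observing that a propositional assignment $\mu$ captures precisely the data required to witness predicative recursiveness, namely (i) a precedence on $\DS$, (ii) a safe mapping via the atoms $\esafe{f}{i}$, and (iii) for every rule the auxiliary choices (permutations $\pi_{i,j}$, cover maps $\gamma_{i,j}$, equality markers $\varepsilon_i$, alpha markers $\alpha_j$, etc.) needed to witness the orientation $l \gpop r$.

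First I would argue the soundness direction. Assume $\mu \models \vpredrec(\RS)$. Since $\mu \models \vprec(\DS)$, the preceding lemma on precedences yields that the relations defined by $\esp{f}{g}$ and $\eep{f}{g}$ together with the convention on constructors form an admissible precedence $\qp$. Define $\safe$ by $i \in \safe(f) \iff \mu(\esafe{f}{i}) = \top$, with all argument positions of constructors safe by definition of the atoms. For each rule $l \to r \in \RS$, $\mu \models \enc{l \gpop r}$, so the lemma on $\enc{\cdot \gpop \cdot}$ gives $l \gpop r$ with respect to $\qp$ and $\safe$. Hence $\RS \subseteq {\gpop}$, and since $\RS$ is a constructor TRS it is predicative recursive.

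For the completeness direction, suppose $\RS$ is predicative recursive, witnessed by an admissible precedence $\qp$ and safe mapping $\safe$. Construct a valuation $\mu$ by setting $\mu(\esp{f}{g}) = \top \iff f \sp g$, $\mu(\eep{f}{g}) = \top \iff f \ep g$, and $\mu(\esafe{f}{i}) = \top \iff i \in \safe(f)$. Choose the rank encoding $\rk_f$ according to any order-preserving map $\DS \to \{0,\dots,\size{\DS}-1\}$ respecting $\qp$, so that $\mu \models \vprec(\DS)$. For each rule $l \to r \in \RS$ extend $\mu$ on the rule-local atoms (permutations, multiset cover maps, $\varepsilon$ and $\alpha$ markers) according to the witnesses provided by the derivation of $l \gpop r$; the lemma on $\enc{\cdot \gpop \cdot}$ then supplies $\mu \models \enc{l \gpop r}$. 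Conjoining these over all rules yields $\mu \models \vpredrec(\RS)$.

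The only point that needs care (and is really the only step not immediately delegated to a previous lemma) is the observation that the rule-local atoms introduced in the encodings of different rules are disjoint, so extensions of $\mu$ across rules cannot conflict, whereas the global atoms $\esp{f}{g}$, $\eep{f}{g}$, $\esafe{f}{i}$ and $\rk_f$ are shared and are fixed consistently from the single precedence and safe mapping. Once this is noted, the theorem is an immediate conjunction of the cited equivalences.
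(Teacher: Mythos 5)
Your proposal is correct and matches the paper's (unwritten) argument: the paper simply remarks that the theorem follows by ``putting the constraints together'' from the two preceding lemmas on $\vprec(\DS)$ and $\enc{s \gpop t}$, which is exactly the decomposition you use. Your extra observation that the rule-local auxiliary atoms are disjoint across rules while the precedence and safe-mapping atoms are shared and fixed globally is a point the paper leaves implicit, and it is the right thing to check.
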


We have implemented this reduction to \SAT~in our complexity analyser \TCT.\@
As underlying \SAT-solver we employ the open source solver \minisat~\cite{ES03}.
% which supplies the satisfying assignment back to \TCT.
On the example from the introduction, \TCT\ outputs 
the following result in a fraction of a second.
\small{
\begin{verbatim}
  The input was oriented with 'POP*' as induced by the precedence

  member > if, member > eq, guess > choice, consistent > if,
   consistent > member, consistent > neg, sat > guess, sat > sat',
   sat' > if, sat' > consistent .
  
  Oriented rules in predicative notation are as follows.

    sat'(cnf, assign;) -> if(; consistent(assign;), assign, unsat())
    sat(cnf;) -> sat'(cnf, guess(cnf;);)
    consistent(cons(; l, ls);) ->
       if(; member(ls; neg(l;)), ff(), consistent(ls;))
    consistent(nil();) -> tt()
    guess(nil();) -> nil()
    guess(cons(; c, cs);) ->
       cons(; choice(c;), guess(cs;))
    choice(cons(; a, nil());) -> a
    choice(cons(; a, cons(; b, bs));) -> a
    choice(cons(; a, cons(; b, bs));) -> choice(cons(; b, bs);)
    neg(1(; x);) -> 0(; x)
    neg(0(; x);) -> 1(; x)
    eq(1(; y); 1(; x)) -> eq(y; x)
    eq(0(; y); 1(; x)) -> ff()
    eq(1(; y); 0(; x)) -> ff()
    eq(0(; y); O(; x)) -> eq(y; x)
    eq(e(); e()) -> tt()
    member(cons(; y, ys); x) -> if(; eq(y; x), tt(), member(ys; x))
    member(nil(); x) -> ff()
    if(; ff(), t, e) -> e
    if(; tt(), t, e) -> t
\end{verbatim}
}

\paragraph{\textbf{Efficiency Considerations}}
The \SAT-solver \minisat\ requires its input in CNF.\@
For a concise translation of $\vpredrec(\RS)$ to CNF 
we use the approach of Plaisted and Greenbaum~\cite{PG86} that 
gives an equisatisfiable CNF linear in size.
Our implementation also eliminates redundancies resulting from 
multiple comparisons of the same pair of term $s, t$ by 
replacing subformulas $\enc{s \gpop t}$ with unique 
propositional atoms $\delta_{s,t}$. Since $\enc{s \gpop t}$ 
occurs only in positive contexts, it suffices to 
add $\delta_{s,t} \imp \enc{s \gpop t}$, resulting in an equisatisfiable formula.
Also during construction of $\vpredrec(\RS)$ our implementation
performs immediate simplifications under Boolean laws.

%%% Local Variables: 
%%% mode: latex
%%% TeX-master: "paper"
%%% End: 

\section{Experimental Assessment}\label{s:exps}

In this section we present an empirical evaluation of polynomial path orders.
We selected two testbeds: Testbed~\textsf{TC} constitutes 
of 597 terminating constructor TRSs, obtained
by restricting the innermost runtime complexity problemset 
from the \emph{termination problem database} (\emph{TPDB} for short), version 8.0
to known to be terminating constructor TRSs.
Termination is checked against the full run of the complexity competition from December 2011
Testbed~\textsf{TCO}, containing 290 examples, results from restricting Testbed~\textsf{TC} to 
orthogonal systems.
Unarguably the TPDB is an imperfect choice as examples were collected primarily to 
assess the strength of termination provers, but it is at the moment the only 
extensive source of TRSs. Since the creation of the dedicated 
complexity categories in 2008 the situation, although slowly, changes to the better.

Experiments were conducted with $\TCT$ version 1.9.1%
\footnote{Available from \url{http://cl-informatik.uibk.ac.at/software/tct/projects/tct/archive/}.},
on a laptop with 4Gb of RAM and Intel${}^\text{\textregistered}$ Core${}^\text{\texttrademark}$ i7--2620M CPU (2.7GHz, quad-core).
We assess the strength of $\POPSTAR$ and $\POPSTARP$ in comparison to its predecessors $\MPO$ and $\LMPO$.\@
The implementation of $MPO$ and $\LMPO$ follows the line of polynomial path orders 
as explained in Section~\ref{s:impl}.
We contrast these syntactic techniques to \emph{interpretations}
as implemented in our complexity tool $\TCT$.\@
The last column show result of constructor restricted 
matrix interpretations~\cite{MMNWZ11} (dimension $1$ and $3$)
as well as polynomial interpretations~\cite{BCMT01} (degree $2$ and $3$), 
run in parallel on the quad-core processor.
We employ interpretations in their default configuration of \TCT, 
noteworthy coefficients (respectively entries in coefficients) 
range between $0$ and $7$, and we also make use of the \emph{usable argument positions} 
criterion~\cite{HM11} that weakens monotonicity constraints.
Table~\ref{tbl:exp1}%
\footnote{Full evidence available at \url{http://cl-informatik.uibk.ac.at/software/tct/experiments/popstar}.}
shows totals on systems that can respectively cannot be handled.
To the right of each entry we annotate the average execution time, in seconds.

\newcommand{\tm}[1]{\parbox[b]{9mm}{\bf{\tiny{$\backslash$#1}}}}
\renewcommand{\c}[1]{\parbox[b]{9mm}{{\hfill\small{#1}}}}
\begin{table}[h]
  \centering
  \begin{tabular}{l@{}l|cccc|c}
    % & \multicolumn{8}{c}{}
    % & \multicolumn{4}{c}{parameter substitution}\\
    \TOP & 
    & \MPO 
    & \LMPO
    & \POPSTAR
    & \POPSTARP
    & interpretations
    \BOT
    \\
    \hline\hline
    \textbf{TC} \TOP 
    & \textsf{compatible}
    & \c{76}\tm{0.33} %MPO
    & \c{57}\tm{0.20} %LMPO
    & \c{43}\tm{0.18} %POPSTAR
    & \c{56}\tm{0.19} %POPSTARPS
    & \c{139}\tm{2.77} %INTERS
    \\
    & \TOP \textsf{incompatible}
    & \c{521}\tm{0.58} %MPO
    & \c{540}\tm{0.47} %LMPO
    & \c{554}\tm{0.42} %POPSTAR
    & \c{541}\tm{0.43} %POPSTARS
    & \c{272}\tm{6.47} %INTERS
    \\
    & \TOP\BOT \textsf{timeout}
    & --- %MPO
    & --- %LMPO
    & --- %POPSTAR
    & --- %POPSTARS
    & \c{186}\tm{25.0} %INTERS
    \\
    \hline\hline
    \textbf{TCO} \TOP 
    & \textsf{compatible}
    & \c{40}\tm{0.29} %MPO
    & \c{29}\tm{0.16} %LMPO
    & \c{24}\tm{0.14} %POPSTAR
    & \c{29}\tm{0.15} %POPSTARPS
    & \c{75}\tm{2.81} %INTERS
    \\
    & \TOP \textsf{incompatible}
    & \c{250}\tm{0.33} %MPO
    & \c{261}\tm{0.27} %LMPO
    & \c{266}\tm{0.26} %POPSTAR
    & \c{261}\tm{0.27} %POPSTARS
    & \c{133}\tm{6.12} %INTERS
    \\
    & \TOP\BOT \textsf{timeout}
    & --- %MPO
    & --- %LMPO
    & --- %POPSTAR
    & --- %POPSTARS
    & \c{82}\tm{25.0} %INTERS
    \\
    \hline\hline
  \end{tabular}
\caption{Empirical Evaluation, comparing syntactic to semantic techniques.}
\label{tbl:exp1}
\end{table}

It is immediate that syntactic techniques cannot compete with the expressive 
power of interpretations.
In Testbed~\textsf{TC} there are in fact only three examples 
compatible with \POPSTARP\ where \TCT~could not find interpretations.
There are additionally four examples compatible with \LMPO\ but not so with interpretations, 
including the TRS $\RSbin$ from Example~\ref{ex:RS2}. 
All but one (noteworthy the merge-sort algorithm from Steinbach and K\"uhlers collection 
\cite[Example~2.43]{SK90}) 
of these do in fact admit exponential runtime-complexity, 
thus a~priori they are not compatible to the restricted interpretations.
We emphasise that parameter substitution significantly increases the strength of 
\POPSTAR, 13 examples are provable by \POPSTARP\ but neither by \POPSTAR\ nor \LMPO.\@
\LMPO\ could benefit from parameter substitution, 
we conjecture that the resulting order is still sound for $\FP$. 

On Testbed~\textsf{TCO}, containing only orthogonal TRSs, in total 75 systems (26\% of the testbed)
can be verified to encode polytime computable functions, 35 (12\% of the testbed)
can be verified polytime computable by only syntactic techniques. 
It should be noted that not all examples appearing in our collection encode polytime computable 
functions, the total amount of such systems is unknown. 

Table~\ref{tbl:exp1} clearly illustrates one of our main motivations for investigating
syntactic techniques. 
Our complexity analyser \TCT\ recursively decomposes complexity problems using 
various complexity preserving transformation techniques, 
discarding those problems that can be handled by basic techniques as 
contrasted in Table~\ref{tbl:exp1}.
Certificates are only obtained
if finally all subproblems can be discarded,
above all it is crucial that subproblems can be discarded 
quickly. 
\POPSTARP\ succeeds on average 14 times faster than polynomial and 
matrix interpretations run parallel, it can be safely 
preposed to interpretations, speeding up the overall procedure.
Note that the difficulty of implementing interpretations efficiently 
is also reflected in the total number of timeouts.

%%% Local Variables: 
%%% mode: latex
%%% TeX-master: "paper"
%%% End: 

\section{Conclusion and Future Work}\label{s:conclusion}
We propose a new order, the polynomial path order $\POPSTAR$. 
The order $\POPSTAR$ is a syntactical restriction of multiset path orders, 
with the distinctive feature that the (innermost) runtime complexity
of compatible TRSs lies in $O(n^d)$ for some $d$.
Based on $\POPSTAR$, we delineate a class of rewrite systems, dubbed
systems of predicative recursion, 
so that the class of functions computed by these systems
corresponds to $\FP$, the class of polytime computable functions.
We have shown that an extension of $\POPSTAR$, the order $\POPSTARP$
that also accounts for parameter substitution, 
increases the intensionality of $\POPSTAR$.
In contrast to interpretations, 
$\POPSTAR$ is partly lacking in intensionality but surpluses
in verification time.

In our complexity prover \TCT, we do not intend to replace 
semantic techniques, but rather prepose them by \POPSTARP, in 
order to improve \TCT\ both in analytic power and speed. 
With \TCT\ we are in particular interested in obtaining asymptotically 
tight bounds. 
Although we could estimate the degree of the witnessing
bounding function for \POPSTAR\ and \POPSTARP, 
a bound extracted from our proof yields unnecessarily an overestimation, 
compare Theorem~\ref{t:pop} and particular the preceding construction of the degree $d_{k,p}$.
Partly this is due to the underlying multiset extension.
Future investigations will certainly include establishing 
tighter bounds.

%%% Local Variables: 
%%% mode: latex
%%% TeX-master: "paper"
%%% End: 

\section*{Acknowledgement}
We are in particular thankful to Nao Hirokawa for fruitful discussions.

\bibliographystyle{plain}

\end{document}